\newenvironment{proof}[1][Proof]{\begin{trivlist}
\item[\hskip \labelsep {\bfseries #1}]}{\end{trivlist}}
\newenvironment{definition}[1][Definition]{\begin{trivlist}
\item[\hskip \labelsep {\bfseries #1}]}{\end{trivlist}}
\newcommand{\qed}{\nobreak \ifvmode \relax \else
      \ifdim\lastskip<1.5em \hskip-\lastskip
      \hskip1.5em plus0em minus0.5em \fi \nobreak
      \vrule height0.40em width0.6em depth0.25em\fi}
\newtheorem{lemma}{Lemma}
\newtheorem{theorem}{Theorem}
\newtheorem{fact}{Fact}
\def\x{{\mathbf x}}
\def\W{{\mathbf W}}
\def\A{{\mathbf A}}
\def\I{{\mathbf I}}
\def\v{\mathbf{v}}
\def\u{{\mathbf u}}
\def\e{{\mathbf e}}
\def\p{{\mathbf p}}
\def\z{{\mathbf z}}
\def\r{{\mathbf r}}
\def\y{{\mathbf y}}
\def\b{{\mathbf b}}
\def\F{\mathcal{F}}
\def\D{\mathcal{D}}
\def\P{\mathcal{P}}
\begin{document}
%
\title{Proximal-ADMM Decoder for Nonbinary LDPC Codes}

\author{Yongchao~Wang,~\IEEEmembership{Member,~IEEE,
        Jing~Bai}
\thanks{(\emph{Corresponding author: Jing Bai.} e-mail: ychwang@mail.xidian.edu.cn, j.bai@stu.xidian.edu.cn)}}

\markboth{}%
{}

\maketitle

\begin{abstract}
In this paper, we develop an efficient decoder via the proximal alternating direction method of multipliers (proximal-ADMM) technique
for nonbinary linear block codes in the Galois field. Its main contents are as follows: first, exploiting the decomposition technique based on the three-variables check equation, we formulate the maximum likelihood (ML) decoding problem approximately to a non-convex quadratic program;
second, an efficient algorithm based on the proximal-ADMM technique is proposed to solve the formulated QP problem.
Exploiting the QP problem's inherent structures, its variables can be updated in parallel;
third, we prove that the proposed decoding algorithm can converge to some stationary point of the formulated QP problem. Moreover, we also show, for nonbinary low-density parity-check (LDPC) codes, its computational complexity in each proximal-ADMM iteration scales linearly with block length and the size of the considered Galois field.
Simulation results demonstrate that the proposed proximal-ADMM decoder outperforms state-of-the-art nonbinary LDPC decoders in terms of either error correction performance or computational complexity.
\end{abstract}

\begin{IEEEkeywords}
Nonbinary Low-density Parity-check (LDPC) codes, Galois Field, Proximal Alternating Direction Method of Multipliers (Proximal-ADMM), Quadratic Programming (QP).
\end{IEEEkeywords}

\IEEEpeerreviewmaketitle

\section{Introduction}\label{introduction}

Nonbinary linear block codes, such as nonbinary low-density parity-check (LDPC) codes \cite{Davey-nonBP}, are favorable in high-data-rate communication systems and storage systems \cite{optical-communication,Underwater-Acoustic,Flash-Memories} since they possess many desirable merits from the viewpoints of practical applications.
For example, nonbinary LDPC codes have greater ability to eliminate short cycles (especially 4-cycles) and display better error correction performance \cite{better-performance}.
Moreover, nonbinary LDPC codes have good ability to resist burst errors by combining multiple burst bit errors into fewer nonbinary symbol errors \cite{burst-error}.
Furthermore, nonbinary LDPC codes can provide a higher data transmission rate and spectral efficiency when combined with a higher-order modulation scheme \cite{combined-highmodu}.

Typical decoding algorithms for nonbinary LDPC codes, such as sum-product \cite{Davey-nonBP}\cite{FHT-nonBP}, are based on the belief propagation (BP) strategy.
In \cite{Davey-nonBP}, authors first investigated nonbinary LDPC codes and the corresponding nonbinary BP algorithm.
Moreover, authors in \cite{FHT-nonBP} optimized nonbinary BP algorithms by reducing the computational complexity of the check-node processing.
However, nonbinary BP-like decoding algorithms are heuristic from a theoretical viewpoint since their theoretical performance, such as convergence, cannot be guaranteed. Meanwhile, analyzing the behavior of the nonbinary BP-like decoding algorithms is often difficult and the corresponding results are very limited.

In recent years, mathematical programming (MP) techniques, such as linear programming (LP) and quadratic programming (QP), are proposed to decode LDPC codes. Due to their theoretically-guaranteed decoding performance, MP decoding techniques have attracted significant attention from researchers in the error correction coding/decoding field.
The first MP decoding technique was proposed by Feldman et al. \cite{FeldmanLP}, who relaxed the maximum-likelihood (ML) decoding problem to a linear program for binary LDPC codes. In comparison with the classical BP decoder, two issues must be considered: general LP solving algorithms' complexity, such as interior point method \cite{interior-point} and simplex method \cite{revised-simplex}, are prohibitive in practical applications and the other is its inferior error correction performance in low SNR regions. For the first issue, many new MP decoders for binary LDPC codes are proposed.
In \cite{Barman-ADMM}, authors applied the alternating direction method of the multipliers (ADMM) technique to solve the original LP decoding problem \cite{FeldmanLP}. Its main concern is that each ADMM iteration involves expensive check-polytope  projection operations.
Later, authors in \cite{efficient-projection1} and \cite{efficient-projection2} independently optimized the check-polytope projection algorithm.
Based on the works in \cite{efficient-projection1} and \cite{efficient-projection2}, authors in \cite{look-up-SPL} proposed a hardware-compatible projection algorithm.
In \cite{efficient-projection3}, authors proposed an iterative check-polytope projection algorithm to reduce the complexity of LP decoding.
Moreover, authors in \cite{jiao-zhang} investigated a projection reduction technique to reduce the number of Euclidean projections onto the check polytope.
Authors in \cite{my-wcl} proposed an efficient ADMM-based LP decoding algorithm via the three-variables check-equations decomposition technique.
For the second issue, authors in \cite{adaptive-LP} proposed an adaptive LP decoder to improve LP decoding performance by adaptively adding necessary parity-check constraints. Authors in \cite{cut-plane-algorithm,separation-cut,Adaptive-cut} designed different cut-generating algorithms to eliminate unexpected pseudo-codewords and improve the error correction performance of LP decoding.
In addition, authors in \cite{penalty-decoder} and \cite{bai-admm-qp-binary} independently proposed improved ADMM-based penalized decoding algorithms to enhance the error correction performance of LP decoding in low SNR regions.

In comparison with MP decoding techniques for binary LDPC codes, the corresponding approaches for nonbinary cases are limited.
Of particular relevance is the work \cite{Flanagan} where LP decoding was first generalized to nonbinary LDPC codes.
However, nonbinary LP decoding encounters a similar computational complexity problem as the binary case when using general LP solvers. To overcome this problem, authors in \cite{ILP-nonbinary} and \cite{LCLP-nonbinary} independently extended works in  \cite{ILP-binary} and \cite{LCLP-binary} and applied the coordinate ascent method to solve an approximated dual problem of the original LP decoding problem.
Authors in \cite{Trellis-LP-nonbinary} were inspired by the binary LP decoding idea of \cite{Trellis-LP-binary} and proposed a trellis-based algorithm for check node processing to reduce the complexity of nonbinary LP decoding.
In addition, another nonbinary LP decoding scheme was introduced in \cite{Fast-LP-nonbinary} by using constant-weight binary vectors to represent elements in the Galois field of characteristic two but no efficient algorithm was developed to solve the resulting LP problem.
More recently, authors in \cite{Liu-nonbinary-journal}\cite{Liu-nonbinary-ISIT} extended the binary LP decoding idea of \cite{Barman-ADMM} to the nonbinary case and developed an LP decoding algorithm based on the ADMM technique for nonbinary LDPC codes in the Galois field of characteristic two.
However, the proposed nonbinary ADMM decoder involves time-consuming Euclidean projections onto check polytopes.

In this paper, we focus on designing a convergence-guaranteed decoder with low complexity and favourable error correction performance for nonbinary LDPC codes.
Specifically, the main contributions of this paper are summarized as follows.
\begin{itemize}
  \item Based on the check-equation decomposition method, we first decompose a general multi-variables check equation into a set of three-variables check equations.
        Then, by exploiting the equivalent binary parity-check formulation of every three-variables check equation, we transform the nonbinary ML decoding problem to an equivalent linear integer program (LIP).
        Finally, by adding the quadratic penalty term into the objective of the LIP model, relaxing binary constraints to box constraints, and introducing extra linear constraints, a new quadratic programming (QP) decoding model is established for nonbinary LDPC codes.
  \item We develop a proximal-ADMM algorithm to solve the resulting QP decoding problem. By exploiting the inherent structures of the QP problem, variables in one ADMM update step are updated by blocks, but all of the blocks can be calculated in parallel.
        Meanwhile, variables in other ADMM update steps are computed in parallel.
        Moreover, the complexity of our proposed QP decoding algorithm is cheap in comparison with the state-of-the-art nonbinary ADMM-based decoders since it eliminates time-consuming Euclidean projections onto check polytopes.
  \item We theoretically prove that the proposed proximal-ADMM decoding algorithm converges to a stationary point of the formulated QP decoding problem. In addition, a complexity analysis
        shows that the proposed algorithm scales linearly with block length and the size of the considered Galois field.
\end{itemize}

The rest of this paper is organized as follows. In Section \ref{nonbinary-ML-model}, we briefly introduce the formulation of the ML decoding problem for nonbinary linear block codes.
In Section \ref{Problem formulation}, we present the equivalent binary parity-check formulation of the three-variables check equation and establish a relaxed QP decoding problem for nonbinary linear block codes.
Moreover, an efficient proximal-ADMM algorithm for solving the formulated QP problem is presented in Section \ref{admm-qp-section}.
Section \ref{Analysis-admm-qp-decoding} shows the convergence and complexity analyses of the proposed proximal-ADMM decoding algorithm.
Simulation results demonstrate the effectiveness of our proposed QP decoder in Section \ref{simulation-result}.
Finally, Section \ref{Conclusion} concludes this paper.

\emph{Notations:} in this paper, we just focus on LDPC codes in the Galois field of characteristic two, denoted by $\mathbb{F}_{2^{q}}$.
$\mathbb{R}$ represents real numbers.
Bold lowercase and uppercase letters denote vectors and matrices respectively.
$(\cdot)^{T}$ and $\|\cdot\|_{2}$ symbolize the transpose operation and the 2-norm operator respectively.
$\textrm{diag}(\cdot)$ indicates the operator of element diagonalization.
Let $\delta_{\mathbf{A}}$ denote the spectral norm of matrix $\mathbf{A}$.
The Euclidean projection operator onto a set $\mathcal{X}$ is denoted by $\underset{\mathcal{X}}\Pi$.
$\nabla_{\x} f$ is the derivative of function $f$ with respect to variable $\x$.
$\lambda_{\min}(\A^T\!\A)$ represents the minimum eigenvalue of matrix $\A^T\A$.

\section{ML decoding problem formulation}\label{nonbinary-ML-model}
   Consider a nonbinary linear block codeword defined by an $m$-by-$n$ check matrix $\mathbf{H}$. Its feasible code set is denoted by $\mathcal{C}\in\{0,1,\dotsb,2^{q-1}\}^n$. All the elements in $\mathbf{H}$ belong to set $\{0, 1, \dotsb,\ 2^q-1\}$. Let $\mathbf{h}_j^T$, $j\in\mathcal{J}=\{1,2,\dotsb,m\}$, denote the $j$th row vector of the check matrix $\mathbf{H}$.

   Assume that codeword $\mathbf{u}$ is transmitted through an additional white Gaussian noise (AWGN) channel and its corresponding output is denoted as $\mathbf{r}$. In the receiver, the aim of ML decoding is to determine which codeword has the largest {\it a priori} probability $p(\mathbf{r}|\mathbf{u})$ throughout the feasible codeword set $\mathcal{C}$. So the ML decoding problem can be formulated as
    {\setlength\abovedisplayskip{-0.5pt}
 \setlength\belowdisplayskip{-0.5pt}
  \setlength\jot{1pt}
    \begin{equation}\label{ML}
                \mathbf{u}^* = \underset{\mathbf{u}\in\mathcal{C}}{\rm argmax} \hspace{0.2cm} p(\mathbf{r}|\mathbf{u}),
    \end{equation}
  where} codeword set $\mathcal{C}$ can be expressed as
  {\setlength\abovedisplayskip{-0.5pt}
 \setlength\belowdisplayskip{-0.5pt}
  \setlength\jot{1pt}
  \begin{equation}\label{C}
    \mathcal{C}\!=\!\bigg\{\!\mathbf{u}|\big(\mathbf{h}_j^T\mathbf{u}\big) \oplus 2^q=0, j\in\mathcal{J},\!                               \mathbf{u}\in\!\{0,1,\!\dotsb,\!2^q-1\}^n\!\bigg\}.
  \end{equation}
  According} to the mapping rule \eqref{Mq2b}, any nonzero element $u_i\in \mathbf{u}$ can be one-to-one correspondent to a $(2^q-1)$-length binary vector $\mathbf{x}_i=[x_{i,1},\dotsb,
  x_{i,\sigma},\dotsb, x_{i,2^q-1}]^T$, where
   {\setlength\abovedisplayskip{-0.5pt}
 \setlength\belowdisplayskip{-0.5pt}
  \setlength\jot{1pt}
    \begin{equation}\label{Mq2b}
       x_{i,\sigma}= \begin{cases}1,&\ \sigma=u_i, \\0,& \ \sigma\neq u_i. \end{cases}
    \end{equation}
 Besides,} we map the zero element to a $(2^q-1)$-length all-zeros vector. Then, codeword $\u$ can be mapped one-to-one correspondent to a binary vector $\mathbf{x}=[\mathbf{x}_1^T,\dotsb,\mathbf{x}_n^T]^T\in\{0,1\}^{n(2^q-1)}$. Here, we call $\mathbf{x}$ as the equivalent binary codeword to the nonbinary codeword $\mathbf{u}$. Let $\mathcal{X}$ denote the set consisting of all of the equivalent binary codewords. Then, the ML decoding problem \eqref{ML} is equivalent to
{\setlength\abovedisplayskip{-0.5pt}
 \setlength\belowdisplayskip{-0.5pt}
  \setlength\jot{1pt}
     \begin{equation}\label{ML_X}
                \mathbf{x}^* = \underset{\mathbf{x}\in\mathcal{X}}{\rm argmax} \hspace{0.2cm} p(\mathbf{r}|\mathbf{x}),
    \end{equation}
 which} can be further derived as
{\setlength\abovedisplayskip{-2.5pt}
 \setlength\belowdisplayskip{-2.5pt}
  \setlength\jot{-1pt}
 \begin{equation}\label{ML 2}
  \begin{split}
    \underset{\mathbf{x}\in\mathcal{X}}{\rm argmax} \hspace{0.2cm} p(\mathbf{r}|\mathbf{x}) & = \underset{\mathbf{x}\in\mathcal{X}}{\rm argmax} \hspace{0.1cm} \prod_{i=1}^n\prod_{\sigma=1}^{2^q-1} p(r_i|x_{i,\sigma}) \\
    & = \underset{\mathbf{x}\in\mathcal{X}}{\rm argmin} \hspace{0.1cm} \sum_{i=1}^n\sum_{\sigma=1}^{2^q-1} -p(r_i|x_{i,\sigma}).
    \end{split}
 \end{equation}
 Plugging} constant $\displaystyle\sum_{i=1}^n\sum_{\sigma=1}^{2^q-1} \log p(r_i|x_{i,\sigma}=0)$ into \eqref{ML 2}, we have the following derivations
{\setlength\abovedisplayskip{-2.5pt}
 \setlength\belowdisplayskip{-2.5pt}
  \setlength\jot{-1pt}
   \begin{equation}\label{ML_3}
        \begin{split}
            \underset{\mathbf{x}\in\mathcal{X}}{\rm argmax}\ p(\mathbf{r}|\mathbf{x})\!\! &= \underset{\mathbf{x}\in\mathcal{X}}{\rm argmin} \displaystyle\sum_{i=1}^n\sum_{\sigma=1}^{2^q-1} \log\frac{p(r_i|x_{i,\sigma}=0)}{p(r_{i}|x_{i,\sigma})} \\
                                   \!\!\! \hspace{-0.1cm} &=\!\! \underset{\mathbf{x}\in\mathcal{X}}{\rm argmin}\! \displaystyle\sum_{i=1}^n\sum_{\sigma=1}^{2^q-1} \!\! x_{i,\sigma}\!\log\frac{p(r_{i}|x_{i,\sigma}=0)}{p(r_{i}|x_{i,\sigma}=1)} \\
                                     &=  \underset{\mathbf{x}\in \mathcal{X}}{\rm argmin}\ \boldsymbol\gamma^T\mathbf{x},
        \end{split}
    \end{equation}
    where} $\boldsymbol{\gamma}\in\mathbf{R}^{n(2^q-1)}$ is called a cost vector defined by
    \begin{equation}\label{cost vector}
        \begin{split}
        {\boldsymbol{\gamma}} =\bigg[&
                                    \log\frac{p(r_1|x_{1,1}=0)}{p(r_1|x_{1,1}=1)}, \dotsb, \log\frac{p(r_1|x_{1,2^q-1}=0)}{p(r_{1}|x_{1,2^q-1}=1)},\\
             \dotsb,&
                                   \log\frac{p(r_n|x_{n,1}=0)}{p(r_n|x_{n,1}=1)},
                                    \dotsb, \log\frac{p(r_n|x_{n,2^q-1}=0)}{p(r_n|x_{n,2^q-1}=1)}
                                \bigg].
        \end{split}
     \end{equation}
    Then, the ML decoding problem can be cast as the following standard optimization model
    \begin{subequations}\label{ML_x}
        \begin{align}
                &\underset{\mathbf{x}}{\rm min} \hspace{0.31cm} \boldsymbol\gamma^T\mathbf{x},
                 \label{ML_x_a} \\
                & \hspace{0.1cm} {\rm s.t.} \hspace{0.355cm}     \mathbf{x}\in\mathcal{X}. \label{ML_x_b}
        \end{align}
    \end{subequations}
    Problem \eqref{ML_x} is a binary integer program. The difficulty of solving it lies in how to formulate and handle the constraint $\mathbf{x}\in\mathcal{X}$. In the following sections, we first decompose check equations in \eqref{C} into a series of three-variables check equations and second transform the resulting constraints to binary constraints and show analytical expressions for $\mathbf{x}\in\mathcal{X}$, which leads \eqref{ML_x} to a nonconvex quadratic continuous optimization problem. Moreover, an efficient algorithm , named as proximal-ADMM, is proposed to solve the formulated optimization problem. The analyses on convergence and computational complexity of the designed proximal-ADMM algorithm are also presented.

\section{Problem Relaxation}\label{Problem formulation}

\subsection{Three-variables check equation and its equivalent binary parity-check formulation}\label{embed-section}
First, we consider the following three-variables $2^q$-ary check equation in $\mathbb{F}_{2^{q}}$
\begin{equation}\label{three-variabls check euqation}
  \bigg(\displaystyle\sum_{k=1}^3 h_ku_k\bigg)\oplus 2^q =0,
\end{equation}
where $h_k$ is some nonzero constant, $u_k$ is a variable and the addition ``$\oplus$'' and multiplication  are in $\mathbb{F}_{2^{q}}$. Since $h_ku_k$ can be expressed exactly as
\begin{equation}\label{c}
   h_ku_k = \displaystyle\sum_{i=1}^q c_{i,k}2^{i-1},
\end{equation}
where $c_{i,k}\in\{0,1\}$, then the $2^q$-ary check equation \eqref{three-variabls check euqation} is equivalent to the following $q$ three-variables parity-check equations
\begin{equation}\label{three-variable pairty-check}
  \begin{split}
    \bigg(\sum_{k=1}^3 c_{i,k}\bigg)\oplus2=0,\ i=1,2,\dotsb,q.
  \end{split}
\end{equation}
Moreover, the above parity-check equations can be further equivalent to
 \begin{equation}\label{four-inequa}
   \begin{split}
  & c_{i,1} \leq c_{i,2}+c_{i,3},~~ c_{i,2} \leq c_{i,1}+c_{i,3}, \\
  & c_{i,3} \leq c_{i,1}+c_{i,2},~~ c_{i,1}+c_{i,2}+c_{i,3} \leq 2,\\
  & c_{i,1},c_{i,2},c_{i,3} \in \{0,1\},~ i=1,2,\dotsb,q.
   \end{split}
 \end{equation}
Letting
   \begin{equation}\label{t F matrix}
      \begin{split}
         \mathbf{t}=\begin{bmatrix}\ 0\ \\ \ 0\ \\ \ 0\ \\ \ 2\ \end{bmatrix}, \ \
         \mathbf{P} = \begin{bmatrix}
                           ~~1  &-1  & -1 \\
                          -1  &~~1 & -1 \\
                          -1  &-1  &~~1 \\
                          ~~1  &~~1 &~~1
                       \end{bmatrix},
      \end{split}
   \end{equation}
we can rewrite \eqref{four-inequa} as
\begin{equation}\label{three-variable-checks-matrix-form}
  \begin{split}
   \mathbf{P}\mathbf{c}_{i} \preceq \mathbf{t},\ \mathbf{c}_i \in \{0,1\}^3,~ i=1,2,\ldots,q.
  \end{split}
\end{equation}
where $\mathbf{c}_i=[c_{i,1},c_{i,2},c_{i,3}]$.

For any nonzero constant $h_k \in \mathbb{F}_{2^{q}}$, it can be mapped one-to-one to a $(2^q-1)\times(2^q-1)$ matrix , denoted by $\mathbf{D}(2^q,h_k)$, whose entries are determined by
\begin{equation}\label{permu-matrix}
D(2^q,h_k)_{ij}=
\begin{cases}
1,& \textrm{if}~ i=jh_k\oplus 2^q,\\
0,& \textrm{otherwise}.
\end{cases}
\end{equation}
Then, we have the following lemma.
\begin{lemma}\label{hu-dD-u-define}
   For $u_k\in\mathbb{F}_{2^q}$, let $\mathbf{x}_k$ be its corresponding binary vector codeword according to mapping rule \eqref{Mq2b}. Then, $\mathbf{D}(2^q,h_k)\mathbf{x}_k$ is the corresponding binary equivalent codeword for $h_ku_k$.
\end{lemma}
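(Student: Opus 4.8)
The plan is to verify the claim by a direct, entrywise computation, splitting into the cases $u_k=0$ and $u_k\neq 0$. The key observation is that because $h_k$ is a nonzero element of $\mathbb{F}_{2^q}$, multiplication by $h_k$ is a bijection on the $2^q-1$ nonzero field elements $\{1,\dots,2^q-1\}$. Hence in each column $j$ of $\mathbf{D}(2^q,h_k)$ there is exactly one entry equal to $1$, located in row $i=jh_k$ (the product taken in $\mathbb{F}_{2^q}$), and the same bijectivity forces exactly one $1$ per row. Thus $\mathbf{D}(2^q,h_k)$ is a genuine permutation matrix and the mapping \eqref{permu-matrix} is well defined, which is all the structure the argument needs.

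First I would dispatch the trivial case $u_k=0$. By the mapping rule \eqref{Mq2b} we have $\mathbf{x}_k=\mathbf{0}$, so $\mathbf{D}(2^q,h_k)\mathbf{x}_k=\mathbf{0}$; on the other hand $h_ku_k=0$ in $\mathbb{F}_{2^q}$, whose binary equivalent codeword is by definition the all-zeros vector. The two sides therefore agree.

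Next, for $u_k\neq 0$, the vector $\mathbf{x}_k$ is the one-hot vector carrying its single $1$ in position $u_k$. Computing the $i$th entry of the matrix-vector product, only the term $j=u_k$ survives, so $\big(\mathbf{D}(2^q,h_k)\mathbf{x}_k\big)_i=\sum_{j} D(2^q,h_k)_{ij}(x_k)_j = D(2^q,h_k)_{i,u_k}$. By \eqref{permu-matrix} this entry equals $1$ precisely when $i=u_kh_k=h_ku_k$ and $0$ otherwise. Hence $\mathbf{D}(2^q,h_k)\mathbf{x}_k$ is the one-hot vector whose unique $1$ sits in position $h_ku_k$, which, since $h_ku_k\neq 0$, is exactly the binary equivalent codeword of the field element $h_ku_k$ under \eqref{Mq2b}. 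This establishes the lemma.

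The computation itself is routine; the only point requiring care is the bookkeeping in the index identity $i=jh_k$ and the confirmation that $h_ku_k$ is again nonzero, so that the one-hot picture rather than the all-zeros convention applies. I expect no genuine obstacle beyond keeping the field-multiplication indexing consistent with the definition \eqref{permu-matrix}.
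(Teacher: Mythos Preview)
Your proof is correct and follows essentially the same approach as the paper: both compute the matrix--vector product entrywise, observe that only the column indexed by $u_k$ contributes, and conclude that the result is the one-hot vector at position $h_ku_k$. Your version is slightly more careful in separating out the case $u_k=0$ and in noting why $\mathbf{D}(2^q,h_k)$ is a permutation matrix, but the substance is the same.
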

\begin{proof}
  See proof in Appendix \ref{hu-dD-u-define-proof}.
\end{proof}
Moreover, we map any nonzero element $\alpha\in\mathbb{F}_{2^q}$ to a $q$-length binary vector according to rule \eqref{c} and formulate the following $q$-by-($2^q-1$) matrix
\begin{equation}\label{B}
   \mathbf{B}=\begin{bmatrix}
           0&       0&   \dotsb & 1&       1 \\
      \vdots&  \vdots&   \vdots & \vdots& \vdots \\
           0&       1&   \dotsb & 1&       1  \\
           1&       0&   \dotsb & 0&       1
   \end{bmatrix}.
\end{equation}
It is easy to see that $\alpha$ equals $\mathbf{B}$'s column index.
Letting $\mathbf{b}_i^T$ denote the $i$th row vector of matrix $\mathbf{B}$, then we have
\begin{equation}\label{cik}
  c_{i,k} = \mathbf{b}_i^T\mathbf{D}(2^q,h_k)\mathbf{x}_k,  i=1,2,\ldots,q,\ k=1,2,3.
\end{equation}
Plugging it into \eqref{three-variable-checks-matrix-form}, we can rewrite it as
\begin{equation}\label{sum T-matrix-form-2}
   \begin{split}
 &  \mathbf{P}\mathbf{T}_{i}\mathbf{D}\mathbf{x} \preceq \mathbf{t}, ~ \forall i=1,2,\ldots,q,
  \end{split}
\end{equation}
where
\begin{subequations}\label{t F matrix}
\begin{align}
&\mathbf{T}_{i}={\rm{diag}}(\mathbf{b}^T_{i},\mathbf{b}^T_{i},\mathbf{b}^T_{i}), \label{Ti}\\
&\mathbf{D} ={\rm diag}(\mathbf{D}(2^q,h_1),\mathbf{D}(2^q,h_2),\mathbf{D}(2^q,h_3)),\label{D}\\
&\mathbf{x}=[\mathbf{x}_{1}; \mathbf{x}_{2}; \mathbf{x}_{3}]. \label{x}
\end{align}
\end{subequations}

Moreover, let
\begin{subequations}\label{Ww}
 \begin{align}
    &\mathbf{W}=[\mathbf{P}\mathbf{T}_{1}\mathbf{D};\ldots;\mathbf{P}\mathbf{T}_{q}\mathbf{D}], \label{W}\\
    &\mathbf{w}=\mathbf{1} \otimes\mathbf{t}, \label{w}
 \end{align}
 \end{subequations}
where ``$\mathbf{1}$'' is a length-$q$ all-ones vector.
Then, based on \eqref{sum T-matrix-form-2},
the three-variables parity-check equations \eqref{three-variable pairty-check} can be equivalent to
\begin{equation}\label{Ww}
  \mathbf{Wx}\preceq\mathbf{w}, \ \mathbf{E}\mathbf{x}\preceq \mathbf{1}, \ \mathbf{x}\in\{0,1\}^{3(2^q-1)},
\end{equation}
in the sense their solutions are one-to-one correspondent, where $\mathbf{1}=[1;1;1]$ and $\mathbf{E}={\rm diag}(\mathbf{1}^T, \mathbf{1}^T, \mathbf{1}^T)$ in which ``$\mathbf{1}$'' is a length-$(2^q-1)$ all-ones vector.

\subsection{Equivalent ML decoding problem}

Consider the $j$th check equation in \eqref{C}. Without loss
of generality, we assume it involves $d_i\geq3$ variables, which are denoted by $u_{\sigma_1},\dotsb, u_{\sigma_{d_j}}$ and their corresponding coefficients are $h_{\sigma_1},\dotsb, h_{\sigma_{d_j}}$. Then, the check equation can be decomposed equivalently to $d_j-2$ three-variables check equations by introducing $d_j-3$ auxiliary variables. The detailed decomposing procedure is presented as the following three steps:

Step 1: for the first two variables $u_{\sigma_1}$ and $u_{\sigma_2}$, we introduce an auxiliary variable $v_1$ and let them satisfy the three-variables check equation \eqref{three-variabls check euqation}, i.e.,
\begin{equation}\label{decom-1}
 (h_{\sigma_1}u_{\sigma_1}+h_{\sigma_2}u_{\sigma_2}+g_1)\oplus 2^q=0.
\end{equation}

Step 2: for the variables in set $\{u_{\sigma_{3}},\dotsb,u_{\sigma_{d_j-2}}\}$, we introduce a corresponding auxiliary variables set $\{g_{2},\dotsb, g_{d_j-3}\}$ and let them satisfy the following three-variables check equations
\begin{equation}\label{decom-2}
(g_{t-1}+h_{\sigma_{t+1}}u_{\sigma_{t+1}}+g_{t})\oplus 2^q=0,~~ t=2, \dotsb, d_j-4.
\end{equation}

Step 3: let auxiliary variable $g_{d_i-3}$ and the last two variables $u_{\sigma_{d_j-1}}$ and $u_{\sigma_{d_j}}$ satisfy \eqref{decom-3}
\begin{equation}\label{decom-3}
   (g_{d_i-3}+h_{\sigma_{d_j-1}}u_{\sigma_{d_j-1}}+h_{\sigma_{d_j}}u_{\sigma_{d_j}})\oplus2^q=0.
\end{equation}

Applying the above decomposing procedure to all the check equations in \eqref{C}, one can find the total numbers of the three-variables check equations and the introduced $2^q$-ary auxiliary variables are
{\setlength\abovedisplayskip{1pt}
\setlength\belowdisplayskip{1pt}
\setlength\jot{1pt}
\begin{equation}\label{gamma_a_c}
  \begin{split}
   &\Gamma_{c} = \sum_{j=1}^{m}(d_j-2),  ~~ \Gamma_{a} = \sum_{j=1}^{m}(d_j-3),
  \end{split}
\end{equation} respectively.}

Based on the discussion of the three-variables check equation in the previous subsection, we define
\begin{equation}\label{v}
  \mathbf{v}=[\mathbf{x}; \mathbf{s}],
\end{equation}
  where auxiliary variable $\mathbf{s}=[\mathbf{s}_1; \dotsb; \mathbf{s}_i;\dotsb; \mathbf{s}_{\Gamma_a}]$ and $\mathbf{s}_i\in\{0,1\}^{2^q-1}$ correspond to auxiliary variable $g_{i}$. Define a variable-selecting matrix $\mathbf{Q}_{\tau}\in\{0,1\}^{3\times(n+\Gamma_a)}$ corresponding to the $\tau$th three-variables check equation. Its every row includes only one ``1'', whose index corresponds to the variable in the check equation. It is easy to see that $\mathbf{Q}_{\tau}\mathbf{v}$ are the variables in the $\tau th$ three-variables check equation, where $\tau=1,\dotsb,\Gamma_c$. Moreover, we define
\begin{subequations}\label{Abq}
     \begin{align}
     &\boldsymbol\lambda=[\pmb{\gamma}; \mathbf{0}],  \label{Abq-a}\\
     &\mathbf{F}\!=\![\mathbf{W}_1(\mathbf{Q}_1\!\otimes\!\mathbf{I}); \!\cdots\!;\!\mathbf{W}_{\tau} (\!\mathbf{Q}_\tau\!\otimes\!\mathbf{I})\!;\!\cdots\!;\!\mathbf{W}_{\Gamma_c}(\!\mathbf{Q}_{\Gamma_c}\!\otimes\!\mathbf{I})], \label{Abq-b}\\
     &\mathbf{f}=\mathbf{1} \otimes {\mathbf{w}}.                \label{Abq-d}
    \end{align}
\end{subequations}
where $\mathbf{I}$ is a $(2^q-1) \times (2^q-1)$ identity matrix, symbols ``$\mathbf{1}$'' and ``$\mathbf{0}$'' are length-$\Gamma_c$ all-ones vector and length-$(2^q-1)\Gamma_a$ all-zeros vector, respectively, and {$\mathbf{W}\!_{\tau}=[\mathbf{P}\mathbf{T}_{1}\mathbf{D}\!_{\tau};\ldots;\mathbf{P}\mathbf{T}_{q}\mathbf{D}\!_{\tau}]$ and $\mathbf{D}\!_{\tau}$ have the same expression as \eqref{t F matrix}.}
Another observation on the equivalent binary codeword of the nonbinary symbol is that it includes at most one 1 (see \eqref{Mq2b}). To exploit this structure, we define $(n+\Gamma_a)\times(n+\Gamma_a)(2^q-1)$ matrix $\mathbf{S}={\rm diag}(\mathbf{1}^T,\dotsb,\mathbf{1}^T)$ where ``$\mathbf{1}$'' is a length-$(2^q-1)$ all-ones vector. Then, we have
\begin{equation}\label{Sv}
 \mathbf{Sv}\preceq \mathbf{1}.
\end{equation}
where ``$\mathbf{1}$'' is a length-$n+\Gamma_{a}$ all-ones vector. Then, the ML decoding problem \eqref{ML_x} is equivalent to the following linear integer program
{\setlength\abovedisplayskip{0pt}
 \setlength\belowdisplayskip{0pt}
  \setlength\jot{1pt}
\begin{subequations}\label{ML-decoding}
\begin{align}
&\underset{\mathbf{v}}{\rm min} \hspace{0.35cm} {\boldsymbol\lambda}^{T}\mathbf{v}\\
&\hspace{0.1cm} \rm{s. t.} \hspace{0.25cm} {\mathbf{F}}\mathbf{v} \preceq \mathbf{f},~ \mathbf{Sv}\preceq \mathbf{1},\label{ML-decoding-b}\\
& \hspace{0.9cm} \mathbf{v}\in \{0,1\}^{(2^q-1)(n+\Gamma_a)}. \label{ML-decoding-c}
\end{align}
\end{subequations}}

Due to the binary constraints \eqref{ML-decoding-c}, the above linear integral program \eqref{ML-decoding} is NP-hard, i.e., its computational complexity scales exponentially with the number of variables. Therefore, it is prohibitive to solve problem \eqref{ML-decoding} directly. In the following, we exploit relaxation and tightness techniques to formulate a tractable model.

\subsection{Relaxation and Tightness}

The typical way to handle the binary constraint \eqref{ML-decoding-c} is to relax it to the box constraint $\mathbf{0}\preceq\mathbf{v}\preceq\mathbf{1}$, which can simplify the NP-hard problem \eqref{ML-decoding} to a convex one. However, the resulting optimization problem's optimal solution could be fractional especially when the decoder works in low SNR regions. To overcome this drawback, we deploy the following two techniques to tighten the relaxation.

One is to add a quadratic penalty term into the objective, i.e., $\pmb{\lambda}^{T}\mathbf{v}-\frac{\alpha}{2}\|\mathbf{v}-0.5\|_{2}^{2}$, where $\alpha >0$ is a preset constant. Intuitively, the quadratic penalty
can make the optimal integer solutions more favorable.

The other is to introduce extra linear constraints to cut possible fractional solutions from feasible space. For \eqref{three-variable pairty-check}, defining $\mathbf{c}_k=[c_{1,k}, \dotsb, c_{q,k}]^T$, we can rewrite it as $\bigg(\displaystyle\sum_{k=1}^3\mathbf{c}_k\bigg)\oplus2=\mathbf{0}$.
Letting $\mathbf{B}^T$ (see \eqref{B}) multiply the left side of the above equation, we can obtain
{\setlength\abovedisplayskip{0pt}
 \setlength\belowdisplayskip{1pt}
  \setlength\jot{1pt}
\begin{equation}\label{Bxik}
    \bigg(\mathbf{B}^T\displaystyle\sum_{k=1}^3\mathbf{c}_k\bigg)\oplus2=\mathbf{0},
\end{equation}
where} ``$\mathbf{0}$'' is a $(2^q-1)$-length all-zeros vector. Similar derivations to \eqref{four-inequa}--\eqref{Ww}, we can obtain
\begin{equation}\label{ineq redundant}
\begin{split}
  & \hat{\mathbf{W}}\mathbf{x}\preceq \hat{\mathbf{w}}, \ \  \mathbf{x}\in\{0,1\}^{3(2^q-1)},
\end{split}
\end{equation}
where
{\setlength\abovedisplayskip{0pt}
 \setlength\belowdisplayskip{0pt}
  \setlength\jot{1pt}
\begin{subequations}\label{wW}
  \begin{align}
  &\hat{\mathbf{w}}=\mathbf{1} \otimes \mathbf{t}, \label{wW_w}\\ &\hat{\mathbf{W}}=\left((\mathbf{B}^T\otimes\mathbf{I})\mathbf{W}\right)\oplus 2.\label{wW_W}
  \end{align}
\end{subequations}
where} ``$\mathbf{1}$'' is a length-$2^q-1$ all-ones vector and ``$\mathbf{I}$'' is a $4 \times 4$ identity matrix.
Since $\hat{\mathbf{w}}$ is a $4(2^q-1)$-length vector and $\hat{\mathbf{W}}$ is a $4(2^q-1)$-by-$(2^q-1)$ matrix, one can find that \eqref{ineq redundant} consists of $4(2^q-1)$ inequalities. Moreover, besides $4q$ inequalities in \eqref{Ww}, other inequalities in \eqref{ineq redundant} can be cast as redundant ones since they are combined by some inequalities in \eqref{Ww}. However, when the binary constraint $\x \in\{0,1\}^{3(2^q-1)}$ is relaxed  to the box constraint $\x \in[0,1]^{3(2^q-1)}$, these redundant inequalities can play a role in tightening the relaxation.

Based on the above tightness techniques and through a similar formulation procedure \eqref{decom-1}--\eqref{ML-decoding}, the MP decoding problem \eqref{ML-decoding} can be relaxed to the following optimization model
 {\setlength\abovedisplayskip{1pt}
 \setlength\belowdisplayskip{1pt}
  \setlength\jot{1pt}
\begin{subequations}\label{ML-decoding-all}
\begin{align}
&\underset{\mathbf{v}}{\rm min} \hspace{0.3cm} {\boldsymbol\lambda}^{T}\mathbf{v}-\frac{\alpha}{2}\|\mathbf{v}-0.5\|_{2}^{2},\\
& \hspace{0.1cm} \rm{s.t.} \hspace{0.24cm} {\mathbf{A}}\mathbf{v} \preceq {\mathbf{b}}, \label{ML-decoding-all-b}\\
&\hspace{0.9cm} \mathbf{0}\preceq\mathbf{v}\preceq\mathbf{1}, \label{ML-decoding-all-c}
\end{align}
\end{subequations}
where} symbols ``$\mathbf{1}$'' and ``$\mathbf{0}$'' in \eqref{ML-decoding-all-c} are length-$(2^q-1)(n+\Gamma_a)$ all-ones vector and length-$(2^q-1)(n+\Gamma_a)$ all-zeros vector, respectively, and
\begin{subequations}\label{Ab-construct}
\begin{align}
&\hspace{-7pt}{\mathbf{A}}\!\!=\!\![\hat{\mathbf{W}}_1(\mathbf{Q}_1\!\otimes\!\mathbf{I}); \!\cdots\!;\!\hat{\mathbf{W}}_{\tau} \!(\mathbf{Q}_\tau\!\otimes\!\mathbf{I})\!;\!\cdots\!;\!\hat{\mathbf{W}}_{\Gamma_c}(\mathbf{Q}_{\Gamma_c}\!\otimes\!\mathbf{I})\!;\mathbf{S}],\label{Ab-construct_A} \\
&\hspace{-7pt}{\mathbf{b}}=[\mathbf{1} \otimes \hat{\mathbf{w}}; \mathbf{1}]. \label{Ab-construct_b}
\end{align}
\end{subequations}
where symbols ``$\mathbf{1}$''s in \eqref{Ab-construct_b} are length-$\Gamma_{c}$ and length-$(n+\Gamma_a)$ all-ones vectors, respectively.

In sequel, we will present an efficient solving algorithm via the proximal-ADMM technique for the above optimization problem \eqref{ML-decoding-all}.
Moreover, we also prove that the proposed proximal-ADMM decoding algorithm converges to some stationary point of problem \eqref{ML-decoding-all} in theory. Furthermore, by exploiting inherent structures of problem \eqref{ML-decoding-all}, we show that the computational complexity of the proposed proximal-ADMM algorithm is linear to the length of the LDPC codes.

\section{Proximal-ADMM solving algorithm}\label{admm-qp-section}

\subsection{Proximal-ADMM algorithm framework}

By introducing two auxiliary variables, $\mathbf{e}_{1}$ and $\mathbf{e}_2$, we transform the decoding problem \eqref{ML-decoding-all} to
\begin{subequations}\label{pADMM-frame-problem}
\begin{align}
& \hspace{0.0cm} \underset{\mathbf{v},\mathbf{e}_1,\mathbf{e}_2}{\min} \hspace{0.25cm}  \pmb{\lambda}^{T}\mathbf{v}-\frac{\alpha}{2}\|\mathbf{v}-0.5\|_{2}^{2} \\
&\ \ {\rm s.\ t.} \hspace{0.28cm} \mathbf{A}\mathbf{v}+\mathbf{e}_1 = \mathbf{b}, \ \mathbf{e}_{1} \succeq \mathbf{0}, \label{pADMM-frame-b}\\
& \hspace{1.2cm} \mathbf{v}=\mathbf{e}_{2}, ~~ \mathbf{0}\preceq\mathbf{e}_{2}\preceq\mathbf{1}. \label{pADMM-frame-c}
\end{align}
\end{subequations}
where symbol ``$\mathbf{0}$'' in \eqref{pADMM-frame-b} is a length-$(4(2^q-1)\Gamma_{c}+n+\Gamma_{a})$ all-zeros vector, and symbols ``$\mathbf{1}$'' and ``$\mathbf{0}$'' in \eqref{pADMM-frame-c} are length-$(2^q-1)(n+\Gamma_a)$ all-ones vector and length-$(2^q-1)(n+\Gamma_a)$ all-zeros vector, respectively.
The augmented Lagrangian function for problem \eqref{pADMM-frame-problem} can be written as
\begin{equation}\label{aug-Lagrangian}
\begin{split}
\mathcal{L}_{\mu}\!(\mathbf{ v},\!\mathbf{e}_{1},\!\mathbf{e}_{2},\!\mathbf{y}_{1},\!\mathbf{y}_{2}) \!\!= \pmb{\lambda}^{T}\!\mathbf{v}\!\!-\!\!\frac{\alpha}{2}\|\mathbf{v}\!\!-\!\!0.5\|_{2}^{2} \!+\!\mathbf{y}_{1}^{T}\!(\mathbf{A}\mathbf{v} \!\!+\!\!\mathbf{e}_{1}\! \!-\!\! \mathbf{b})\! \\
+ \mathbf{y}_{2}^{T}(\mathbf{v}-\mathbf{e}_{2})+ \frac{\mu}{2}\|\mathbf{A}\mathbf{v}  +\mathbf{e}_{1}- \mathbf{b}\|_{2}^{2} + \frac{\mu}{2} \|\mathbf{v}-\mathbf{e}_{2}\|_{2}^{2},
\end{split}
\end{equation}
where $\mathbf{y}_{1}$ and $\mathbf{y}_{2}$ are Lagrangian multipliers corresponding to the two equality constraints in \eqref{pADMM-frame-problem} respectively and $\mu>0$ is a penalty parameter.
Based on \eqref{aug-Lagrangian}, the proximal-ADMM iteration algorithm for solving \eqref{pADMM-frame-problem} can be described as follows
\begin{subequations}\label{proximal-ADMM update_LP}
\begin{align}
& \v^{k+1} \!=\! \mathop{\arg \min}\limits_{\v} \mathcal{L}_{\mu}(\v,\e_{1}^{k},\e_{2}^{k},\y_{1}^{k},\y_{2}^{k})+\frac{\rho}{2}\|\v\!-\!\mathbf{p}^{k}\|_2^2, \label{proximal-x-update}  \\
& \e_{1}^{k+1}\!\! =\! \mathop{\arg \min}\limits_{\e_{1}\succeq\mathbf{0}} \mathcal{L}_{\mu}(\!\v^{k\!+\!1}\!,\e_{1},\!\e_{2}^{k},\y_{1}^{k},\y_{2}^{k})\!\!+\!\!\frac{\rho}{2}\|\e_1\!-\!\z_1^{k}\|_2^2, \label{proximal-v1-update} \\
& \e_{2}^{k+1} \!\!=\! \mathop{\arg \min}\limits_{\mathbf{0}\preceq\mathbf{e}_{2}\preceq\mathbf{1}} \mathcal{L}_{\mu}(\!\v^{k\!+\!1},\e_{1}^{k},\e_{2},\y_{1}^{k},\y_{2}^{k})\!\!+\!\!\frac{\rho}{2}\|\e_2\!\!-\!\!\z_2^{k}\|_2^2, \label{proximal-v2-update} \\
& \mathbf{p}^{k+1}=\mathbf{p}^{k}+\beta(\v^{k+1}-\mathbf{p}^{k}), \nonumber\\
& \z_{1}^{k+1}=\z_{1}^{k}+\beta(\e_{1}^{k+1}-\z_{1}^{k}), \label{proximal-z-update}\\
& \z_{2}^{k+1}=\z_{2}^{k}+\beta(\e_{2}^{k+1}-\z_{2}^{k}), \nonumber\\
& \y_1^{k+1} = \y_1^{k} + \mu(\A\v^{k+1}+ \e_{1}^{k+1}- \mathbf{b}), \nonumber\\
& \y_2^{k+1} = \y_2^{k} + \mu(\v^{k+1}-\e_{2}^{k+1}), \label{proximal-lamda-update}
\end{align}
\end{subequations}
where $k$ is the iteration number, $\|\v-\mathbf{p}^{k}\|_2^2$, $\|\e_1\!-\!\z_1^{k}\|_2^2$, and $\|\e_2-\z_2^{k}\|_2^2$ are so-called  proximal terms, $\rho$ is the corresponding penalty parameter, and $\beta$ belongs to $(0,1]$. In the following, we show that subproblems \eqref{proximal-x-update}-\eqref{proximal-v2-update} can be solved efficiently by exploiting their inherent structures.

\subsection{Solving subproblem \eqref{proximal-x-update}}

Obviously, choosing their values of parameters $\rho$, $\mu$, and $\alpha$ properly, problem \eqref{proximal-x-update} can be reduced to a strongly quadratic convex one with respect to $\mathbf{v}$.
In this case, its solving procedure can be described as follows:
by setting the gradient of the function $\mathcal{L}_{\mu}(\v,\u_{1}^{k},\u_{2}^{k},\y_{1}^{k},\y_{2}^{k})+\frac{\rho}{2}\|\v-\r^{k}\|_2^2$ to be zero and solving the corresponding linear equation,
we update $\mathbf{v}$ as
\begin{equation}\label{lp-x-update-solution}
\begin{split}
& \hspace{0.0cm} \mathbf{v}^{k+1} = \big(\A^{T}\A+\epsilon\I\big)^{-1}\pmb{\varphi}^{k}\\
\end{split}
\end{equation}
where ``$\mathbf{I}$'' is a $(2^q-1)(n+\Gamma_{a}) \times (2^q-1)(n+\Gamma_{a})$ identity matrix, and
\begin{equation}\label{epsilon_varphi}
\begin{split}
&\epsilon=1+\frac{\rho}{\mu}-\frac{\alpha}{\mu}, \\
&\pmb{\varphi}^{k}\!\!=\!\! \A^{T}\!\Big(\b\!-\!\e_{1}^{k}\!-\!\frac{\y_1^{k}}{\mu}\!\Big)\!
+\!\big(\!\e_{2}^{k}\!-\!\frac{\y_2^{k}}{\mu}\!\big)\!+\!\frac{\rho}{\mu}\mathbf{p}^{k}\!-\!\frac{\pmb{\lambda}\!+\!0.5\alpha}{\mu}.
\end{split}
\end{equation}

Note that $\big(\A^{T}\A+\epsilon\I\big)^{-1}$ is fixed for a given code. Thus, it only needs to be calculated only once throughout the ADMM iterations. Therefore, the main computational cost lies in $\big(\A^{T}\A+\epsilon\I\big)^{-1}\pmb{\varphi}^{k}$. It is easy to see that calculating it directly requires $\mathcal{O}\big((2^q-1)^2(n+\Gamma_{a})^2\big)$ complexity,
which is prohibitive for large-scale problems. In the following, we show a much more efficient way to perform the computational procedure. Firstly, we present the following lemma.
\begin{lemma}\label{w-inverse-lemma}
Matrix $\big(\A^{T}\A+\epsilon\I\big)^{-1}$ is a block diagonal. Specifically, it can be denoted by
\begin{equation}\label{w-I-inverse-defi}
\begin{split}
& \big(\A^{T}\A+\epsilon\I\big)^{-1} \\
= & {\rm diag}\Big(\!\big({\A}_1^{T}\A_1\!\!+\!\epsilon\I\big)\!^{-1}\!,\ldots,\!\big(\A_{n+\Gamma_{a}}^{T}\A_{n+\Gamma_{a}}\!\!+\!\epsilon\I\big)\!^{-1}\!\Big),
\end{split}
\end{equation}
where sub-matrix $\mathbf{A}_{i}$, $i=1,\ldots, n+\Gamma_{a}$, is formed by  column vectors indexed from $\big((2^q-1)(i-1)+1\big)$ to $(2^q-1)i$ in matrix $\mathbf{A}$ and ``$\mathbf{I}$'' is a $(2^q-1) \times (2^q-1)$ identity matrix. Moreover,
\begin{equation}\label{dv-phi-I-inverse1}
\big(\A_i^{T}\A_i+\epsilon\I\big)^{-1} =\left[
  \begin{array}{cccc}
    \theta_{i} & \omega_{i} & \cdots & \omega_{i} \\
    \omega_{i} & \theta_{i} & \cdots & \omega_{i} \\
    \vdots & \vdots & \vdots & \vdots \\
    \omega_{i} & \omega_{i} & \cdots & \theta_{i} \\
  \end{array}
\right],
\end{equation}
where
\begin{equation}\label{omega_theta}
\begin{split}
  &\theta_{i}\!=\!\omega_{i}+\frac{1}{2^{m}d_{i}+\epsilon}, \\
  &\displaystyle\omega_{i}\!=\!\frac{\!-\!2^{m}d_{i}}{(2^{m}d_{i}\!+\!\epsilon)[(2^{m+1}d_{i}\!+\!\epsilon\!+\!1)\!+\!(2^{m}d_{i}\!+\!1)(2^{m}\!-\!2)]}.
  \end{split}
\end{equation}
Here, $d_{i}$ denotes the degree of the $i$th information symbol, i.e., the number of check equations which it participates in.
\end{lemma}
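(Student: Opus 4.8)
The plan is to reduce the inversion of the large matrix $\A^{T}\A+\epsilon\I$ to two manageable subtasks: first, proving that $\A^{T}\A$ is block diagonal with respect to the partition of columns into the $n+\Gamma_{a}$ variable blocks of length $2^{q}-1$; and second, inverting each resulting $(2^{q}-1)\times(2^{q}-1)$ diagonal block in closed form. Starting from the definition of $\A$ in \eqref{Ab-construct_A}, I would expand
\[
\A^{T}\A=\sum_{\tau=1}^{\Gamma_{c}}(\mathbf{Q}_{\tau}\otimes\I)^{T}\,\hat{\mathbf{W}}_{\tau}^{T}\hat{\mathbf{W}}_{\tau}\,(\mathbf{Q}_{\tau}\otimes\I)+\mathbf{S}^{T}\mathbf{S},
\]
so that the global structure is governed by the single-check Gram matrices $\hat{\mathbf{W}}_{\tau}^{T}\hat{\mathbf{W}}_{\tau}$ together with $\mathbf{S}^{T}\mathbf{S}$.

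The first key step is to show that each $\hat{\mathbf{W}}_{\tau}^{T}\hat{\mathbf{W}}_{\tau}$ is itself block diagonal across the three variable blocks of the $\tau$th three-variables check. This rests on the orthogonality of the columns of $\mathbf{P}$, namely $\mathbf{P}^{T}\mathbf{P}=4\I_{3}$: writing the rows of $\hat{\mathbf{W}}_{\tau}$ in the factored form inherited from the blocks $\mathbf{P}\mathbf{T}_{i}\mathbf{D}_{\tau}$ and the redundant-constraint combination $\mathbf{B}^{T}\otimes\I$, the cross-variable block indexed by $(k,k')$ carries the scalar factor $\sum_{l}P_{lk}P_{lk'}=(\mathbf{P}^{T}\mathbf{P})_{kk'}$, which vanishes for $k\neq k'$. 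Since $\mathbf{Q}_{\tau}\otimes\I$ merely scatters these three diagonal blocks onto three \emph{distinct} variable positions (each row of $\mathbf{Q}_{\tau}$ has a single nonzero), no cross-variable coupling is created; and $\mathbf{S}^{T}\mathbf{S}={\rm diag}(\mathbf{1}\mathbf{1}^{T},\dotsb,\mathbf{1}\mathbf{1}^{T})$ is block diagonal by construction. Summing over $\tau$, the matrix $\A^{T}\A+\epsilon\I$ is block diagonal, so its inverse is the block diagonal matrix of the inverses of $\A_{i}^{T}\A_{i}+\epsilon\I$, which is the first assertion of the lemma.

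The second step is to evaluate each diagonal block. Collecting the contributions of the $d_{i}$ checks in which symbol $i$ participates, and using that every $\mathbf{D}(2^{q},h_{k})$ is a permutation matrix (so $\mathbf{D}^{T}\mathbf{D}=\I$) together with the combinatorial regularity of $\mathbf{B}$ and of the redundant rows of $\hat{\mathbf{W}}_{\tau}$, I expect each $\A_{i}^{T}\A_{i}$ to collapse to an \emph{equicorrelation} matrix $p\,\I+r\,\mathbf{1}\mathbf{1}^{T}$ with constant diagonal and constant off-diagonal entries, the $\mathbf{S}^{T}\mathbf{S}$ term supplying the all-ones part and the check terms supplying a diagonal weight proportional to $d_{i}$. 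Adding $\epsilon\I$ preserves this form, and since $\mathbf{1}\mathbf{1}^{T}$ is rank one, the Sherman--Morrison identity
\[
\big(p\,\I+r\,\mathbf{1}\mathbf{1}^{T}\big)^{-1}=\tfrac{1}{p}\I-\tfrac{r}{p\,(p+r(2^{q}-1))}\,\mathbf{1}\mathbf{1}^{T}
\]
immediately yields the claimed pattern, with $\theta_{i}$ the common diagonal value and $\omega_{i}$ the common off-diagonal value; reading off $p$ and $r$ from the explicit block then gives exactly the expressions in \eqref{omega_theta}.

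The main obstacle is the exact evaluation of the single-check Gram block $\hat{\mathbf{W}}_{\tau}^{T}\hat{\mathbf{W}}_{\tau}$ once the modulo-type reduction defining the redundant matrix $\hat{\mathbf{W}}$ in \eqref{wW_W} is taken into account: one must count, column by column, the binary entries produced by combining $\mathbf{B}^{T}$ with the permuted blocks $\mathbf{D}(2^{q},h_{k})$, and verify that the result is precisely a scalar-times-identity plus a constant all-ones contribution rather than a generic symmetric matrix (as the plain, non-redundant Gram $\mathbf{D}^{T}(\mathbf{B}^{T}\mathbf{B})\mathbf{D}$ would be). This combinatorial bookkeeping is where the constant appearing in \eqref{omega_theta} is pinned down; by contrast, the orthogonality argument for block diagonality and the Sherman--Morrison inversion are routine once the block is in equicorrelation form.
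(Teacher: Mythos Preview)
Your plan is essentially the paper's own argument: expand $\A^{T}\A$ as a sum over checks plus $\mathbf{S}^{T}\mathbf{S}$, use $\mathbf{P}^{T}\mathbf{P}=4\I_{3}$ to kill the cross-variable blocks of each $\hat{\mathbf{W}}_{\tau}^{T}\hat{\mathbf{W}}_{\tau}$, then recognize each diagonal block as an equicorrelation matrix and invert it (the paper solves the resulting $2\times2$ linear system, which is equivalent to your Sherman--Morrison step).

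Two small sharpenings that resolve exactly the ``main obstacle'' you flag. First, the paper does not fight the modulo reduction row by row; it shows directly that the per-variable Gram block equals $4\mathbf{D}(2^{q},h)^{T}\boldsymbol{\Phi}\,\mathbf{D}(2^{q},h)$ with
\[
\boldsymbol{\Phi}=\sum_{\ell=1}^{2^{q}-1}\Big(\big(\textstyle\sum_{i\in\mathcal{K}_{\ell}}\mathbf{b}_{i}\big)\big(\sum_{i\in\mathcal{K}_{\ell}}\mathbf{b}_{i}\big)^{T}\oplus 2\Big),
\]
and then evaluates $\boldsymbol{\Phi}$ from the column structure of $\mathbf{B}$ to get the constant-diagonal/constant-off-diagonal pattern $2^{q-1}$ on the diagonal and $2^{q-2}$ off it; this is the concrete ``combinatorial bookkeeping'' you anticipate. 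Second, the reason the check coefficients $h_{\tau_{k}}$ disappear is not merely $\mathbf{D}^{T}\mathbf{D}=\I$ but the stronger permutation invariance $\mathbf{D}^{T}\boldsymbol{\Phi}\mathbf{D}=\boldsymbol{\Phi}$, which holds because any permutation matrix preserves an equicorrelation matrix. With that in hand, each block is exactly $4d_{i}\boldsymbol{\Phi}+\mathbf{1}\mathbf{1}^{T}+\epsilon\I$, i.e.\ $p\I+r\mathbf{1}\mathbf{1}^{T}$ with $p=2^{q}d_{i}+\epsilon$ and $r=2^{q}d_{i}+1$, and your Sherman--Morrison formula reproduces \eqref{omega_theta}.
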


\begin{proof}
See Appendix \ref{w-inverse-lemma-proof}.
\end{proof}

\eqref{w-I-inverse-defi} indicates that $\mathbf{v}^{k+1}$ in \eqref{lp-x-update-solution} can be obtained through the following implementations in parallel
\begin{equation}\label{proximal-xi-calcu}
\begin{split}
& \hspace{0.0cm} \v_{i}^{k+1} = \big(\A_i^{T}\A_i+\epsilon\I\big)^{-1}\pmb{\varphi}_i^{k},\  i=1,2,\dotsb,n+\Gamma_a,
\end{split}
\end{equation}
where $\pmb{\varphi}_i^{k}$ is the corresponding $(2^q-1)$-length sub-vector in $\pmb{\varphi}^{k}$. Moreover, based on \eqref{dv-phi-I-inverse1}, \eqref{proximal-xi-calcu} can be further derived as follows
{\setlength\abovedisplayskip{2pt}
 \setlength\belowdisplayskip{2pt}
  \setlength\jot{2pt}
\begin{equation}\label{proximal-xi-calcu-simp}
\begin{split}
 \mathbf{v}_{i}^{k+1}
& =  \left[\!\!\!\!
  \begin{array}{cccc}
    \theta_{i}\!\!-\!\!\omega_i \!\!\!& 0 &\! \cdots \!&\!\!\! 0 \\
    0\!\!\!&\!\!\! \theta_{i}\!\!-\!\!\omega_i \!\!\!&\! \cdots \!&\! \!\!0 \\
    \vdots \!&\! \vdots \!&\! \vdots \!&\!\!\! \vdots \\
    0\!\!\!& 0 &\! \cdots \!&\!\!\! \theta_{i}\!\!-\!\!\omega_i \\
  \end{array}
\!\!\!\!\right]\!\!\pmb{\varphi}_i^{k} \!+\!\!
 \left[\!\!\!
  \begin{array}{cccc}
    \omega_{i}\!\! &\!\! \omega_{i} \!\!&\!\! \cdots \!\!&\!\! \omega_{i} \\
    \omega_{i}\!\! &\!\! \omega_{i} \!\!&\!\! \cdots \!\!&\!\! \omega_{i} \\
    \vdots\!\! &\!\! \vdots \!\!&\!\! \vdots \!\!&\!\! \vdots \\
    \omega_{i}\!\! &\!\! \omega_{i} \!\!&\!\! \cdots \!\!&\!\! \omega_{i} \\
  \end{array}
\!\!\!\right]\!\!\pmb{\varphi}_i^{k} \\
&=(\theta_{i}-\omega_{i})\pmb{\varphi}_i^{k}+\omega_{i}\sum_{\iota=1}^{q-1}\varphi_{i,\iota}^{k},
\end{split}
\end{equation}}
where $\varphi_{i,\iota}^{k}$ denotes the $\iota th$ entry in $\pmb{\varphi}_i^{k}$.

\subsection{Solving subproblems \eqref{proximal-v1-update} and \eqref{proximal-v2-update}}

Solving \eqref{proximal-v1-update} is equivalent to solving the following $M$ subproblems in parallel
\begin{equation}\label{proximal-v1-min}
\begin{split}
& \hspace{0cm} \mathop {\min }\limits_{e_{1,j}} \hspace{0.1cm} y_{1, j}^{k} e_{1, j}
\!+\!\frac{\mu}{2}\!\left(\mathbf{a}_{j}^{T} \mathbf{v}^{k+1}\!\!+\!e_{1, j}\!-\!b_{j}\right)^{2}\!\!\!+\!\frac{\rho}{2}\!\left(e_{1, j}\!\!-\!\!z_{1, j}^{k}\right)\!^{2}  \\
& \hspace{0.15cm} \textrm{s.t.} \hspace{0.3cm} e_{1, j} \geq 0, \ j=1,\dotsb,M.
\end{split}
\end{equation}
where $M=4(2^q-1)\Gamma_{c}+n+\Gamma_{a}$, $\mathbf{a}_{j}^{T}$ denotes the $jth$ row vector of matrix $\A$.
Obviously, the optimal solution of the above problem \eqref{proximal-v1-min}
can be obtained by setting the gradient of its objective function to zero and then projecting the solution
of the corresponding equation to region $[0,+\infty]$. Then, we can obtain
\begin{equation}\label{v1-solution-component}
e_{1,j}^{k+1} = \underset{[0,+\infty]}\Pi \frac{\mu}{\rho+\mu} \big(b_{j}-\mathbf{a}_{j}^{T}\mathbf{v}^{k+1}-\frac{y_{1,j}^{k}}{\mu}+\frac{\rho}{\mu}z_{1,j}^{k}\big).
\end{equation}

Similar to \eqref{proximal-v1-update}, problem \eqref{proximal-v2-update} can be separated into the following $N$ independent subproblems
\begin{equation}\label{v2-min}
\begin{split}
& \hspace{0cm} \mathop {\min }\limits_{e_{2,\ell}} \hspace{0.1cm} -\mathrm{y}_{2, \ell}^{k} e_{2, \ell}+\frac{\mu}{2}\left(v_{\ell}^{k}-e_{2, \ell}\right)^{2}
+\frac{\rho}{2}\left(e_{2, \ell}-z_{2, \ell}^{k}\right)^{2},  \\
& \hspace{0.15cm} \textrm{s.t.} \hspace{0.3cm} 0 \leq e_{2, \ell} \leq 1, \ \ell=1,\dotsb,N.
\end{split}
\end{equation}
where $N=(2^q-1)(n+\Gamma_{a})$. Their optimal solutions can be expressed as
\begin{equation}\label{v2-solution-component}
e_{2,\ell}^{k+1} = \underset{[0,1]}\Pi \frac{\mu}{\rho+\mu}\big(v_{\ell}^{k+1}+\frac{y_{2,\ell}^{k}}{\mu}+\frac{\rho}{\mu}z_{2,\ell}^{k}\big).
\end{equation}

In \emph{Algorithm \ref{proxiaml-ADMM-QP}}, we summarize the proposed proximal-ADMM decoding algorithm for nonbinary LDPC codes in $\mathbb{F}_{2^{m}}$. In the next section, we will discuss its convergence and computational complexity.

\begin{algorithm}[t]
\caption{Proximal-ADMM decoding algorithm}
\label{proxiaml-ADMM-QP}
\begin{algorithmic}[1]
\STATE Initializations: decompose each check equation into three-variables check equations based on the parity-check matrix $\mathbf{H}$.
       Then, construct matrix $\mathbf{A}$ and vector $\mathbf{b}$ based on \eqref{Ab-construct}. Let $\mu>0$, $\rho>\alpha$ and $0 <\beta \leq1$. For all $i \in \{1,\dotsb, n+\Gamma_{a}\}$, compute $\theta_{i}$ and $\omega_{i}$ via \eqref{omega_theta}. Initialize variables $\{\mathbf{v}, \mathbf{e}_1, \mathbf{e}_2, \mathbf{p}, \mathbf{z}_1, \mathbf{z}_2, \mathbf{y}_1, \mathbf{y}_2\}$ as all-zeros vectors. \\
\STATE  \textbf{Repeat}  \\
\STATE \hspace{0.3cm} Compute $\pmb{\varphi}^{k}$ via \eqref{epsilon_varphi}. Then, update $\v_i^{k+1}$, $i\in \{1,\ldots,n+\Gamma_{a}\}$, in parallel. \\
\STATE \hspace{0.3cm} Update $e_{1,j}^{k+1}$, $j\in\{1,\ldots,M\}$, via \eqref{v1-solution-component} in parallel.
\STATE \hspace{0.3cm} Update $e_{2,\ell}^{k+1}$, $\ell\in\{1,\ldots,N\}$, via \eqref{v2-solution-component} in parallel.
\STATE \hspace{0.3cm} Update $\mathbf{p}^{k+1}$, $\mathbf{z}_{1}^{k+1}$, $\mathbf{z}_{2}^{k+1}$, $\y_1^{k+1}$ and $\y_2^{k+1}$ in parallel via \eqref{proximal-z-update} and \eqref{proximal-lamda-update} respectively.
\STATE \hspace{0.2cm} $k \leftarrow k+1$. \\
\STATE \textbf{Until} some preset conditions are satisfied. \\
\end{algorithmic}
\end{algorithm}

\section{Performance Analysis}\label{Analysis-admm-qp-decoding}

\subsection{Convergence}

Before presenting the main convergence result, we have the following lemma to show that the gradient of the considered augmented Lagrangian $\mathcal{L}_{\mu}(\cdot)$ is Lipschitz continuous.

\begin{lemma}\label{x-Lipschitz-constant}
Suppose $\alpha>0$ and $\mu>0$ and let $X:=\{\v|\A\v \preceq \b, \mathbf{0} \preceq \v \preceq \mathbf{1}\}$. Then, the gradient of the augmented Lagrangian $\mathcal{L}_{\mu}$  with respect to variable $\v$ is Lipschitz continuous,
i.e., for any $\v,\v^{\prime}\in X$,
\begin{equation}\label{x-Lipschitz-defi}
\begin{split}
& \|\nabla_{\v} \mathcal{L}_{\mu}(\v,\! \e_{1}, \! \e_{2},\! \y_{1}, \! \y_{2})\!-\!\nabla_{\v} \mathcal{L}_{\mu}(\v^{\prime},\!\e_{1},\!\e_{2},\!\y_{1},\!\y_{2})\|_{2} \\
\leq &L\|\v-\v^{\prime}\|_{2},
\end{split}
\end{equation}
where $L\geq \alpha+\mu+\mu\delta_{\A}^2$ and ``$\delta_{\A}$'' denotes the spectral norm of matrix $\A$.
\end{lemma}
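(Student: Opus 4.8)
The plan is to compute the gradient of $\mathcal{L}_{\mu}$ in \eqref{aug-Lagrangian} with respect to $\v$ in closed form, evaluate its difference at two points $\v,\v'\in X$ while holding $\e_1,\e_2,\y_1,\y_2$ fixed, and bound the resulting linear map by the triangle inequality. Differentiating \eqref{aug-Lagrangian} yields
\[
\nabla_{\v}\mathcal{L}_{\mu} = \blambda - \alpha(\v-0.5) + \A^T\y_1 + \y_2 + \mu\A^T(\A\v+\e_1-\b) + \mu(\v-\e_2),
\]
and only the three terms $-\alpha\v$, $\mu\A^T\A\v$, and $\mu\v$ actually depend on $\v$; every other contribution (the cost $\blambda$, the multiplier terms $\A^T\y_1$ and $\y_2$, and the constant parts of the two penalty gradients) is independent of $\v$.

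Subtracting the two gradients therefore cancels all $\v$-independent terms, leaving
\[
\nabla_{\v}\mathcal{L}_{\mu}(\v,\cdots) - \nabla_{\v}\mathcal{L}_{\mu}(\v',\cdots) = -\alpha(\v-\v') + \mu\A^T\A(\v-\v') + \mu(\v-\v').
\]
I keep these three pieces separate rather than merging them into a single matrix $\mu\A^T\A+(\mu-\alpha)\I$, since estimating them individually reproduces the stated constant exactly. Applying the triangle inequality together with the sub-multiplicativity of the spectral norm gives
\[
\|\nabla_{\v}\mathcal{L}_{\mu}(\v,\cdots) - \nabla_{\v}\mathcal{L}_{\mu}(\v',\cdots)\|_2 \leq \big(\alpha + \mu\|\A^T\A\|_2 + \mu\big)\|\v-\v'\|_2,
\]
and since $\|\A^T\A\|_2 = \delta_{\A}^2$, any $L \geq \alpha+\mu+\mu\delta_{\A}^2$ satisfies \eqref{x-Lipschitz-defi}.

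Because the argument is a direct differentiation followed by an elementary norm estimate, I do not anticipate a substantive obstacle. The only points needing care are the bookkeeping of which gradient terms survive the subtraction — in particular checking that the mixed penalty terms $\mu\A^T(\e_1-\b)$ and $\mu\e_2$ are genuinely constant in $\v$ and hence drop out — and the decision to bound the three surviving terms separately so that the resulting Lipschitz constant matches the claimed $\alpha+\mu+\mu\delta_{\A}^2$ rather than a tighter but less convenient value.
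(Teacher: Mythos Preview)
Your proposal is correct and essentially identical to the paper's own proof: both compute the gradient of $\mathcal{L}_{\mu}$ explicitly, subtract at $\v$ and $\v'$ so that only the terms $-\alpha(\v-\v')$, $\mu(\v-\v')$, and $\mu\A^T\A(\v-\v')$ survive, and then bound via the triangle inequality and $\|\A^T\A\|_2=\delta_{\A}^2$ to obtain $L=\alpha+\mu+\mu\delta_{\A}^2$.
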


\begin{proof}
See Appendix \ref{Lipschitz-continuous}.
\end{proof}

Based on Lemma \ref{x-Lipschitz-constant}, we have the following theorem to characterize the convergence property of the proposed proximal-ADMM decoding algorithm.
\begin{theorem}\label{converge-proof-theorem}
Assume that $\rho>\alpha>0$ and
$\frac{\alpha}{\lambda_{\min}(\A^T\A)}\leq\mu\leq \frac{\rho(\rho-\alpha)^{2}}{4\delta_{\A\I}^2(\rho+L+2)^2-(\rho-\alpha)^2}$,
where $\delta_{\A\I}$ is the spectral norm of matrix
$\left[\!\!\!\!\begin{array}{ccc}{\A} \!\!&\!\! {\I_{M}} \!\!\!&\!\!\! {\mathbf{0}} \\ {~\I_{N}} \!\!&\!\! {\mathbf{0}} \!\!\!&\!\!\! {-\I_{N}}\end{array}\!\!\!\!\right]$ and symbol ``$\mathbf{I}_{\kappa}$'' denotes  a $\kappa$-by-$\kappa$ identity matrix.
Let $\{\v^{k},\e_{1}^{k},\e_{2}^{k},\mathbf{p}^{k},\z_{1}^{k},\z_{2}^{k},\y_1^{k},\y_2^{k}\}$ be the tuples generated by \emph{Algorithm \ref{proxiaml-ADMM-QP}}.
Then, we have the following convergence results
\begin{equation}\label{converge-limit}
\begin{split}
& \mathop {\lim }\limits_{k\rightarrow +\infty} \v^{k} = \v^{*},~\mathop {\lim }\limits_{k\rightarrow +\infty}\e_{1}^{k} = \e_{1}^{*},
  ~\mathop {\lim }\limits_{k\rightarrow +\infty} \e_{2}^{k} = \e_{2}^{*}, \\
& \mathop {\lim }\limits_{k\rightarrow +\infty} \p^{k} = \p^{*},~\mathop {\lim }\limits_{k\rightarrow +\infty}\z_{1}^{k} = \z_{1}^{*},
  ~~\mathop {\lim }\limits_{k\rightarrow +\infty} \z_{2}^{k} = \z_{2}^{*}, \\
& \mathop {\lim }\limits_{k\rightarrow +\infty} \y_{1}^{k} = \y_{1}^{*},\mathop {\lim }\limits_{k\rightarrow +\infty} \y_{2}^{k} = \y_{2}^{*},~~ \A\v^{*}\!+\!\e_1^{*}\!-\!\b \!=\!\mathbf{0} \\
&\ \ \v^{*}=\e_2^{*},~~ \v^{*}=\p^{*}, ~~ \e_1^{*}=\z_1^{*}, ~~ \e_2^{*}=\z_2^{*}.
\end{split}
\end{equation}
Moreover, $\v^{*}$ is some stationary point of the original problem \eqref{ML-decoding-all}, i.e.,
\begin{equation}\label{stationary-point}
(\v-\v^{*})^T\nabla_{\v} g(\v^*) \geq 0, ~~~\forall \v \in X,
\end{equation}
where $g(\v)=\pmb{\lambda}^{T}\v-\frac{\alpha}{2}\|\v-0.5\|_2^2$.
\end{theorem}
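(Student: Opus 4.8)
The plan is to establish \eqref{converge-limit} and \eqref{stationary-point} by a Lyapunov argument tailored to nonconvex ADMM, which is necessary because $g(\v)=\pmb{\lambda}^{T}\v-\frac{\alpha}{2}\|\v-0.5\|_2^2$ is concave and the classical convex ADMM theory does not apply. I would first introduce a potential function augmenting $\mathcal{L}_{\mu}$ with the proximal terms,
\[
\Phi^{k}=\mathcal{L}_{\mu}(\v^{k},\e_{1}^{k},\e_{2}^{k},\y_{1}^{k},\y_{2}^{k})+\frac{\rho}{2}\|\v^{k}-\p^{k}\|_2^2+\frac{\rho}{2}\|\e_{1}^{k}-\z_{1}^{k}\|_2^2+\frac{\rho}{2}\|\e_{2}^{k}-\z_{2}^{k}\|_2^2,
\]
where the proximal centers may be shifted by one index to absorb the $\beta$-relaxation in \eqref{proximal-z-update}, and then prove that $\Phi^{k}$ is monotonically nonincreasing and bounded below. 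These two properties together force the successive differences of all primal blocks to vanish, which is the engine driving the rest of the proof.

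For the monotone-decrease step I would split $\Phi^{k+1}-\Phi^{k}$ into the drop generated by the three primal minimizations \eqref{proximal-x-update}--\eqref{proximal-v2-update} and the rise generated by the dual ascent \eqref{proximal-lamda-update}. Because $\rho>\alpha$, each primal subproblem is strongly convex (the $\v$-objective has Hessian $\mu\A^{T}\A+(\mu+\rho-\alpha)\I\succ\mathbf{0}$), which yields a descent proportional to $\|\v^{k+1}-\v^{k}\|_2^2$, $\|\e_{1}^{k+1}-\e_{1}^{k}\|_2^2$, and $\|\e_{2}^{k+1}-\e_{2}^{k}\|_2^2$. The dual step increases $\mathcal{L}_{\mu}$ by exactly $\frac{1}{\mu}(\|\y_{1}^{k+1}-\y_{1}^{k}\|_2^2+\|\y_{2}^{k+1}-\y_{2}^{k}\|_2^2)$, so the decisive estimate is to bound these dual differences by the primal ones. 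Here I would combine the first-order optimality condition of the $\v$-update with the Lipschitz bound of Lemma \ref{x-Lipschitz-constant}, obtaining a bound of the dual differences by $\delta_{\A\I}^2(\rho+L+2)^2$ times the sum of the squared primal successive differences; the stated upper bound on $\mu$ is precisely the inequality ensuring that the net coefficient of the primal differences in $\Phi^{k+1}-\Phi^{k}$ remains strictly negative.

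Next I would show $\Phi^{k}$ is bounded below, and this is where the lower bound $\mu\geq\alpha/\lambda_{\min}(\A^{T}\A)$ enters: it guarantees $\mu\A^{T}\A-\alpha\I\succeq\mathbf{0}$, so the $\v$-dependence of $\mathcal{L}_{\mu}$ is convex and coercive; combined with $\e_{2}$ ranging over the compact box $[0,1]^{N}$ and $\e_{1}\succeq\mathbf{0}$ being coupled to $\v$ through the penalized equality $\A\v+\e_{1}=\b$, this yields a finite lower bound. Monotonicity plus this bound implies $\Phi^{k}$ converges and, after summing the descent inequality, that $\|\v^{k+1}-\v^{k}\|_2,\|\e_{1}^{k+1}-\e_{1}^{k}\|_2,\|\e_{2}^{k+1}-\e_{2}^{k}\|_2\to0$; the dual-difference bound then gives $\|\y_{1}^{k+1}-\y_{1}^{k}\|_2,\|\y_{2}^{k+1}-\y_{2}^{k}\|_2\to0$, and \eqref{proximal-z-update} gives $\|\p^{k+1}-\p^{k}\|_2,\|\z_{1}^{k+1}-\z_{1}^{k}\|_2,\|\z_{2}^{k+1}-\z_{2}^{k}\|_2\to0$. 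Boundedness of the whole tuple (the primals by feasibility and the box, the duals by their primal-difference bound) furnishes a convergent subsequence; passing to the limit in \eqref{proximal-z-update} and \eqref{proximal-lamda-update} yields the consistency identities $\v^{*}=\p^{*}$, $\e_{1}^{*}=\z_{1}^{*}$, $\e_{2}^{*}=\z_{2}^{*}$ and the feasibility relations $\A\v^{*}+\e_{1}^{*}-\b=\mathbf{0}$ and $\v^{*}=\e_{2}^{*}$ in \eqref{converge-limit}, while the vanishing differences upgrade subsequential convergence to convergence of the full sequence.

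Finally, for \eqref{stationary-point} I would pass to the limit in the stationarity condition of the $\v$-subproblem \eqref{proximal-x-update}: since the proximal contribution $\rho(\v^{k+1}-\p^{k})$ and the residuals vanish, the gradient condition collapses to $\nabla_{\v}g(\v^{*})+\A^{T}\y_{1}^{*}+\y_{2}^{*}=\mathbf{0}$, and the limiting optimality of the $\e_{1}$- and $\e_{2}$-subproblems \eqref{proximal-v1-min} and \eqref{v2-min} supplies the sign condition $\y_{1}^{*}\succeq\mathbf{0}$ together with complementary slackness for $\A\v^{*}\preceq\b$ and the box $\mathbf{0}\preceq\v^{*}\preceq\mathbf{1}$; assembling these KKT relations gives the variational inequality $(\v-\v^{*})^{T}\nabla_{\v}g(\v^{*})\geq0$ for all $\v\in X$. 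I expect the main obstacle to be the sufficient-decrease estimate: because $g$ is concave the primal descent must strictly dominate the dual ascent, so the constant bookkeeping — in particular verifying that the admissible $\mu$-interval in the hypothesis is nonempty and that the resulting net descent coefficient is positive — is the delicate technical core, whereas the boundedness, subsequence extraction, and KKT passage are comparatively routine once the descent inequality is secured.
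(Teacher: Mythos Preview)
Your overall architecture is reasonable, but the key step---bounding the dual increments $\|\y_{1}^{k+1}-\y_{1}^{k}\|_2$ and $\|\y_{2}^{k+1}-\y_{2}^{k}\|_2$ by primal successive differences---does not go through as you describe, and this is precisely where the paper departs from your plan. The standard nonconvex-ADMM trick you are invoking requires the \emph{last} primal block to be unconstrained, so that its optimality condition reads $\y^{k+1}=-\rho(\cdot)$ with no normal-cone term. Here the unconstrained block $\v$ is updated \emph{first}, so its optimality condition involves the \emph{old} multipliers $\y_{1}^{k},\y_{2}^{k}$; subtracting two consecutive such conditions controls only $\A^{T}(\y_{1}^{k}-\y_{1}^{k-1})+(\y_{2}^{k}-\y_{2}^{k-1})$, which does not separate into bounds on $\y_{1}$ and $\y_{2}$ individually. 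The blocks $\e_{1},\e_{2}$ that are updated last are constrained, so their optimality conditions carry normal-cone contributions that obstruct the clean identity $\y^{k+1}=-\rho(\e^{k+1}-\z^{k})$ on inactive components. A second, related issue is that your lower-bound argument for $\Phi^{k}$ ignores the linear terms $\y_{1}^{T}(\A\v+\e_{1}-\b)+\y_{2}^{T}(\v-\e_{2})$ in $\mathcal{L}_{\mu}$; these are unbounded below unless the duals are already known to be bounded, which is circular.

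The paper handles both obstacles by following the Zhang--Luo proximal-ADMM framework \cite{proximal-admm}: instead of your $\Phi^{k}$ it uses the composite potential $\Psi=\mathcal{F}-2\mathcal{D}+2\mathcal{P}$, where $\mathcal{D}$ is the joint minimum of $\mathcal{F}$ over $(\v,\e_{1},\e_{2})$ and $\mathcal{P}$ is that minimum with the equality constraints enforced. Lower-boundedness of $\Psi$ is then automatic from $\mathcal{F}\geq\mathcal{D}$, $\mathcal{P}\geq\mathcal{D}$, and the compactness-driven lower bound on $\mathcal{P}$, bypassing any a priori dual bound. The sufficient-decrease inequality is obtained not by bounding $\|\y^{k+1}-\y^{k}\|$ directly but through a family of error-bound inequalities (Lemma~\ref{error-bound-lemma}) relating the iterates to the minimizers defining $\mathcal{D}$ and $\mathcal{P}$; the constant $\delta_{\A\I}^{2}(\rho+L+2)^{2}$ you quote is exactly $\delta_{\A\I}^{2}/\varepsilon_{1}$ from that lemma, but it enters via the residual $\pmb{\phi}^{k}$ at the idealized minimizer, not via a dual-increment bound. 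Your KKT passage for the stationarity conclusion \eqref{stationary-point} is essentially the same as the paper's and is fine once convergence is secured.
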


\begin{proof}
 See Appendix \ref{converge-proof}.
\end{proof}

\subsection{Computational Complexity}

Before analyzing the complexity of \emph{Algorithm \ref{proxiaml-ADMM-QP}}, we show matrix $\A$ has the following property.

\begin{fact}\label{W-property}
 The elements in matrix $\A$ are 0, 1 or -1.
\end{fact}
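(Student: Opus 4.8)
The plan is to prove the fact by directly unwinding the explicit construction of $\A$ in \eqref{Ab-construct_A} and tracking the admissible entries through each matrix product, rather than by any analytic estimate. Since $\A$ is the vertical concatenation of the blocks $\hat{\mathbf{W}}_\tau(\mathbf{Q}_\tau\otimes\I)$, $\tau=1,\dots,\Gamma_c$, stacked on top of $\mathbf{S}$, it suffices to bound the entries of one generic block and of $\mathbf{S}$. The matrix $\mathbf{S}={\rm diag}(\mathbf{1}^T,\dots,\mathbf{1}^T)$ is $0/1$ by definition, which settles that block. For the remaining blocks I would first observe that $\mathbf{Q}_\tau\in\{0,1\}^{3\times(n+\Gamma_a)}$ carries exactly one $1$ per row (and at most one per column), so $\mathbf{Q}_\tau\otimes\I$ is a $0/1$ matrix with a single $1$ in each row; right-multiplication by it merely scatters the three column-blocks of $\hat{\mathbf{W}}_\tau$ into the larger index space and never adds two entries together. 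Consequently every entry of $\hat{\mathbf{W}}_\tau(\mathbf{Q}_\tau\otimes\I)$ is either $0$ or a verbatim entry of $\hat{\mathbf{W}}_\tau$, so the whole claim reduces to showing that the entries of $\hat{\mathbf{W}}_\tau$ lie in $\{0,1,-1\}$.

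Next I would expand $\hat{\mathbf{W}}_\tau=\big((\mathbf{B}^T\otimes\I)\mathbf{W}_\tau\big)\oplus 2$ with $\mathbf{W}_\tau=[\mathbf{P}\mathbf{T}_1\mathbf{D}_\tau;\dots;\mathbf{P}\mathbf{T}_q\mathbf{D}_\tau]$. The primitive factors are tightly controlled: $\mathbf{P}$ has entries in $\{+1,-1\}$; each $\mathbf{T}_i={\rm diag}(\mathbf{b}_i^T,\mathbf{b}_i^T,\mathbf{b}_i^T)$ is $0/1$; and $\mathbf{D}_\tau$ is block diagonal with the permutation matrices $\mathbf{D}(2^q,h_k)$, which by \eqref{permu-matrix} carry a single $1$ per row and column. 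A row of $\mathbf{P}\mathbf{T}_i$ equals $[P_{r1}\mathbf{b}_i^T,P_{r2}\mathbf{b}_i^T,P_{r3}\mathbf{b}_i^T]$, i.e.\ a signed copy of a $0/1$ vector, so $\mathbf{P}\mathbf{T}_i\in\{0,\pm1\}$, and right-multiplying by the permutation $\mathbf{D}_\tau$ only permutes columns; hence $\mathbf{W}_\tau\in\{0,\pm1\}$. The remaining work is to check that pre-multiplication by $(\mathbf{B}^T\otimes\I)$ followed by $\oplus 2$ does not enlarge this set.

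The delicate step --- which I expect to be the crux --- is precisely this last operation, because $\mathbf{B}^T$ is a generic $0/1$ matrix rather than a selection or permutation, so forming $(\mathbf{B}^T\otimes\I)\mathbf{W}_\tau$ sums several $\pm1$ entries within a column and can momentarily produce integers of magnitude larger than $1$. The argument must therefore pin down the action of $\oplus 2$: writing $\sum_i B_{i,a}\mathbf{P}\mathbf{T}_i\mathbf{D}_\tau=\mathbf{P}\,{\rm diag}(\mathbf{g}_a^T,\mathbf{g}_a^T,\mathbf{g}_a^T)\,\mathbf{D}_\tau$ with $\mathbf{g}_a^T=\sum_i B_{i,a}\mathbf{b}_i^T$, one must show that $\oplus 2$ acts on the $\mathbf{B}$-induced combination $\mathbf{g}_a$, collapsing it to its $0/1$ mod-$2$ image exactly as \eqref{three-variable pairty-check} is obtained from an integer bit-sum. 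Then each $4$-row block of $\hat{\mathbf{W}}_\tau$ is $\mathbf{P}$ times a $0/1$ diagonal selection times a permutation --- precisely the forbidden-set (odd-set) inequality form underlying \eqref{four-inequa} --- so that its coefficients lie in $\{0,1,-1\}$, with the $\pm1$'s originating solely from $\mathbf{P}$ and no two of them landing in a common position. Combining this with the $0/1$ block $\mathbf{S}$ and the column-placement observation of the first paragraph yields that every entry of $\A$ lies in $\{0,1,-1\}$, which is the assertion of the fact.
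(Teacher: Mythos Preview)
Your proposal is correct and follows essentially the same route as the paper's own proof: decompose $\A$ into $\mathbf{S}$ and the blocks $\hat{\mathbf{W}}_\tau(\mathbf{Q}_\tau\otimes\I)$, use the single-$1$-per-row structure of $\mathbf{Q}_\tau\otimes\I$ to reduce to $\hat{\mathbf{W}}_\tau$, and then argue that each $4$-row slab of $\hat{\mathbf{W}}_\tau$ equals $\mathbf{P}\,{\rm diag}\big((\sum_{i\in\mathcal{K}_\ell}\mathbf{b}_i^T\oplus 2)\mathbf{D}(2^q,h_{\tau_k})\big)_{k=1,2,3}$, which is $\pm1$/$0$ because $\mathbf{P}\in\{\pm1\}$, the mod-$2$ reduced row vector is $0/1$, and $\mathbf{D}(2^q,h_{\tau_k})$ is a permutation. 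Your flagging of the ``delicate step'' --- that the $\oplus 2$ in \eqref{wW_W} must be read as acting on the $\mathbf{b}$-combinations $\mathbf{g}_a$ rather than entrywise on the full signed matrix --- is exactly the manipulation the paper performs (it simply rewrites $\hat{\mathbf{W}}_\tau$ with the $\oplus2$ pushed onto $\sum_{i\in\mathcal{K}_\ell}\mathbf{T}_i$), so you have correctly located and resolved the only non-mechanical point.
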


\begin{proof}
See Appendix \ref{W-property-proof}.
\end{proof}

Based on the above fact of matrix $\mathbf{A}$,
we can see that all multiplications with regard to $\mathbf{A}$, such as $\mathbf{A}^{T}\mathbf{b}$, can be performed via additions.
Moreover, $(\pmb{\lambda}+0.5\alpha)/\mu$, $\theta_{i}$, $\omega_{i}$ and $\frac{\mu}{\rho+\mu}$ can be calculated in advance before we start the ADMM iterations.
Observing \eqref{proximal-xi-calcu-simp}, we can find that computing each $\v_{i}^{k+1}$ requires $2^{q+1}-1$ multiplications.
This implies that the complexity of the $\v^{k+1}$-update is $\mathcal{O}((n+\Gamma_{a})2^q)$.
The properties of matrix $\A$ can also be applied to updating $\e_{1}^{k+1}$.
From \eqref{v1-solution-component}, we observe that each $e_{1,j}^{k+1}$ can be updated only via two multiplication operations and thus the complexity of the $\e_{1}^{k+1}$-update is $\mathcal{O}(M)$.
Similarly, observing \eqref{v2-solution-component}, we can find that computing $e_{2,\ell}^{k+1}$ requires two multiplications.
As a result, the $\e_{2}^{k+1}$-update has $\mathcal{O}(N)$ complexity.
From \eqref{proximal-z-update}, we easily observe that the update of $\mathbf{p}^{k+1}$, $\z_1^{k+1}$ and $\z_2^{k+1}$ requires $N$, $M$ and $N$ multiplications respectively.
Hence, the complexities of computing $\mathbf{p}^{k+1}$, $\z_1^{k+1}$ and $\z_2^{k+1}$ are $\mathcal{O}(N)$, $\mathcal{O}(M)$ and $\mathcal{O}(N)$ respectively.
In addition, observing variables $\y_1$ and $\y_2$ in \eqref{proximal-xi-calcu-simp} \eqref{v1-solution-component} \eqref{v2-solution-component},
one can find that if their scaled forms $\frac{\y_1}{\mu}$ and $\frac{\y_2}{\mu}$ are updated, then corresponding multiplications are not necessary,
i.e., calculating $\frac{\y_1^{k+1}}{\mu}$ and $\frac{\y_2^{k+1}}{\mu}$ only requires some addition operations.
From the above analysis, we can see that the overall computational complexity of \emph{Algorithm \ref{proxiaml-ADMM-QP}} in each iteration is roughly $\mathcal{O}\big((n+\Gamma_{a})2^q+3N+2M\big)$.
Furthermore, based on $M=4(2^q-1)\Gamma_{c}+n+\Gamma_{a}$ and $N=(2^q-1)(n+\Gamma_{a})$, we obtain
\begin{equation*}
\begin{split}
(n\!+\!\Gamma_{a})2^q\!+\!3N\!+\!2M
\!=\!(2^{q+2}\!-\!1)(n\!+\!\Gamma_{a})\!+\!8(2^q\!-\!1)\Gamma_{c}.
\end{split}
\end{equation*}
Moreover, observing \eqref{gamma_a_c}, we have $\Gamma_{a}\leq m(d-3)=n(1-R)(d-3)$ and $\Gamma_{c}\leq m(d-2)=n(1-R)(d-2)$ where $R$ denotes the code rate and $d$ is the largest check node degree.
This implies that $\Gamma_{a}$ and $\Gamma_{c}$ is proportional to the code length $n$ since $d \ll n$ in the case of the LDPC code.
Therefore, we can conclude that the total computational complexity of \emph{Algorithm \ref{proxiaml-ADMM-QP}} in every proximal-ADMM iteration is roughly $\mathcal{O}\big(2^qn\big)$.

\section{Simulation results}\label{simulation-result}
In this section, several numerical results are presented for the proposed proximal-ADMM decoder in \emph{Algorithm \ref{proxiaml-ADMM-QP}}. First, we show its error-correction performance (frame error rate (FER) and symbol error rate (SER)) and decoding efficiency, which is compared with several state-of-the-art nonbinary LDPC decoders. Second, we present how to select the proper value of the parameters in the proximal-ADMM decoder, which can improve the proximal-ADMM decoder's error-correction performance and convergence rate.

\subsection{Performance of the proposed proximal-ADMM decoder}
We consider two codes, named $\mathcal{C}_{1}$ and $\mathcal{C}_{2}$, which are  Tanner (1055,424) LDPC code and Tanner [155,64] LDPC code respectively \cite{Tanner-code}.
For $\mathcal{C}_{1}$, we use the same parity-check matrix as the binary case but each binary nonzero check value is replaced by $1 \in \mathbb{F}_{4}$.
For $\mathcal{C}_{2}$, its parity-check matrix is the same as the binary case with each nonzero check value $1 \in \mathbb{F}_{16}$. The corresponding modulations are quadrature phase shift keying (QPSK) and sixteen quadrature amplitude modulation (16QAM) respectively.
The modulated symbols are transmitted over the additive white Gaussian noise (AWGN) channel.
The considered decoders include the proposed proximal-ADMM algorithm,  logarithm-domain fast-fourier-transforms-based Q-ary sum-product algorithm (Log-FFT-QSPA) \cite{burst-error}, and the nonbinary ADMM-based LP  (ADMM-LP) decoding algorithm \cite{Liu-nonbinary-journal}.
The parameters of \emph{Algorithm \ref{proxiaml-ADMM-QP}} are set as follows:
penalty parameter $\mu$ is chosen as 0.8 and 0.6 for codes $\mathcal{C}_{1}$ and $\mathcal{C}_{2}$ respectively; parameters $\alpha$, $\rho$ and $\beta$ are set to be 0.5, 0.52 and 0.9 respectively for both of the two codes; we stop the iteration when both $\|\mathbf{A}\mathbf{v}^{k}+\mathbf{e}_{1}^{k}-\mathbf{b}\|_{2}^{2}\leq 10^{-5}$ and $\|\mathbf{v}^{k}-\mathbf{e}_{2}^{k}\|_{2}^{2}\leq 10^{-5}$ are satisfied, or the maximum iteration number $t_{\rm max}=500$ is reached.

 \begin{figure}[tp]
    \subfigure[Tanner (1055,424) code $\mathcal{C}_{1}$ in $\mathbb{F}_4$ with QPSK modulation.]{
    \begin{minipage}{8.5cm}
    \centering
        \includegraphics[width=3.5in,height=2.5in]{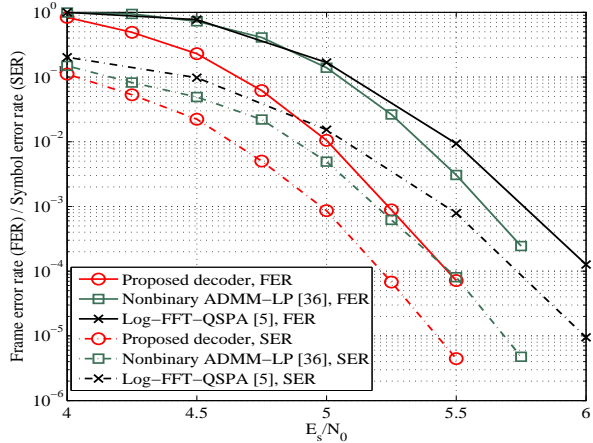}
            \label{fer1055}
    \end{minipage}%
    }\
    \subfigure[Tanner (155,64) code $\mathcal{C}_{2}$ in $\mathbb{F}_{16}$ with 16QAM modulation.]{
    \begin{minipage}{8.5cm}
    \centering
        \includegraphics[width=3.5in,height=2.5in]{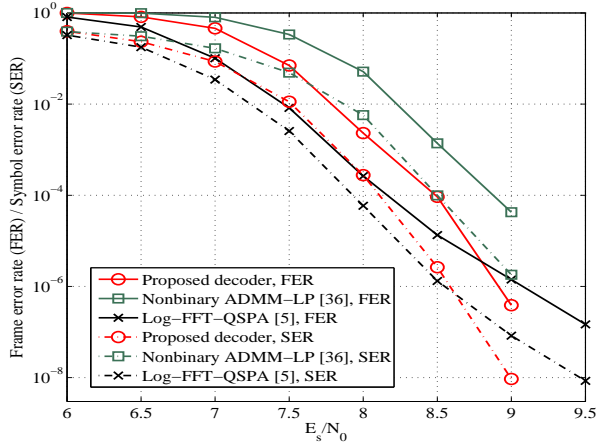}
            \label{fer155}
    \end{minipage}%
    }
     \centering
    \caption{Comparisons of FER/SER performance for two nonbinary LDPC codes in different Galois fields with different modulations, where $\mathcal{C}_{1}$ and $\mathcal{C}_{2}$ denote the Tanner (1055,424) code and the Tanner (155,64) code from \cite{Tanner-code}, respectively.}
    \label{fer_ser}
 \end{figure}

 \begin{figure*}[tp]
    \subfigure[Tanner (1055,424) code $\mathcal{C}_{1}$ in $\mathbb{F}_4$ with QPSK modulation.]{
    \begin{minipage}{8.5cm}
    \centering
        \includegraphics[width=3.5in,height=2.5in]{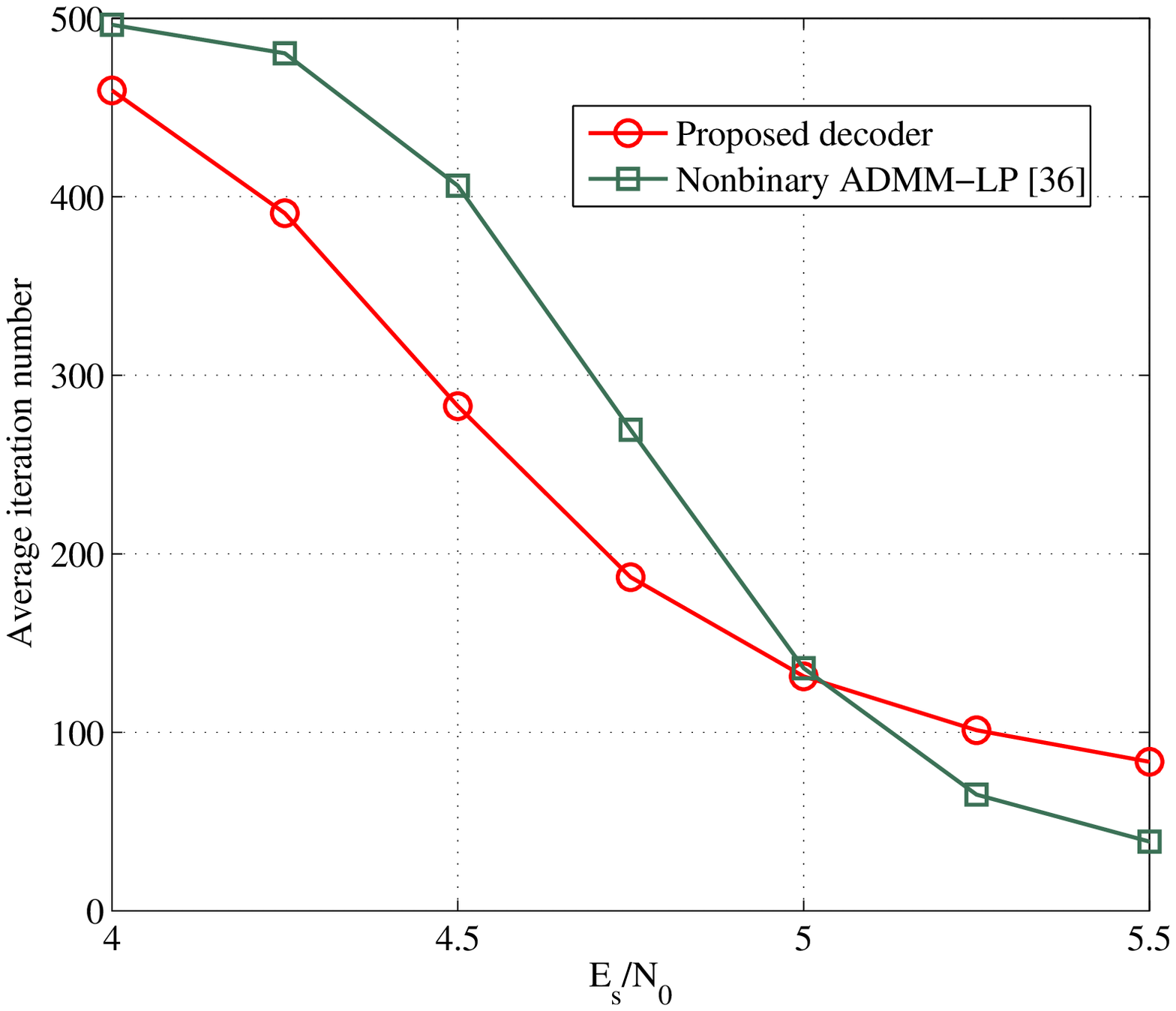}
            \label{iter504}
    \end{minipage}%
    }\
    \subfigure[Tanner (155,64) code $\mathcal{C}_{2}$ in $\mathbb{F}_{16}$ with 16QAM modulation.]{
    \begin{minipage}{8.5cm}
    \centering
        \includegraphics[width=3.5in,height=2.5in]{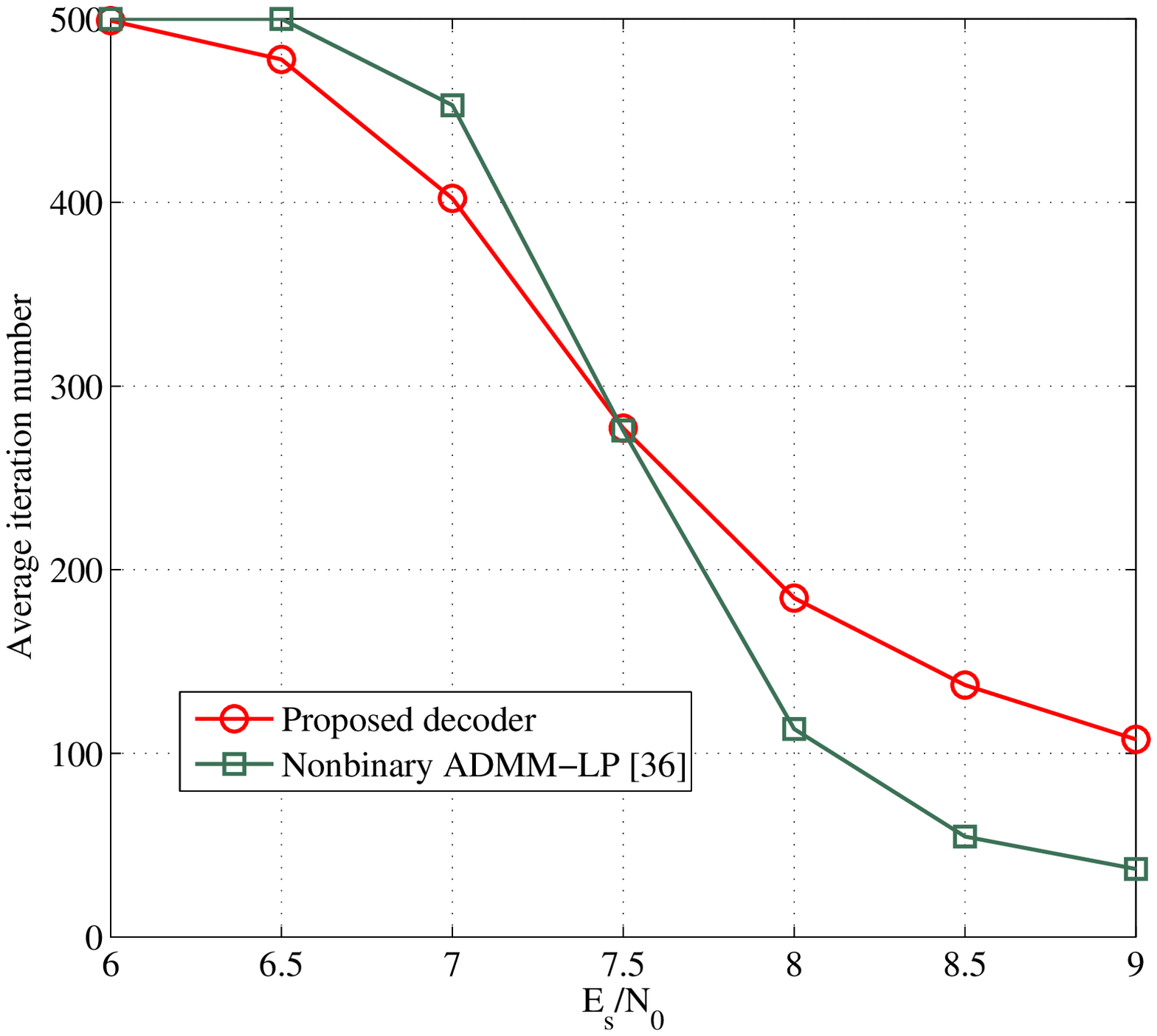}
            \label{iter155}
    \end{minipage}%
    }
     \centering
    \caption{Comparisons of the average number of iterations for two nonbinary LDPC codes in different Galois fields with different modulations, where $\mathcal{C}_{1}$ and $\mathcal{C}_{2}$ denote the Tanner (1055,424) code and the Tanner (155,64) code from \cite{Tanner-code}, respectively.}
    \label{iter2}
 \end{figure*}
Fig.\,\ref{fer_ser} shows the FER/SER performance of code $\mathcal{C}_{1}$ in $\mathbb{F}_4$ and code $\mathcal{C}_{2}$ in $\mathbb{F}_{16}$ when different decoders are applied.
From Fig.\,\ref{fer1055}, it can be seen that our proposed decoder attains better FER/SER performance than Log-FFT-QSPA \cite{FHT-nonBP} and the nonbinary ADMM-LP decoder \cite{Liu-nonbinary-journal}.
Observing Fig.\,\ref{fer155}, it can be seen that the proposed decoder outperforms the ADMM-LP decoder \cite{Liu-nonbinary-journal} in terms of either FER or SER.
We can also see that the proposed decoder displays comparable FER/SER performance to the Log-FFT-QSPA in low SNR regions and performs superiorly to the Log-FFT-QSPA in terms of FER and BER at $E_s/N_0$ =9dB, where the FER/SER curves of the proposed proximal-ADMM decoder continue to drop in a waterfall manner while the FER/SER curves of the Log-FFT-QSPA drop slowly and the corresponding error-correction performance is inferior to the proximal-ADMM decoder.
Therefore, we can conclude that the proposed proximal-ADMM decoder outperforms Log-FFT-QSPA \cite{burst-error} and the ADMM-LP decoder \cite{Liu-nonbinary-journal} in terms of error-correction performance for the considered nonbinary LDPC codes.

Fig.\,\ref{iter2} shows the average iteration number of the proximal-ADMM decoder and the ADMM-LP decoder in \cite{Liu-nonbinary-journal}. From the figures, one can find that the proposed proximal-ADMM decoder requires fewer iterations than the ADMM-LP decoder \cite{Liu-nonbinary-journal} in low SNR regions and they are comparable in high SNR regions.
Moreover, it can also be observed that the average number of iterations required by the proposed proximal-ADMM decoder is less than 200 in high SNR regions for both of the codes $\mathcal{C}_{1}$ and $\mathcal{C}_{2}$.
Furthermore, compared with the competing ADMM-LP decoder \cite{Liu-nonbinary-journal}, since the expensive Euclidean projections onto the check/simplex polytopes are replaced by the simple Euclidean projections onto the positive quadrant, the proposed proximal-ADMM decoder outperforms the nonbinary ADMM-LP decoder \cite{Liu-nonbinary-journal} in terms of decoding efficiency.

\begin{figure}[tp]
  \centering
  \centerline{\psfig{figure=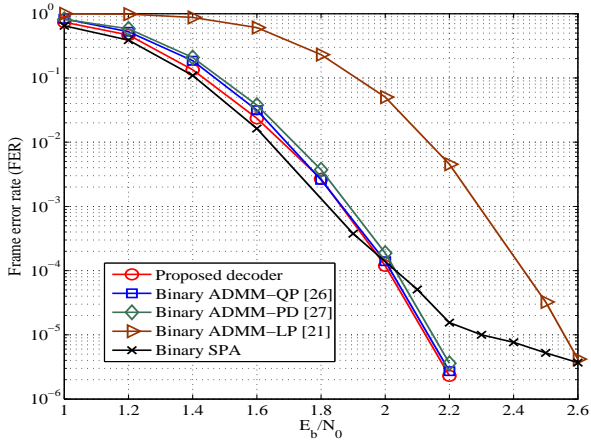,width=8.8cm,height=6.3cm}}
  \caption{FER comparison for the [2640,1320] ``Margulis'' binary LDPC code $\mathcal{C}_{3}$ from \cite{Mackay-code}.}
  \label{fer2640}
\end{figure}
\begin{figure}[tp]
  \centering
  \centerline{\psfig{figure=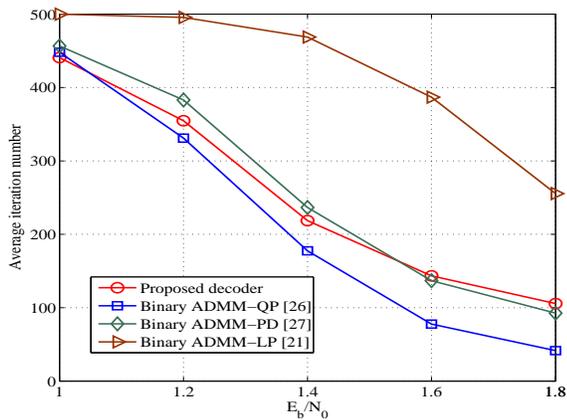,width=8.5cm,height=6cm}}
  \caption{Comparison of the average number of iterations for the [2640,1320] ``Margulis'' binary LDPC code $\mathcal{C}_{3}$ from \cite{Mackay-code}.}
  \label{iter2640}
\end{figure}

In addition, it should be noted that the proposed proximal-ADMM decoder is also suitable for decoding binary LDPC codes. Here, we presented several simulation results for binary LDPC codes, such as the rate-0.5 (3,6)-regular [2640,1320] ``Margulis'' binary LDPC code $\mathcal{C}_3$ from \cite{Mackay-code}.
The information bits of $\mathcal{C}_3$ are modulated by binary phase shift keying (BPSK) and then transmitted over the AWGN channel.
Fig.\,\ref{fer2640} and Fig.\,\ref{iter2640} present the FER performance and number of iterations for code $\mathcal{C}_3$ under our proposed decoder and other competing binary LDPC decoders, respectively.
From Fig.\,\ref{fer2640}, we observe that our proposed decoder displays similar FER to the traditional binary SPA at low SNRs, but achieves better error-correction performance in high SNR regions. Specifically, the SPA decoder displays error floor behavior while our proposed decoder continuously drops in a waterfall manner.
Moreover, one can also find that the proposed proximal-ADMM decoding algorithm attains comparable FER performance to state-of-the-art binary ADMM-based penalized decoding (ADMM-PD) algorithms \cite{penalty-decoder} \cite{bai-admm-qp-binary} and outperforms the binary ADMM-LP decoding algorithm \cite{my-wcl} in terms of error-correction performance.
Fig.\,\ref{iter2640} shows that the average iteration numbers required to realize decoding are comparable between the proposed proximal-ADMM decoder and the ADMM-PD methods \cite{penalty-decoder} \cite{bai-admm-qp-binary}.
At last, we should also note that the proposed proximal-ADMM decoder is more efficient than the ADMM-PD decoder \cite{penalty-decoder} since it  eliminates time-consuming check-polytope projections.

\subsection{Parameter choices of the proposed proximal-ADMM decoder}
There are several parameters in the proposed proximal-ADMM decoding \emph{Algorithm \ref{proxiaml-ADMM-QP}}, including the penalty parameter $\mu$, the 2-norm penalty parameter $\alpha$, the ending tolerance $\xi$, the maximum number of iterations $t_{\rm max}$, and parameters $\rho$ and $\beta$.
Proper parameters can make \emph{Algorithm \ref{proxiaml-ADMM-QP}} achieve favourable error-correction performance and reduce the iteration number.
However, an exhaustive search over all possible parameters is not practical and necessary.
First, it is easy to see that a sufficiently large $t_{\rm max}$ and sufficiently small $\xi$ can lead to good error-correction performance for \emph{Algorithm \ref{proxiaml-ADMM-QP}}.
Thus, we fix the ending tolerance $\xi=10^{-5}$ and the maximum number of iterations $t_{max}=500$ in the simulations.
Moreover, a large $\beta$ parameter is favorable because it can make the updates for variables $\mathbf{v}$, $\mathbf{e}_1$, and $\mathbf{e}_2$ in every proximal-ADMM iteration not deviate too much from the stabilized iterate $\mathbf{p}$, $\mathbf{z}_1$, and $\mathbf{z}_2$ respectively (c.f.\cite{proximal-admm}). Therefore, we set $\beta$ to be 0.9 in the simulations.
Moreover, \emph{Algorithm \ref{proxiaml-ADMM-QP}} is sensitive to the settings of parameters $\mu$, $\alpha$ and $\rho$.
In order to guarantee that subproblem \eqref{proximal-x-update} is strongly convex with respect to variable $\x$, we let $\rho>\alpha$.

\begin{figure}[tp]
  \centering
  \centerline{\psfig{figure=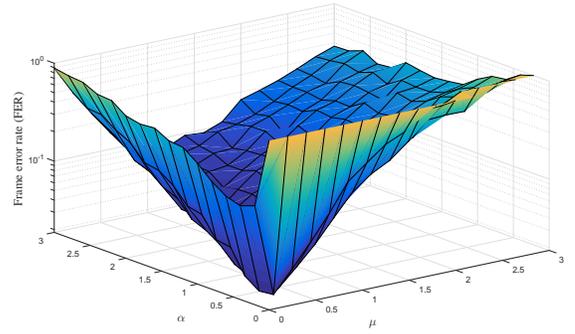,width=8.5cm,height=5cm}}
  \caption{FER performance plotted as a function of $\mu$ and $\alpha$ for the Tanner [1055,424] code $\mathcal{C}_{1}$ in $\mathbb{F}_{4}$.}
  \label{miu-alpha-fer}
\end{figure}
\begin{figure}[tp]
  \centering
  \centerline{\psfig{figure=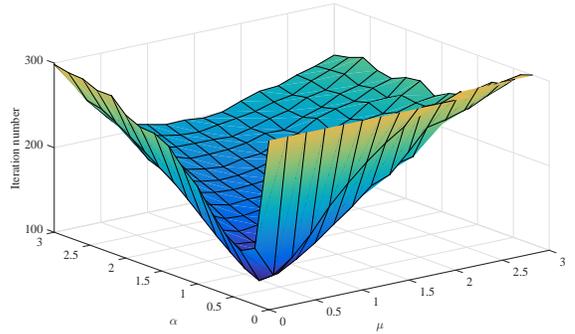,width=8.5cm,height=5cm}}
  \caption{Number of iterations plotted as a function of $\mu$ and $\alpha$ for the Tanner [1055,424] code $\mathcal{C}_{1}$ in $\mathbb{F}_{4}$.}
  \label{miu-alpha-iter}
\end{figure}

Next, we focus on how to choose parameters $\mu$ and $\alpha$. Fig.\,\ref{miu-alpha-fer} and Fig.\,\ref{miu-alpha-iter} plots FER performance and iteration numbers for code $\mathcal{C}_{1}$ as a function of parameters $\mu$ and $\alpha$ at $E_s/N_0=5$\,dB, respectively.
In the figures, we set $\rho=\alpha+0.2$ to ensure that $\rho>\alpha$ always holds.
Observing Fig.\,\ref{miu-alpha-fer}, one can find that \emph{Algorithm \ref{proxiaml-ADMM-QP}} achieves better FER performance as long as $\mu \in [0.3,1]$ and $\alpha \in [0.2,0.5]$.
Moreover, from Fig.\,\ref{miu-alpha-iter}, one can see that fewer iterations are required when $\mu \in [0.3,1]$ and $\alpha \in [0.2,0.5]$.
This means that $\mu\in[0.3,1]$ and $\alpha\in[0.2,0.5]$ are good choices in terms of error-correction performance and decoding efficiency of \emph{Algorithm \ref{proxiaml-ADMM-QP}}.

\section{Conclusion} \label{Conclusion}
In this paper, we propose an efficient proximal-ADMM decoding algorithm for nonbinary LDPC codes in $\mathbb{F}_{2^q}$.
We prove that the proposed decoding algorithm is guaranteed to converge to some stationary point of the formulated QP decoding problem.
Moreover, it is shown that the complexity of the proposed decoder in each iteration is roughly $\mathcal{O}(2^qn)$.
Simulation results demonstrate that the proposed proximal-ADMM decoding algorithm attains better error-correction performance and decoding efficiency than state-of-the-art nonbinary LDPC decoders.

\appendices

\section{Proof of Lemma \ref{hu-dD-u-define}}\label{hu-dD-u-define-proof}
\begin{proof}
Let $\mathbb{I}(\cdot)$ is an indicator function defined in $\mathbb{F}_{2^q}$, i.e., $\mathbb{I}(u, v)=1$ when $u=v$, otherwise $\mathbb{I}(u, v)=0$. Then, from the mapping rule defined in \eqref{Mq2b}, binary vector codeword $\x_k$, corresponding to $u_k \in \mathbb{F}_{2^q}$, can be expressed by
\begin{equation}\label{x_k-express}
\x_k =[\mathbb{I}(1, u_k);\ldots;\mathbb{I}(j, u_k);\ldots;\mathbb{I}(2^q-1, u_k)].
\end{equation}
Assuming $j=u_k$, one can find that $\mathbb{I}(j, u_k)=1$ and other elements in $\x_k$ are zeros.
Moreover, based on the definition of $D(2^q,h_k)_{ij}$ in \eqref{permu-matrix}, matrix $\mathbf{D}(2^q,h_k)$ can be rewritten as
 {\setlength\abovedisplayskip{4pt}
 \setlength\belowdisplayskip{4pt}
  \setlength\jot{1pt}
\begin{equation}\label{permu-matrix-express}
\begin{split}
&\mathbf{D}(2^q,h_k) \\
=&\!\!
\left[\!\!\!\!
  \begin{array}{ccccc}
    \mathbb{I}(1, h_k) \!\!\!&\!\!\! \cdots \!\!&\!\! \mathbb{I}(1,jh_k ) \!\!&\!\! \cdots \!\!&\!\! \mathbb{I}(1, (2^q\!-\!1)h_k) \\
    \vdots \!\!&\!\! \vdots & \vdots & \vdots & \vdots \\
    \mathbb{I}(i, h_k) \!\!\!&\!\!\! \cdots \!\!&\!\! \mathbb{I}(i, jh_k) \!\!&\!\! \cdots \!\!&\!\! \mathbb{I}(i, (2^q\!-\!1)h_k) \\
    \vdots \!\!\!&\!\!\! \vdots \!\!&\!\! \vdots \!\!&\!\! \vdots \!\!&\!\! \vdots \\
    \mathbb{I}(2^q\!\!-\!\!1\!, \!h_k\!) \!\!\!&\!\!\!\! \cdots \!\!&\!\!\!\! \mathbb{I}(2^q\!\!-\!\!1\!, \!jh_k) \!\!\!&\!\! \cdots \!\!\!&\!\!\! \mathbb{I}(2^q\!\!-\!\!1\!, \!(2^q\!\!-\!\!1)h_k\! ) \\
  \end{array}
\!\!\!\!\right].
\end{split}
\end{equation}
Then,} we have
 {\setlength\abovedisplayskip{3pt}
 \setlength\belowdisplayskip{3pt}
  \setlength\jot{1pt}
\begin{equation}\label{Dx-express}
\mathbf{D}(2^q,h_k)\x_k =\left[\!\!
                       \begin{array}{c}
                         \mathbb{I}(1,jh_k)\mathbb{I}(j,u_k) \\
                         \vdots \\
                         \mathbb{I}(i,jh_k)\mathbb{I}(j,u_k) \\
                         \vdots \\
                         \mathbb{I}(2^q\!\!-\!\!1,jh_k)\mathbb{I}(j,u_k) \\
                       \end{array}
                     \!\!\right]
\end{equation}
Since} $j=u_k$, $\mathbf{D}(2^q,h_k)\x_k$ can be further derived as follows when $i=h_ku_k\oplus 2^q$
\begin{equation}\label{Dx-express}
\mathbf{D}(2^q,h_k)\x_k\!=\!\!\left[\!\!\!\!
                       \begin{array}{c}
                         \mathbb{I}(1,h_ku_k) \\
                         \vdots \\
                         \mathbb{I}(i,h_ku_k) \\
                         \vdots \\
                         \mathbb{I}(2^q-1,h_ku_k) \\
                       \end{array}
                     \!\!\!\!\right]
\!\!=\!\!\left[\!\!\!\! \begin{array}{c}0\\ \vdots \\ \mathbb{I}(i,h_ku_k)\\ \vdots \\ 0\end{array}\!\!\!\right].
\end{equation}
Notice that $\mathbb{I}(i,h_ku_k)=1$, which completes the proof.
\end{proof}

\section{Proof of Lemma \ref{w-inverse-lemma}}\label{w-inverse-lemma-proof}
\begin{proof}
According to the definition of matrix $\A$ in \eqref{Ab-construct}, we have
\begin{equation}\label{ATA-derive1}
\begin{split}
&\A^T\A \\
=&\bigg(\sum_{\tau=1}^{\Gamma_{c}}\left(\hat{\mathbf{W}}_{\tau}(\mathbf{Q}_{\tau}\otimes \mathbf{I})\right)^{T}\left(\hat{\mathbf{W}}_{\tau}(\mathbf{Q}_{\tau}\otimes \mathbf{I})\right)\bigg)+\mathbf{S}^T\mathbf{S} \\
=&\bigg(\sum_{\tau=1}^{\Gamma_{c}}(\!\mathbf{Q}_{\tau}\otimes \mathbf{I})^T\hat{\mathbf{W}}_{\tau}^T\hat{\mathbf{W}}_{\tau}(\mathbf{Q}_{\tau}\otimes \mathbf{I})\bigg)+\mathbf{S}^T\mathbf{S}.
\end{split}
\end{equation}
Plugging \eqref{B} and \eqref{W} into \eqref{wW_W}, we can derive $\hat{\mathbf{W}}_{\tau}$ as follows
\begin{equation}\label{W-hat-equal}
\begin{split}
\hat{\mathbf{W}}_{\tau}=&(\mathbf{B}^T\otimes\mathbf{I})\mathbf{W}\!_{\tau} \!\oplus\! 2  \\
\!\!=&\begin{bmatrix} \mathbf{P}\displaystyle\bigg(\sum_{i\in\mathcal{K}_1}\mathbf{T}_{i}\bigg)\mathbf{D}_\tau;\dotsb; \mathbf{P}\displaystyle\bigg(\!\sum_{i\in\mathcal{K}_{2^q-1}}\!\!\!\!\mathbf{T}_{i}\bigg)\mathbf{D}_\tau
\end{bmatrix}\!\!\oplus\! 2,
\end{split}
\end{equation}
where $\mathcal{K}_1=\{1\}$, $\mathcal{K}_2=\{2\}$, $\mathcal{K}_3=\{1, 2\},\dotsb, \mathcal{K}_{2^q-1}=\{1,\dotsb,q\}$. Then, we have the following derivations
 {\setlength\abovedisplayskip{1pt}
 \setlength\belowdisplayskip{2pt}
  \setlength\jot{1pt}
\begin{equation}\label{wTw-derive1}
\begin{split}
\hat{\mathbf{W}}_{\tau}^T\hat{\mathbf{W}}_{\tau}
\!\!&=\!\!\sum_{\ell=1}^{2^q-1}\mathbf{D}_{\tau}^T\bigg(\!\Big(\!\sum_{i \in \mathcal{K}_{\ell}}\!\!\mathbf{T}_{i}\!\Big)^T\!\mathbf{P}^T\!\mathbf{P}\Big(\!\sum_{i \in \mathcal{K}_{\ell}}\!\!\mathbf{T}_{i}\!\Big)\oplus\!2\bigg)\mathbf{D}_{\tau}\!\!\\
&=4\!\sum_{\ell=1}^{2^q-1}\mathbf{D}_{\tau}^T\bigg(\Big(\!\sum_{i \in \mathcal{K}_{\ell}}\!\!\mathbf{T}_{i}\!\Big)^T\Big(\!\sum_{i \in \mathcal{K}_{\ell}}\!\!\mathbf{T}_{i}\!\Big)\!\oplus\!2\bigg)\mathbf{D}_{\tau} \\
&=4\mathbf{D}_{\tau}^T\sum_{\ell=1}^{2^q-1}\bigg(\Big(\!\sum_{i \in \mathcal{K}_{\ell}}\!\!\mathbf{T}_{i}\!\Big)^T\Big(\!\sum_{i \in \mathcal{K}_{\ell}}\!\!\mathbf{T}_{i}\!\Big)\!\oplus\!2\bigg)\mathbf{D}_{\tau},
\end{split}
\end{equation}
where} the second equality holds since $\mathbf{P}^T\mathbf{P}=4\mathbf{I}$. Moreover, since $\mathbf{T}_{i}={\rm diag}(\mathbf{b}_i^T,\mathbf{b}_i^T,\mathbf{b}_i^T)$ (see \eqref{t F matrix}), we can define
\begin{equation}\label{Phi-define}
\pmb{\Phi}=\sum_{\ell=1}^{2^q-1}\!\!\bigg(\Big(\!\sum_{i \in \mathcal{K}_{\ell}}\mathbf{b}_{i}\!\Big)\Big(\!\sum_{i \in \mathcal{K}_{\ell}}\mathbf{b}_{i}\!\Big)^T\oplus 2\bigg),
\end{equation}
and derive \eqref{wTw-derive1} as
\begin{equation}\label{wTw-derive2}
\hat{\mathbf{W}}_{\tau}^T\hat{\mathbf{W}}_{\tau}
=4\mathbf{D}_{\tau}^T\textrm{diag}(\pmb{\Phi},\pmb{\Phi},\pmb{\Phi})\mathbf{D}_{\tau}.
\end{equation}
Furthermore, since $\mathbf{b}_i^T$ is the $i$th row vector in matrix $\mathbf{B}$ (see \eqref{B}), $\pmb{\Phi}$ can be expressed as
\begin{equation}\label{phi-value}
\pmb{\Phi} = \left[
               \begin{array}{cccc}
                 2^{q-1} & 2^{q-2} & \cdots & 2^{q-2} \\
                 2^{q-2} & 2^{q-1} & \cdots & 2^{q-2} \\
                 \vdots & \vdots & \ddots & \vdots \\
                 2^{q-2} & 2^{q-2} & \cdots & 2^{q-1} \\
               \end{array}
             \right].
\end{equation}
From \eqref{permu-matrix}, we can see that either row vector or column vector in matrix $\mathbf{D}(2^q,h_{\tau_k})$ only includes one ``1''. Therefore, we can get
\begin{equation}\label{DphiD-phi}
\mathbf{D}(2^q,h_{\tau_k})^{T}\pmb{\Phi}\mathbf{D}(2^q,h_{\tau_k})=\pmb{\Phi},
\end{equation}
which implies that \eqref{wTw-derive2} can be further simplified to
\begin{equation}\label{wTw-derive3}
\hat{\mathbf{W}}_{\tau}^T\hat{\mathbf{W}}_{\tau}
=4\textrm{diag}(\pmb{\Phi},\pmb{\Phi},\pmb{\Phi}).
\end{equation}
Plugging \eqref{wTw-derive3} into \eqref{ATA-derive1} and noting $\mathbf{S}={\rm diag}(\mathbf{1}^T,\dotsb,\mathbf{1}^T)$, we have the following derivations
 {\setlength\abovedisplayskip{1pt}
 \setlength\belowdisplayskip{1pt}
  \setlength\jot{1pt}
\begin{equation}\label{ATA-derive2}
\begin{split}
& \A^T\A \\
=&4\bigg(\sum_{\tau=1}^{\Gamma_{c}}\!(\mathbf{Q}_{\tau}\!\otimes\! \mathbf{I})^T\textrm{diag}(\pmb{\Phi},\pmb{\Phi},\pmb{\Phi})\!(\mathbf{Q}_{\tau}\!\otimes\! \mathbf{I})\bigg)\!+\!\mathbf{S}^T\!\mathbf{S} \\
=&\textrm{diag}(4d_{1}\pmb{\Phi}\!\!+\!\!\mathbf{1}^T\mathbf{1},\dotsb,4d_{n+\Gamma_{a}}\pmb{\Phi}\!\!+\!\!\mathbf{1}^T\mathbf{1}),
\end{split}
\end{equation}
where} $d_{i}$, $i=1,\dotsb,n+\Gamma_a$, denotes the number of three-variables check equations that the $i$th information symbol participates in. From \eqref{ATA-derive2}, one can conclude that matrix $\big(\A^{T}\A+\epsilon\I\big)^{-1}$ is block diagonal. Letting
\begin{equation}\label{AAe}
  \mathbf{A}_{i}^T\mathbf{A}_i+\epsilon\mathbf{I}=4d_{i}\pmb{\Phi}+\mathbf{1}^T\!\mathbf{1}+\epsilon\mathbf{I},
\end{equation}
we can obtain \eqref{w-I-inverse-defi}, i.e.,
\[
\begin{split}
& \big(\A^{T}\A+\epsilon\I\big)^{-1} \\
= & {\rm diag}\Big(\big({\A}_1^{T}\A_1+\!\epsilon\I\big)^{-1},\ldots,\!\big(\A_{n+\Gamma_{a}}^{T}\A_{n+\Gamma_{a}}+\epsilon\I\big)^{-1}\Big).
\end{split}
\]

Plugging \eqref{phi-value} into \eqref{AAe}, we have
\begin{equation}\label{AAematrix}
  \begin{split}
  &\mathbf{A}_{i}^T\mathbf{A}_i+\epsilon\mathbf{I}\\
   =&\!\! \left[\!\!\!\!
               \begin{array}{cccc}
                 4d_{i}2^{q-1}\!\!+\!1\!+\!\epsilon \!\!&\!\! 4d_{i}2^{q-2}\!\!+\!1 \!\!&\!\!\! \cdots \!\!\!&\!\! 4d_{i}2^{q-2}\!\!+\!1 \\
                 4d_{i}2^{q-2}\!\!+\!1 \!\!&\!\! 4d_{i}2^{q-1}\!\!+\!1\!+\!\epsilon \!\!&\!\!\! \cdots \!\!\!&\!\! 4d_{i}2^{q-2}\!\!+\!1 \\
                 \vdots & \vdots & \ddots & \vdots \\
                 4d_{i}2^{q-2}\!\!+\!1 \!\!&\!\! 4d_{i}2^{q-2}\!\!+\!1 \!\!&\!\!\! \cdots \!\!\!&\!\! 4d_{i}2^{q-1}\!\!+\!1\!+\!\epsilon \\
               \end{array}
            \! \!\!\!\right]
   \end{split}.
\end{equation}
Observing \eqref{AAematrix}, it is easy to see that the diagonal elements in the matrix are the same and other elements are also the same. So its inverse matrix can be written as \cite{Matrix-analysis}
 {\setlength\abovedisplayskip{1pt}
 \setlength\belowdisplayskip{1pt}
  \setlength\jot{1pt}
\begin{equation}\label{AAematrix inv}
(\mathbf{A}_{i}^T\mathbf{A}_i+\epsilon\mathbf{I})^{-1} =\left[
  \begin{array}{cccc}
     \theta_{i} & \omega_{i} & \cdots & \omega_{i} \\
    \omega_{i} & \theta_{i} & \cdots & \omega_{i} \\
    \vdots & \vdots & \vdots & \vdots \\
    \omega_{i} & \omega_{i} & \cdots & \theta_{i} \\
  \end{array}
\right].
\end{equation}
Multiplying} the right sides of \eqref{AAematrix} and \eqref{AAematrix inv}, the corresponding equality leads to the following equations
\begin{equation*}\label{cal-a-b-equations}
\begin{split}
&(4d_{i}2^{q-1}\!\!+\!1\!+\!\epsilon)\theta_{i} + (4d_{i}2^{q-2}+1)(2^{q}-2)\omega_{i}=1, \\
&(\!4d_{i}2^{q\!-\!1}\!\!+\!1\!+\!\epsilon\!)\omega_{i}\!\! +\!\! (\!4d_{i}2^{q\!-\!2}\!\!+\!1\!)\theta_{i} \!+\! (\!4d_{i}2^{q\!-\!2}\!\!+\!1\!)(\!2^{q}\!\!-\!3\!)\omega_{i}\!=\!0.
\end{split}
\end{equation*}
Solving the above equations, we can obtain
 {\setlength\abovedisplayskip{1pt}
 \setlength\belowdisplayskip{1pt}
  \setlength\jot{1pt}
\[
\begin{split}
& \theta_{i}\!=\!\omega_{i}+\frac{1}{2^{q}d_{i}+\epsilon}, \\
& \omega_{i}\!=\!\frac{\!-\!2^{q}d_{i}}{(2^{q}d_{i}\!+\!\epsilon)[(2^{q+1}d_{i}\!+\!\epsilon\!+\!1)\!+\!(2^{q}d_{i}\!+\!1)(2^{q}\!-\!2)]}.
\end{split}
\]
This} completes the proof.
\end{proof}

\section{Proof of Lemma \ref{x-Lipschitz-constant}}\label{Lipschitz-continuous}
Based on the definition of $\mathcal{L}_{\mu}$ in \eqref{aug-Lagrangian}, its gradient, with respect to variable $\v$, can be calculated by
\begin{equation*}
\begin{split}
 \nabla_{\v} \mathcal{L}_{\mu}(\v,\e_{1}, \e_{2},&\y_{1}, \y_{2})\! =\! (\pmb{\lambda}-\alpha(\v-0.5)) +\A^{T}\y_1 +\y_2\\
&+\mu\A^T(\A\v+\e_1-\b)+\mu(\v-\e_2).
\end{split}
\end{equation*}
Then, we have the following derivations
\begin{equation*}
\begin{split}
& \|\nabla_{\v} \mathcal{L}_{\mu}(\v, \e_{1},  \e_{2}, \y_{1},  \y_{2})\!-\!\nabla_{\v} \mathcal{L}_{\mu}(\v^{\prime},\e_{1},\e_{2},\y_{1},\y_{2})\|_{2}\\
=&\|-\alpha(\v-\v^{\prime})+\mu(\v-\v^{\prime})+\mu\A^T\A(\v-\v^{\prime})\|_2\\
\leq & (\alpha+\mu+\mu\delta_{\A}^2)\|\v-\v^{\prime}\|_2, \\
\end{split}
\end{equation*}
where ``$\delta_{\A}$'' is the spectral norm of matrix $\A$. Letting $L=\alpha+\mu+\mu\delta_{\A}^2$, we obtain \eqref{x-Lipschitz-defi}.

\section{Proof of Theorem \ref{converge-proof-theorem}}\label{converge-proof}
Before we show its proof, we give one definition and two lemmas that are used to establish \emph{Theorem \ref{converge-proof-theorem}}.

\begin{definition}\label{F-D-P-define}
Define the following local functions
 {\setlength\abovedisplayskip{1pt}
 \setlength\belowdisplayskip{1pt}
  \setlength\jot{1pt}
\begin{equation}\label{D-define}
\begin{split}
& \mathcal{D}(\p,\! \z_{1},\! \z_{2}, \!\y_{1}, \!\y_{2})
\!\! =\!\!\min\limits_{\v, \e_{1} \succeq \mathbf{0}, \atop \mathbf{0} \preceq \e_{2} \preceq \mathbf{1}}\!\!\!\mathcal{F}(\v,\!\e_{1},\! \e_{2}, \!\p, \!\z_{1}, \!\z_{2},\! \y_{1},\! \y_{2}),
\end{split}
\end{equation}
\begin{equation}\label{x-D-define}
\begin{split}
& \hspace{-0.45cm} \left[\!\!\!\!\begin{array}{l}{\v(\p,\!\z_{1},\!\z_{2},\!\y_{1},\! \y_{2})} \\ {\e_{1}(\p,\! \z_{1},\! \z_{2},\! \y_{1},\! \y_{2})} \\ {\e_{2}(\p,\! \z_{1},\! \z_{2},\! \y_{1},\! \y_{2})}\end{array}\!\!\!\!\right]
\!\!\!=\!\underset{\v, \e_{1} \succeq \mathbf{0}, \atop \mathbf{0} \preceq \e_{2} \preceq \mathbf{1}}{\arg \min }~\!\mathcal{F}\!(\v,\! \e_{1},\! \e_{2},\! \p,\! \z_{1},\! \z_{2},\! \y_{1},\! \y_{2}\!),
\end{split}
\end{equation}
\begin{equation}\label{P-define}
\begin{split}
 & \mathcal{P}(\p,\! \z_{1},\! \z_{2})
\! =\!\!\!\min\limits_{\A\v+\e_1=\b, \v=\e_2, \atop \e_{1} \succeq \mathbf{0},  \mathbf{0} \preceq \e_{2} \preceq \mathbf{1}}\!\!\mathcal{F}(\v,\!\e_{1},\! \e_{2}, \!\p, \!\z_{1}, \!\z_{2},\! \y_{1},\! \y_{2}),
\end{split}
\end{equation}
\begin{equation}\label{x-P-define}
\begin{split}
& \left[\!\!\!\!\begin{array}{l}{\v(\p,\!\z_{1},\!\z_{2})\!} \\ {\e_{1}(\p,\! \z_{1},\! \z_{2})} \\ {\e_{2}(\p,\! \z_{1},\! \z_{2})}\end{array}\!\!\!\!\right]
\!\!\! =\!\!\!\!\underset{\A\v+\e_1=\b, \v=\e_2, \atop \e_{1} \succeq \mathbf{0},  \mathbf{0} \preceq \e_{2} \preceq \mathbf{1}}{\arg \min }\!\!\!\!\mathcal{F}(\v,\!\e_{1},\! \e_{2}, \!\p,\z_{1}, \!\z_{2},\! \y_{1},\! \y_{2}),
\end{split}
\end{equation}
where} function $\mathcal{F}(\cdot)$ is expressed by
 {\setlength\abovedisplayskip{2pt}
 \setlength\belowdisplayskip{2pt}
  \setlength\jot{1pt}
\begin{equation}\label{F-define}
\hspace{-0.2cm}\begin{split}
\mathcal{F}\!(\!\v,\!\e_{1},&\e_{2}\!,\!\p, \!\z_{1}, \!\z_{2},\! \y_{1},\! \y_{2})\!\!=\!\!\mathcal{L}_{\mu}\!(\v,\!\e_{1},\! \e_{2}, \!\p, \!\z_{1}, \!\z_{2},\! \y_{1},\! \y_{2}) \\
& \!\!+\!\frac{\rho}{2}\|\v\!-\!\p\|_{2}^{2}\!+\!\frac{\rho}{2}\left\|\e_{1}-\z_{1}\right\|_{2}^{2}+\frac{\rho}{2}\left\|\e_{2}-\z_{2}\right\|_{2}^{2}.
\end{split}
\end{equation}}
\end{definition}

Based on \emph{Definition 1}, the following inequalities hold.

\begin{lemma}\label{error-bound-lemma}
Suppose $\rho>\alpha>0$. then we have
\begin{equation}\label{error-bound1}
\begin{split}
& \|\!\!\!\left[\!\!\!\begin{array}{l}{\v^{k}-\v^{k+1}} \\ {\e_{1}^{k}-\e_{1}^{k+1}} \\ {\e_{2}^{k}-\e_{2}^{k+1}}\end{array}\!\!\!\right]\!\!\!\|_{2}^{2}
\geq \varepsilon_{1}\|\!\!\!\left[\!\!\!\begin{array}{c}{\v^{k}\! -\!\v\left(\p^{k}, \z_{1}^{k}, \z_{2}^{k}, \y_{1}^{k}, \y_{2}^{k}\right)}
 \\ {\e_{1}^{k}\!-\!\e_{1}\left(\p^{k}, \z_{1}^{k}, \z_{2}^{k}, \y_{1}^{k}, \y_{2}^{k}\right)}
 \\ {\e_{2}^{k}\!-\!\e_{2}\left(\p^{k}, \z_{1}^{k}, \z_{2}^{k}, \y_{1}^{k}, \y_{2}^{k}\right)}\end{array}\!\!\!\right]\!\!\!\|_{2}^{2},
\end{split}
\end{equation}
\begin{equation}\label{error-bound2}
\begin{split}
& \|\!\!\!\left[\!\!\!\!\begin{array}{l}{\v^{k}\!-\!\v^{k+1}} \\ {\e_{1}^{k}\!-\!\e_{1}^{k+1}} \\ {\e_{2}^{k}\!-\!\e_{2}^{k+1}}\end{array}\!\!\!\right]\!\!\!\|_{2}^{2}
\geq \!\varepsilon_{2}\|\!\!\!\left[\!\!\!\!\begin{array}{c}{\v^{k+1}\!\! -\!\v\left(\p^{k}, \z_{1}^{k}, \z_{2}^{k}, \y_{1}^{k}, \y_{2}^{k}\right)}
 \\ {\e_{1}^{k+1}\!\!-\!\e_{1}\left(\p^{k}, \z_{1}^{k}, \z_{2}^{k}, \y_{1}^{k}, \y_{2}^{k}\right)}
 \\ {\e_{2}^{k+1}\!\!-\!\e_{2}\left(\p^{k}, \z_{1}^{k}, \z_{2}^{k}, \y_{1}^{k}, \y_{2}^{k}\right)}\end{array}\!\!\!\right]\!\!\!\|_{2}^{2},
\end{split}
\end{equation}
\begin{equation}\label{error-bound3}
\begin{split}
&\hspace{-0.45cm}\|\!\!\left[\!\!\!\begin{array}{l}{\y_{1}\!\!-\!\y_{1}^{\prime}} \\ {\y_{2}\!\!-\!\y_{2}^{\prime}}\end{array} \!\!\!\right]\!\!\! \|_{2}^{2}
\!\geq\! \varepsilon_{3}\!\|\!\!\!\left[\!\!\!\!\begin{array}{l}{\v\!\left(\!\p, \z_{1}\!, \z_{2}\!, \y_{1}\!, \y_{2}\!\right)\!\!-\!\!\v\!\left(\!\p, \z_{1}\!, \z_{2}\!, \y_{1}^{\prime}\!, \y_{2}^{\prime}\!\right)}
\\ {\e_{1}\!\!\left(\!\p, \z_{1}\!, \z_{2}\!, \y_{1}\!, \y_{2}\!\right)\!\!-\!\!\e_{1}\!\!\left(\!\p, \z_{1}\!, \z_{2}\!, \y_{1}^{\prime}\!, \y_{2}^{\prime}\!\right)}
\\ {\e_{2}\!\!\left(\!\p, \z_{1}\!, \z_{2}\!, \y_{1}\!, \y_{2}\!\right)\!\!-\!\!\e_{2}\!\!\left(\!\p, \z_{1}\!, \z_{2}\!, \y_{1}^{\prime}\!, \y_{2}^{\prime}\!\right)}\end{array}\!\!\!\!\right]\!\!\!\|_2^2,
\end{split}
\end{equation}
\begin{equation}\label{error-bound4}
\begin{split}
&\hspace{-0.45cm}\|\!\!\!\left[\!\!\!\!\begin{array}{l}{\p^{k}\!\!-\!\p^{k+1}} \\ {\z_{1}^{k}\!\!-\!\z_{1}^{k+1}} \\ {\z_{2}^{k}\!\!-\!\z_{2}^{k+1}}\end{array}\!\!\!\!\right]\!\!\!\|_{2}^{2}
\!\geq\! \varepsilon_{4}\! \|\!\!\!\left[\!\!\!\!\begin{array}{l}{\v\!\left(\!\p^{k+1}\!, \z_{1}^{k+1}\!, \z_{2}^{k+1}\!\right)\!\!-\!\!\v\!\left(\!\p^{k}\!, \z_{1}^{k}\!, \z_{2}^{k}\!\right)}
\\ {\e_{1}\!\!\left(\!\p^{k+1}\!, \z_{1}^{k+1}\!, \z_{2}^{k+1}\!\right)\!\!-\!\!\e_{1}\!\!\left(\!\p^{k}\!, \z_{1}^{k}\!, \z_{2}^{k}\!\right)}
\\ {\e_{2}\!\!\left(\!\p^{k+1}\!, \z_{1}^{k+1}\!, \z_{2}^{k+1}\!\right)\!\!-\!\!\e_{2}\!\!\left(\!\p^{k}\!, \z_{1}^{k}\!, \z_{2}^{k}\!\right)}\end{array}\!\!\!\!\right]\!\!\!\|_{2}^{2},
\end{split}
\end{equation}
\begin{equation}\label{error-bound5}
\begin{split}
 \|\!\!\!\left[\!\!\!\!\begin{array}{l}{\p^{k}\!-\!\p^{k+1}} \\ {\z_{1}^{k}\!-\!\z_{1}^{k+1}} \\ {\z_{2}^{k}\!-\!\z_{2}^{k+1}}\end{array}\!\!\!\right]\!\!\!\|_{2}^{2}
& \!\geq \! \varepsilon_{5} \|\!\!\!\left[\!\!\!\!\begin{array}{l}{\v\!\left(\p^{k+1}\!, \z_{1}^{k+1}\!, \z_{2}^{k+1}\!, \y_{1}^{k+1}\!, \y_{2}^{k+1}\!\right)}
\\ {\e_{1}\!\!\left(\p^{k+1}\!, \z_{1}^{k+1}\!, \z_{2}^{k+1}\!, \y_{1}^{k+1}\!, \y_{2}^{k+1}\!\right)}
\\ {\e_{2}\!\!\left(\p^{k+1}\!, \z_{1}^{k+1}\!, \z_{2}^{k+1}\!, \y_{1}^{k+1}\!, \y_{2}^{k+1}\!\right)}\end{array}\right. \\
&~~~~ \left.\begin{array}{l}{-\v\!\left(\u^{k}, \z_{1}^{k}, \z_{2}^{k}, \y_{1}^{k+1}, \y_{2}^{k+1}\right)}
\\ {-\e_{1}\!\!\left(\p^{k}, \z_{1}^{k}, \z_{2}^{k}, \y_{1}^{k+1}, \y_{2}^{k+1}\right)}
\\ {-\e_{2}\!\!\left(\p^{k}, \z_{1}^{k}, \z_{2}^{k}, \y_{1}^{k+1}, \y_{2}^{k+1}\right)}\end{array}\!\!\!\!\right]\!\!\!\|_2^2,
\end{split}
\end{equation}
where
 {\setlength\abovedisplayskip{1pt}
 \setlength\belowdisplayskip{1pt}
  \setlength\jot{1pt}
\begin{equation}\label{varepsilon}
\begin{split}
&\varepsilon_1=\frac{(\rho-\alpha)^2}{(\rho+L+2)^2}, \  \varepsilon_2=\frac{(\rho-\alpha)^2}{(2\rho+L+2-\alpha)^2}, \\ &\varepsilon_3=\frac{(\rho-\alpha)^2}{\delta_{\A\I}^{2}}, \
\varepsilon_4=\frac{(\rho-\alpha)^2}{\rho^2}, \ \varepsilon_5=\frac{(\rho-\alpha)^2}{\rho^2}.
\end{split}
\end{equation}
Moreover,} if
$$\|\!\!\left[\!\!\!\begin{array}{c}{\A \v\!\left(\p,\! \z_{1},\! \z_{2},\! \y_{1},\! \y_{2}\right)\!\!+\!\!\e_{1}\!\left(\p,\! \z_{1},\! \z_{2},\! \y_{1},\! \y_{2}\right)\!\!-\!\!\b}
\\ {\v\!\left(\p,\! \z_{1},\! \z_{2},\! \y_{1},\! \y_{2}\right)\!-\!\e_{2}\!\left(\p,\! \z_{1},\! \z_{2},\! \y_{1},\! \y_{2}\right)}\end{array}\!\!\!\right]\!\!\|_2 \!\leq\! \Delta,$$
and
$$\|\!\!\left[\!\!\!\!\begin{array}{l}{\v-\p} \\ {\e_{1}-\z_1}
\\ {\e_{2}-\z_2}\end{array}\!\!\!\!\right]\!\!\|_2 \leq \Delta,$$
where $\Delta>0$ is some constant, then there exists $\varepsilon_{6}>0$ such that
 {\setlength\abovedisplayskip{1pt}
 \setlength\belowdisplayskip{1pt}
  \setlength\jot{1pt}
\begin{equation}\label{error-bound6}
\begin{split}
& \|\!\!\left[\!\!\begin{array}{l}{\y_{1}}-\y_1^{*}\left(\p, \z_{1}, \z_{2}\right) \\ {\y_{2}- \y_2^{*}\left(\u, \z_{1}, \z_{2}\right)}\end{array}\right]\|_2^2 \\
\leq &  \varepsilon_{6}\|\!\!\left[\!\!\!\!\begin{array}{c}{\A \v\!\left(\!\p, \z_{1}\!, \z_{2}\!, \y_{1}\!, \y_{2}\!\right)\!+\!\e_{1}\left(\!\p, \z_{1}\!, \z_{2}\!, \y_{1}\!, \y_{2}\!\right)\!-\!\b}
\\ {\v\!\left(\p, \z_{1}, \z_{2}, \y_{1}, \y_{2}\right)-\e_{2}\left(\p, \z_{1}, \z_{2}, \y_{1}, \y_{2}\right)}\end{array}\!\!\!\!\right]\!\!\|_{2}^{2},
\end{split}
\end{equation}
where} ${\y_1^{*}\left(\p, \z_{1}, \z_{2}\right)}$ and ${\y_2^{*}\left(\p, \z_{1}, \z_{2}\right)}$ are the solution sets of dual multipliers for problem \eqref{P-define}.
\end{lemma}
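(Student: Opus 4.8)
The unifying device behind all six inequalities is the strong convexity of $\mathcal{F}(\cdot)$ in the block $(\v,\e_1,\e_2)$. The plan is to first verify this explicitly: differentiating \eqref{F-define} twice in $(\v,\e_1,\e_2)$ yields the Hessian
\begin{equation*}
(\rho-\alpha)\I+\mu\begin{bmatrix}\A & \I_M & \mathbf{0}\\ \I_N & \mathbf{0} & -\I_N\end{bmatrix}^{T}\begin{bmatrix}\A & \I_M & \mathbf{0}\\ \I_N & \mathbf{0} & -\I_N\end{bmatrix}+\alpha\,{\rm diag}(\mathbf{0},\I,\I)\succeq(\rho-\alpha)\I,
\end{equation*}
so that, whenever $\rho>\alpha$, $\mathcal{F}$ is $(\rho-\alpha)$-strongly convex jointly in $(\v,\e_1,\e_2)$. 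This single fact explains the numerator $(\rho-\alpha)^2$ in every constant of \eqref{varepsilon} and, through the coupling matrix above, the appearance of $\delta_{\A\I}$ in $\varepsilon_3$. Combined with the Lipschitz continuity of $\nabla_\v\mathcal{L}_\mu$ from Lemma \ref{x-Lipschitz-constant}, it furnishes the two-sided control needed throughout.

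For \eqref{error-bound1} and \eqref{error-bound2} I would compare one proximal-ADMM sweep with the exact joint minimizer $(\v(\cdot),\e_1(\cdot),\e_2(\cdot))$ of $\mathcal{F}$ at the frozen parameters $(\p^k,\z_1^k,\z_2^k,\y_1^k,\y_2^k)$ from \eqref{x-D-define}. Writing the first-order (variational-inequality) conditions for the block updates \eqref{lp-x-update-solution}, \eqref{v1-solution-component}, \eqref{v2-solution-component} and for the joint minimizer, subtracting them, and applying $(\rho-\alpha)$-strong convexity bounds the distance to the minimizer from above by the residual gradient; the Lipschitz constant $L$ together with the proximal coefficient $\rho$ and the two nonexpansive projections $\Pi_{[0,+\infty]}$, $\Pi_{[0,1]}$ (contributing the additive ``$2$'') then bounds that residual by the sweep length, producing the denominators $(\rho+L+2)$ and $(2\rho+L+2-\alpha)$. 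The only subtlety is treating the projection steps through nonexpansiveness so the inclusions for $\e_1,\e_2$ can be manipulated as if they were equalities.

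Inequalities \eqref{error-bound3}--\eqref{error-bound5} are Lipschitz-stability estimates of the argmin maps in \eqref{x-D-define} and \eqref{x-P-define} with respect to their parameters. For each I would fix two parameter values, write the optimality conditions of the two associated minimizers, subtract, and again use $(\rho-\alpha)$-strong convexity to isolate the minimizer difference on one side while the parameter difference enters linearly through the relevant coupling operator. In \eqref{error-bound3} the multipliers $(\y_1,\y_2)$ couple to $(\v,\e_1,\e_2)$ precisely through $\left[\begin{smallmatrix}\A & \I_M & \mathbf{0}\\ \I_N & \mathbf{0} & -\I_N\end{smallmatrix}\right]$, whence $\delta_{\A\I}$ in $\varepsilon_3$; in \eqref{error-bound4} and \eqref{error-bound5} the proximal centers $(\p,\z_1,\z_2)$ couple through the identity with coefficient $\rho$, giving the $\rho^2$ denominators. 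These steps are routine once the optimality systems are written down.

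The main obstacle is \eqref{error-bound6}, the dual error bound stating that the gap between the current multipliers and the optimal dual set $(\y_1^*,\y_2^*)(\p,\z_1,\z_2)$ of the equality-constrained problem \eqref{P-define} is controlled by the primal residual. Unlike the other five bounds it claims no closed-form constant, signalling that strong convexity alone is insufficient. The plan is to invoke a Hoffman-type (Lipschitzian) error bound for polyhedral systems: the KKT conditions characterizing the optimal multipliers form a polyhedral set, so metric regularity yields a finite modulus $\varepsilon_6>0$, and the two hypotheses $\|\cdot\|_2\leq\Delta$ restrict attention to a bounded region on which the dual is nonempty and the Hoffman constant applies uniformly. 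Verifying that the multiplier system is indeed polyhedral and that the $\Delta$-neighborhood keeps the problem in a regime of uniform metric regularity is the delicate part; once this is granted, the bound follows by the standard polyhedral error-bound argument.
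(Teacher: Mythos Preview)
Your proposal is essentially correct and follows the same route as the paper for \eqref{error-bound1}--\eqref{error-bound5}: the paper's proof also rests on the $(\rho-\alpha)$-strong convexity of $\mathcal{F}$ in $(\v,\e_1,\e_2)$, writes the block minimizers as fixed points of the projected gradient map, bounds a residual vector by $(\rho+L+1)$ times the sweep length, and then invokes the strongly convex error bound of \cite{global-error-bound} together with nonexpansiveness of the projections; \eqref{error-bound2} is then obtained from \eqref{error-bound1} by a triangle inequality, and \eqref{error-bound3}--\eqref{error-bound5} are handled by the parameter-perturbation argument you describe (the paper simply cites the analogous estimates (3.6)--(3.8) in \cite{proximal-admm}). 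One small inaccuracy: the additive ``$2$'' in $(\rho+L+2)$ does not come from counting two projections; it arises because the residual bound contributes $(\rho+L+1)$ and moving the sweep length to the left side adds one more unit.

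The only substantive difference is \eqref{error-bound6}. You propose a Hoffman-type polyhedral error bound with the $\Delta$-hypotheses ensuring a uniform modulus. The paper instead argues that, since $g(\v,\e_1,\e_2)=\pmb{\lambda}^T\v-\tfrac{\alpha}{2}\|\v-0.5\|_2^2$ is Lipschitz differentiable, the equality-constrained problem \eqref{P-define} satisfies a \emph{strict complementarity} condition (Proposition~2.3 of \cite{proximal-admm}), and then transfers the dual error-bound argument around (3.9) of \cite{proximal-admm} verbatim. Both mechanisms lead to the existence of some $\varepsilon_6>0$; your route is more self-contained but needs care to confirm that the Hoffman constant is uniform as $(\p,\z_1,\z_2)$ ranges over the $\Delta$-neighborhood, while the paper's route sidesteps this by leaning on the structural result already established in \cite{proximal-admm}.
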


\begin{proof}
See Appendix \ref{error-bound-proof}.
\end{proof}

To save space, throughout the whole proof we denote functions $\mathcal{F}$, $\mathcal{D}$ and $\mathcal{P}$ at the $kth$ iteration by
\[
\begin{split}
&\F^{k} := \F\left(\v^{k}, \e_{1}^{k}, \e_{2}^{k}, \p^{k}, \z_{1}^{k}, \z_{2}^{k}, \y_{1}^{k}, \y_{2}^{k}\right), \\
&\D^{k} := \D\left(\p^{k}, \z_{1}^{k}, \z_{2}^{k}, \y_{1}^{k}, \y_{2}^{k}\right), \\
&\P^{k} := \P\left(\p^{k}, \z_{1}^{k}, \z_{2}^{k}\right),
\end{split}
\]
respectively. Using the above abbreviations, we further introduce the following lemma.
\begin{lemma}\label{F-D-P-change}
Let $\alpha\leq\mu\lambda_{\min}(\A^T\A)$. Then, the following inequalities hold
 {\setlength\abovedisplayskip{1pt}
 \setlength\belowdisplayskip{1pt}
  \setlength\jot{1pt}
\begin{equation}\label{F-change}
\begin{split}
 \F^{k}\!-\!\F^{k+1}\!\! \geq \!& \frac{\rho\!+\!\mu}{2}\|\!\!\!\left[\!\!\!\!\begin{array}{l}{\v^{k}\!-\!\v^{k+1}} \\ {\e_{1}^{k}\!-\!\e_{1}^{k+1}} \\ {\e_{2}^{k}\!-\!\e_{2}^{k+1}}\end{array}\!\!\!\right]\!\!\!\|_{2}^{2}
\!+\!\!\frac{\rho}{2 \beta}\!\|\!\!\!\left[\!\!\!\begin{array}{l}{\p^{k}\!-\!\p^{k+1}} \\ {\z_{1}^{k}\!-\!\z_{1}^{k+1}} \\ {\z_{2}^{k}\!-\!\z_{2}^{k+1}}\end{array}\!\!\!\right]\!\!\!\|_{2}^{2} \\
&\!-\!\mu\|\!\!\left[\!\!\!\begin{array}{c}{\A \v^{k}+\e_{1}^{k}-\b} \\ {\v^{k}-\e_{2}^{k}}\end{array}\!\!\!\right]\!\!\|_{2}^{2},
\end{split}
\end{equation}
\begin{equation}\label{D-change}
\begin{split}
 & \D^{k+1}\!\!-\!\!\D^{k}
\!\geq\!  \mu\!\!\left[\!\!\!\!\begin{array}{c}{\A\v^{k}\!+\!\e_{1}^{k}\!-\!\b} \\ {\v^{k}\!-\!\e_{2}^{k}}\end{array}\!\!\!\right]^{T}\!\!\!\!\! \pmb{\phi}^{k}
\!\!+\!\!\frac{\rho}{2}\left[\!\!\!\!\begin{array}{c}{\p^{k+1}\!-\!\p^{k}} \\ {\z_{1}^{k+1}\!-\!\z_{1}^{k}} \\ {\z_{2}^{k+1}\!-\!\z_{2}^{k}}\end{array}\!\!\!\right]^{T}\!\!\!\!\! \pmb{\psi}^{k},
\end{split}
\end{equation}
\begin{equation}\label{P-change}
\begin{split}
 \P^{k}\!-\! \P^{k+1} \geq  & \rho\!\left[\!\!\!\begin{array}{c}{\p^{k+1}\!-\!\p^{k}} \\ {\z_{1}^{k+1}\!-\!\z_{1}^{k}} \\ {\z_{2}^{k+1}\!-\!\z_{2}^{k}}\end{array}\!\!\!\right]^{T}\!\!\!
\left[\!\!\!\!\begin{array}{l}{\v\left(\p^{k}, \z_{1}^{k}, \z_{2}^{k}\right)-\p^{k}} \\ {\e_{1}\left(\p^{k}, \z_{1}^{k}, \z_{2}^{k}\right)-\z_{1}^{k}}
\\ {\e_{2}\left(\p^{k}, \z_{1}^{k}, \z_{2}^{k}\right)-\z_{2}^{k}}\end{array}\!\!\!\!\right] \\
&-\frac{\rho\eta}{2}\|\!\!\!\left[\!\!\!\begin{array}{c}{\p^{k+1}-\p^{k}} \\ {\z_{1}^{k+1}-\z_{1}^{k}} \\ {\z_{2}^{k+1}-\z_{2}^{k}}\end{array}\!\!\!\right]\!\!\!\|_2^2,
\end{split}
\end{equation}
where} $\eta$ and ``$\pmb{\phi}^{k}$'' and ``$\pmb{\psi}^{k}$'' are defined as follows
\begin{subequations}\label{eta D-part1-change D-part2-change}
 \begin{align}
  &\eta=1+\frac{1}{\sqrt{\epsilon_4}}, \\
  &\pmb{\phi}^{k}\!\!=\!\!\left[\!\!\!\begin{array}{c}{\A\!\v\!\left(\p^{k}\!,\! \z_{1}^{k}\!,\!\z_{2}^{k}\!,\!\y_{1}^{k+1}\!,\! \y_{2}^{k+1}\!\right)\!+\!\e_{1}\!\left(\p^{k}\!,\!\z_{1}^{k}\!,\!\z_{2}^{k}\!,\! \y_{1}^{k+1}\!,\!\y_{2}^{k+1}\!\right)\!\!-\!\b}
\\ {\v\!\left(\p^{k}\!,\!\z_{1}^{k},\!\z_{2}^{k},\!\y_{1}^{k+1}, \y_{2}^{k+1}\right)\!-\!\e_{2}\left(\p^{k},\!\z_{1}^{k},\!\z_{2}^{k},\! \y_{1}^{k+1},\!\y_{2}^{k+1}\right)}\end{array}\!\!\!\right], \label{D-part1-change}\\
  &\pmb{\psi}^{k}\!\!=\!\!\left[\!\!\!\begin{array}{c}{\p^{k+1}\!+\!\p^{k}\!-\!2 \v\left(\p^{k+1},\!\z_{1}^{k+1},\!\z_{2}^{k+1},\!\y_{1}^{k+1},\! \y_{2}^{k+1}\!\right)}
\\ {\z_{1}^{k+1}\!+\!\z_{1}^{k}\!-\!2\e_{1}\!\left(\p^{k+1},\!\z_{1}^{k+1},\! \z_{2}^{k+1},\!\y_{1}^{k+1},\! \y_{2}^{k+1}\!\right)}
\\ {\z_{2}^{k+1}\!+\!\z_{2}^{k}\!-\!2\e_{2}\!\left(\p^{k+1},\!\z_{1}^{k+1},\! \z_{2}^{k+1},\!\y_{1}^{k+1},\!\y_{2}^{k+1}\!\right)}\end{array}\!\!\!\right].
\end{align}
\end{subequations}
\end{lemma}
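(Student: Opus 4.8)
The plan is to treat \eqref{F-change}, \eqref{D-change}, and \eqref{P-change} uniformly as one-iteration variations of the three merit functions in \emph{Definition 1}. The single ingredient driving all three is the strong convexity of $\F$ in the primal block $(\v,\e_1,\e_2)$: the $\v$-Hessian of $\F$ equals $(\mu+\rho-\alpha)\I+\mu\A^T\A$, and the hypothesis $\alpha\leq\mu\lambda_{\min}(\A^T\A)$ forces $\mu\A^T\A-\alpha\I\succeq\mathbf{0}$, so this Hessian is at least $(\mu+\rho)\I$; the $\e_1$- and $\e_2$-Hessians are each exactly $(\mu+\rho)\I$. Hence $\F$ is $(\mu+\rho)$-strongly convex in every block, its block minimizers are unique, and the value functions $\D$ and $\P$ are well defined and differentiable (Danskin/envelope theorem). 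I would record this $(\mu+\rho)$ modulus once and reuse it.

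For \eqref{F-change} I would telescope $\F^{k}-\F^{k+1}$ across the three substeps of one iteration. First, the primal block update: because $\mathcal{L}_{\mu}$ carries no cross term between $\e_1$ and $\e_2$, the parallel $\e_1$/$\e_2$ updates at fixed $\v^{k+1}$ decouple into two independent strongly convex minimizations, so strong convexity yields the aggregate descent $\tfrac{\mu+\rho}{2}\big(\|\v^k-\v^{k+1}\|_2^2+\|\e_1^k-\e_1^{k+1}\|_2^2+\|\e_2^k-\e_2^{k+1}\|_2^2\big)$. Second, the proximal-center averaging: substituting $\p^{k+1}=(1-\beta)\p^{k}+\beta\v^{k+1}$ (and likewise for $\z_1,\z_2$) and using $\tfrac{\rho}{2}\big(\|\v^{k+1}-\p^{k}\|_2^2-\|\v^{k+1}-\p^{k+1}\|_2^2\big)=\tfrac{\rho(2-\beta)}{2\beta}\|\p^{k+1}-\p^{k}\|_2^2\geq\tfrac{\rho}{2\beta}\|\p^{k+1}-\p^{k}\|_2^2$, valid since $0<\beta\leq1$, produces the second term. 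Third, the dual ascent contributes exactly $-\mu$ times the squared constraint residual, obtained by inserting $\y_i^{k+1}-\y_i^{k}=\mu(\cdot)$ into the linear multiplier terms of $\F$; this is the lone negative term.

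For \eqref{D-change} and \eqref{P-change} I would argue through the value functions directly. In \eqref{D-change} I split the change into a $(\p,\z)$-move at fixed $\y^{k+1}$ and a pure $\y$-move at fixed $(\p^k,\z^k)$. The $(\p,\z)$-move I bound by suboptimality: the new $\D$-minimizer is feasible for the old $\D$-problem, and expanding only the proximal terms of $\F$ gives the quadratic form $\tfrac{\rho}{2}[\cdots]^T\pmb{\psi}^{k}$ with $\pmb{\psi}^{k}$ as in \eqref{eta D-part1-change D-part2-change}. The $\y$-move I handle by noting that $\D(\p^k,\z^k,\cdot)$ is concave, being a pointwise minimum of functions affine in $\y$; the first-order concavity inequality, the envelope identity $\nabla_{\y}\D=[\A\v(\cdot)+\e_1(\cdot)-\b;\ \v(\cdot)-\e_2(\cdot)]=\pmb{\phi}^{k}$, and the dual update rule together yield the term $\mu[\cdots]^T\pmb{\phi}^{k}$. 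For \eqref{P-change}, since the feasibility set $\{\A\v+\e_1=\b,\ \v=\e_2,\ \e_1\succeq\mathbf{0},\ \mathbf{0}\preceq\e_2\preceq\mathbf{1}\}$ does not involve $(\p,\z)$, the $\P$-minimizer at $(\p^k,\z^k)$ remains feasible at $(\p^{k+1},\z^{k+1})$; plugging it in and expanding $\|\v-\p^{k+1}\|_2^2=\|\v-\p^{k}\|_2^2-2(\v-\p^{k})^T(\p^{k+1}-\p^{k})+\|\p^{k+1}-\p^{k}\|_2^2$ produces the linear term $\rho[\cdots]^T[\v(\p^k,\z^k)-\p^k;\cdots]$ together with a penalty $-\tfrac{\rho}{2}\|\cdots\|_2^2$, which is dominated by $-\tfrac{\rho\eta}{2}\|\cdots\|_2^2$ since $\eta\geq1$.

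The main obstacle will be the rigorous handling of the value functions $\D$ and $\P$ in the envelope and suboptimality arguments: one must confirm that $(\mu+\rho)$-strong convexity truly makes the inner minimizers single-valued and the value functions continuously differentiable, so that the envelope gradients ($\pmb{\phi}^{k}$ for $\D$, and the $\p,\z$-gradients for both $\D$ and $\P$) are legitimate, and one must keep scrupulous track of \emph{which} iterate each minimizer $\v(\cdot),\e_1(\cdot),\e_2(\cdot)$ is evaluated at. In particular the actual Jacobi iterates $(\v^{k+1},\e_1^{k+1},\e_2^{k+1})$ are \emph{not} the joint $\D$-minimizer, which is precisely why the companion estimates of \emph{Lemma \ref{error-bound-lemma}} are needed downstream. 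By comparison, the index bookkeeping ($k$ versus $k+1$ on the residual) and the constant-chasing are routine.
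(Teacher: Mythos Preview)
Your treatment of \eqref{F-change} and \eqref{D-change} is essentially the paper's proof: the paper also telescopes $\F^k-\F^{k+1}$ across the five substeps, invokes the same $(\rho+\mu)$-strong-convexity bound block by block (using $\alpha\leq\mu\lambda_{\min}(\A^T\A)$ exactly as you do to absorb the $-\alpha$ into the $\mu\A^T\A$ part), computes the proximal-averaging drop via the identity $\tfrac{\rho}{2}(2/\beta-1)\|\cdot\|_2^2\geq\tfrac{\rho}{2\beta}\|\cdot\|_2^2$, and reads off the $-\mu\|\text{residual}\|_2^2$ term from the dual update. For $\D$, the paper likewise splits $\D^{k+1}-\D^k$ into a $(\p,\z)$-move and a $\y$-move and bounds each by suboptimality of the new minimizer in the old problem; your concavity/envelope phrasing for the $\y$-move is exactly equivalent since $\F$ is affine in $\y$.

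Your route to \eqref{P-change}, however, is genuinely different and more elementary than the paper's. The paper proceeds by (i) computing $\nabla\P$ via Danskin's theorem, (ii) proving this gradient is Lipschitz with constant $\rho\eta=\rho(1+1/\sqrt{\varepsilon_4})$ by invoking the error bound \eqref{error-bound4} of \emph{Lemma~\ref{error-bound-lemma}}, and (iii) applying the descent lemma for Lipschitz-gradient functions. You instead plug the old $\P$-minimizer into the new $\P$-problem (feasible because the constraint set is independent of $(\p,\z)$), expand the proximal quadratics, and obtain directly $\P^k-\P^{k+1}\geq\rho[\cdots]^T[\cdots]-\tfrac{\rho}{2}\|\cdots\|_2^2$, then relax $\tfrac{\rho}{2}$ to $\tfrac{\rho\eta}{2}$ using $\eta\geq1$. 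Your argument is self-contained (it does not draw on \emph{Lemma~\ref{error-bound-lemma}} at this stage) and in fact yields a sharper constant before relaxation. The paper's approach buys an explicit identification of the Lipschitz modulus $\rho\eta$ for $\nabla\P$, which is of some independent interest, but for the purpose of proving \eqref{P-change} your suboptimality argument is both correct and cleaner.
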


\begin{proof}
See Appendix \ref{descent3-proof}.
\end{proof}

Now we are ready to prove \emph{Theorem \ref{converge-proof-theorem}}.

\begin{proof}
First, we define a potential function as follows
\begin{equation}\label{Psi-function}
\begin{split}
& \Psi=\F-2 \D+2 \P.
\end{split}
\end{equation}
The key to proving convergence of the proposed proximal-ADMM algorithm is to verify that the function $\Psi$ not only {\it decreases  sufficiently} in each iteration but also is lower-bounded.

Based on \eqref{F-change}-\eqref{P-change} in \emph{Lemma \ref{F-D-P-change}}, we obtain
\begin{equation}\label{Psi-change1}
\begin{split}
 & \Psi^{k}-\Psi^{k+1} \\
= & \left(\F^{k}-\F^{k+1}\right)+2\left(\D^{k+1}-\D^{k}\right)+2\left(\P^{k}-\P^{k+1}\right) \\
\geq & \frac{\rho\!+\!\mu}{2} \|\!\!\!\left[\!\!\begin{array}{l}{\v^{k}-\v^{k+1}} \\ {\e_{1}^{k}-\e_{1}^{k+1}} \\ {\e_{2}^{k}-\e_{2}^{k+1}}\end{array}\!\!\!\right]\!\!\!\|_2^2
+\!\left(\!\frac{\rho}{2 \beta}-\rho \eta\!\right)\|\!\!\!\left[\!\!\!\begin{array}{c}{\p^{k+1}-\p^{k}} \\ {\z_{1}^{k+1}-\z_{1}^{k}} \\ {\z_{2}^{k+1}-\z_{2}^{k}}\end{array}\!\!\!\right]\!\!\!\|_2^2\\
&\!+\!\rho\! \left[\!\!\!\begin{array}{c}{\p^{k+1}-\p^{k}} \\ {\z_{1}^{k+1}-\z_{1}^{k}} \\ {\z_{2}^{k+1}-\z_{2}^{k}}\end{array}\!\!\!\right]^{T}\!\!\!
\left(\!\!\pmb{\psi}^k+2\left[\!\!\!\!\begin{array}{c}{\v\left(\p^{k}, \z_{1}^{k}, \z_{2}^{k}\right)-\p^{k}} \\ {\e_{1}\left(\p^{k}, \z_{1}^{k}, \z_{2}^{k}\right)-\z_{1}^{k}}
\\ {\e_{2}\left(\p^{k}, \z_{1}^{k}, \z_{2}^{k}\right)-\z_{2}^{k}}\end{array}\!\!\!\!\right]\!\right) \\
&-\mu\left(\!\!\|\!\!\left[\!\!\!\begin{array}{c}{\A \v^{k}+\e_{1}^{k}-\b} \\ {\v^{k}-\e_{2}^{k}}\end{array}\!\!\!\right]\!\!\|_{2}^{2}
\!-\!2\left[\!\!\!\begin{array}{c}{\A\v^{k}\!+\!\e_{1}^{k}\!-\!\b} \\ {\v^{k}\!-\!\e_{2}^{k}}\end{array}\!\!\!\right]^{T}\!\!\!\! \pmb{\phi}^k\!\right).
\end{split}
\end{equation}
For the last term of \eqref{Psi-change1}, we have the following derivations
\[
\begin{split}
\hspace{-0.35cm}
& \|\!\!\left[\!\!\!\begin{array}{c}{\A \v^{k}+\e_{1}^{k}-\b} \\ {\v^{k}-\e_{2}^{k}}\end{array}\!\!\!\right]\!\!\|_{2}^{2}
\!-\!2\left[\!\!\!\begin{array}{c}{\A\v^{k}\!+\!\e_{1}^{k}\!-\!\b} \\ {\v^{k}\!-\!\e_{2}^{k}}\end{array}\!\!\!\right]^{T}\!\!\!\! \pmb{\phi}^k \\
\hspace{-0.35cm}
=&\|\!\!\left[\!\!\!\begin{array}{ccc}{\A} \!\!&\!\! {\I_{M}} \!\!\!&\!\!\! {\mathbf{0}} \\ {\I_{N}} \!\!&\!\! {\mathbf{0}} \!\!\!&\!\!\! {-\I_{N}}\end{array}\!\!\!\!\right]\!\!\!
\left[\!\!\!\!\begin{array}{c}{\v^{k}\!\!-\!\!\v\!\left(\!\p^{k}, \z_{1}^{k}, \z_{2}^{k}, \y_{1}^{k+1}\!\!, \y_{2}^{k+1}\!\right)}
\\ {\e_{1}^{k}\!\!-\!\!\e_{1}\!\left(\!\p^{k}, \z_{1}^{k}, \z_{2}^{k}, \y_{1}^{k+1}\!\!, \y_{2}^{k+1}\!\right)}
\\ {\e_{2}^{k}\!\!-\!\!\e_{2}\!\left(\!\p^{k}, \z_{1}^{k}, \z_{2}^{k}, \y_{1}^{k+1}\!\!, \y_{2}^{k+1}\!\right)}\end{array}\!\!\!\!\right]
\!\!\!\|_2^2 \!-\!\! \|\pmb{\phi}^k\|_2^{2} \\
\hspace{-0.35cm}
 \leq &  \delta_{\A\I}^{2}\|\!\!\left[\!\!\!\!\begin{array}{c}{\v^{k}\!\!-\!\!\v\!\left(\!\p^{k}, \z_{1}^{k}, \z_{2}^{k}, \y_{1}^{k+1}\!\!, \y_{2}^{k+1}\!\right)}
\\ {\e_{1}^{k}\!\!-\!\!\v_{1}\!\left(\!\p^{k}, \z_{1}^{k}, \z_{2}^{k}, \y_{1}^{k+1}\!\!, \y_{2}^{k+1}\!\right)}
\\ {\e_{2}^{k}\!\!-\!\!\v_{2}\!\left(\!\p^{k}, \z_{1}^{k}, \z_{2}^{k}, \y_{1}^{k+1}\!\!, \y_{2}^{k+1}\!\right)}\end{array}\!\!\!\!\right]\!\!\|_2^2\!-\! \|\pmb{\phi}^k\|_2^{2}.
\end{split}
\]
From \eqref{error-bound1} in Lemma \ref{error-bound-lemma}, we can further get
\begin{equation}\label{Psi-change11}
\begin{split}
& \|\!\!\left[\!\!\!\begin{array}{c}{\A \v^{k}+\e_{1}^{k}-\b} \\ {\v^{k}-\e_{2}^{k}}\end{array}\!\!\!\right]\!\!\|_{2}^{2}
\!-\!2\left[\!\!\!\begin{array}{c}{\A\v^{k}\!+\!\e_{1}^{k}\!-\!\b} \\ {\v^{k}\!-\!\e_{2}^{k}}\end{array}\!\!\!\right]^{T}\!\!\!\! \pmb{\phi}^k \\
\leq & \frac{\delta_{\A\I}^{2}}{\varepsilon_{1}}
\|\!\!\left[\!\!\begin{array}{l}{\v^{k}-\v^{k+1}} \\ {\e_{1}^{k}-\e_{1}^{k+1}} \\ {\e_{2}^{k}-\e_{2}^{k+1}}\end{array}\!\!\!\right]\!\!\!\|_2^2\!-\! \|\pmb{\phi}^k\|_2^{2}.
\end{split}
\end{equation}
Plugging \eqref{Psi-change11} into \eqref{Psi-change1}, the inequality can be revised as
\begin{equation}\label{Psi-change2}
\begin{split}
\hspace{-0.4cm}  & \Psi^{k}-\Psi^{k+1} \\
\hspace{-0.4cm} \geq & \left(\!\!\frac{\rho\!\!+\!\!\mu}{2}\!\!-\!\mu\!\frac{\delta_{\A\I}^{2}}{\varepsilon_{1}}\!\!\right)\!\!
\|\!\!\!\left[\!\!\!\!\begin{array}{l}{\v^{k}\!-\!\v^{k+1}\!} \\ {\e_{1}^{k}\!-\!\e_{1}^{k+1}\!} \\ {\e_{2}^{k}\!-\!\e_{2}^{k+1}\!}\end{array}\!\!\!\right]\!\!\!\|_2^2
\!+\!\!\left(\!\!\frac{\rho}{2 \beta}\!-\!\rho \eta\!\right)\!\!\|\!\!\!\left[\!\!\!\!\begin{array}{c}{\p^{k+1}\!-\!\p^{k}} \\ {\z_{1}^{k+1}\!-\!\z_{1}^{k}} \\ {\z_{2}^{k+1}\!-\!\z_{2}^{k}}\end{array}\!\!\!\right]\!\!\!\|_2^2\\
\hspace{-0.47cm} &\!+\! \mu\|\pmb{\phi}^k\|_2^{2}
\!\!+\!\!\rho\! {\left[\!\!\!\!\begin{array}{c}{\p^{k+1}\!\!-\!\mathbf{p}^{k}} \\ {\z_{1}^{k+1}\!\!-\!\z_{1}^{k}} \\ {\z_{2}^{k+1}\!-\!\z_{2}^{k}}\end{array}\!\!\!\!\right]^{T}\!\!\!\!\!
\left(\!\!\pmb{\psi}^k\!\!+\!\!2\!\left[\!\!\!\!\begin{array}{c}{\v\left(\p^{k}, \z_{1}^{k}, \z_{2}^{k}\right)\!-\!\p^{k}} \\ {\v_{1}\!\left(\p^{k}, \z_{1}^{k}, \z_{2}^{k}\right)\!-\!\z_{1}^{k}}
\\ {\v_{2}\!\left(\p^{k}, \z_{1}^{k}, \z_{2}^{k}\right)\!-\!\z_{2}^{k}}\end{array}\!\!\!\!\right]\!\right)}.
\end{split}
\end{equation}

To facilitate derivations later, we define
\begin{equation*}
\begin{split}
& \mathcal{X}^{k}\! :=\!\! \left[\!\!\!\!\begin{array}{l}{\v\left(\p^{k+1}, \z_{1}^{k+1}, \z_{2}^{k+1}, \y_{1}^{k+1}, \y_{2}^{k+1}\right)\!-\!\v\left(\p^{k}, \z_{1}^{k}, \z_{2}^{k}\right)}
\\ {\e_{1}\left(\p^{k+1}, \z_{1}^{k+1}, \z_{2}^{k+1}, \y_{1}^{k+1}, \y_{2}^{k+1}\right)\!-\!\e_{1}\left(\p^{k}, \z_{1}^{k}, \z_{2}^{k}\right)}
\\ {\e_{2}\left(\p^{k+1}, \z_{1}^{k+1}, \z_{2}^{k+1}, \y_{1}^{k+1}, \y_{2}^{k+1}\right)\!-\!\e_{2}\left(\p^{k}, \z_{1}^{k}, \z_{2}^{k}\right)}\end{array}\!\!\!\!\right].
\end{split}
\end{equation*}
Then, the last term in \eqref{Psi-change2} can be rewritten as \eqref{Psi-change22}.
Applying property $2ab\leq a^{2} / \lambda^{2}+\lambda^{2} b^{2}$, we derive ``$\eqref{Psi-change22}-(a)$'' as
\begin{figure*}
\begin{equation}\label{Psi-change22}
\begin{split}
\left[\!\!\!\!\begin{array}{c}{\p^{k+1}\!-\p^{k}} \\ {\z_{1}^{k+1}\!-\z_{1}^{k}} \\ {\z_{2}^{k+1}\!-\z_{2}^{k}}\end{array}\!\!\!\!\right]^{T}\!\!\!\!
& \left(\!\!\pmb{\psi}^k+2\!\left[\!\!\!\!\begin{array}{c}{\v\left(\p^{k}, \z_{1}^{k}, \z_{2}^{k}\right)-\p^{k}} \\ {\e_{1}\left(\p^{k}, \z_{1}^{k}, \z_{2}^{k}\right)-\z_{1}^{k}}
\\ {\e_{2}\left(\p^{k}, \z_{1}^{k}, \z_{2}^{k}\right)-\z_{2}^{k}}\end{array}\!\!\!\!\right]\!\right)
=\|\!\!\left[\!\!\!\!\begin{array}{c}{\p^{k+1}\!-\!\p^{k}} \\ {\z_{1}^{k+1}\!-\!\z_{1}^{k}} \\ {\z_{2}^{k+1}\!-\!\z_{2}^{k}}\end{array}\!\!\!\right]\!\!\!\|_2^2
-\underbrace{2\left[\!\!\!\!\begin{array}{c}{\p^{k+1}\!-\mathbf{p}^{k}} \\ {\z_{1}^{k+1}\!-\z_{1}^{k}} \\ {\z_{2}^{k+1}\!-\z_{2}^{k}}\end{array}\!\!\!\right]^{T}\!\!\!\!\!\mathcal{X}^{k}\!\!}_{\eqref{Psi-change22}-(a)}\\
&~~~~~~~~~~ -2\underbrace{\left[\!\!\!\!\begin{array}{c}{\p^{k+1}\!-\mathbf{p}^{k}} \\ {\z_{1}^{k+1}\!-\z_{1}^{k}} \\ {\z_{2}^{k+1}\!-\z_{2}^{k}}\end{array}\!\!\!\right]^{T}\!\!\!
\left[\!\!\!\begin{array}{c}{\v\left(\p^{k}, \z_{1}^{k}, \z_{2}^{k}, \y_{1}^{k+1}, \y_{2}^{k+1}\right)-\v\left(\p^{k}, \z_{1}^{k}, \z_{2}^{k}\right)}
\\ {\e_{1}\left(\p^{k}, \z_{1}^{k}, \z_{2}^{k}, \y_{1}^{k+1}, \y_{2}^{k+1}\right)-\e_{1}\left(\p^{k}, \z_{1}^{k}, \z_{2}^{k}\right)}
\\ {\e_{2}\left(\p^{k}, \z_{1}^{k}, \z_{2}^{k}, \y_{1}^{k+1}, \y_{2}^{k+1}\right)-\e_{2}\left(\p^{k}, \z_{1}^{k}, \z_{2}^{k}\right)}\end{array}\!\!\!\right]}_{\eqref{Psi-change22}-(b)}.
\end{split}
\end{equation}
\end{figure*}
\begin{equation}\label{Psi-change222}
\begin{split}
2\left[\!\!\!\!\begin{array}{c}{\p^{k+1}\!-\mathbf{p}^{k}} \\ {\z_{1}^{k+1}\!-\z_{1}^{k}} \\ {\z_{2}^{k+1}\!-\z_{2}^{k}}\end{array}\!\!\!\!\right]^{T}\!\!\!\!\mathcal{X}^{k}
&\leq \|\!\!\left[\!\!\!\!\begin{array}{c}{\p^{k+1}\!-\p^{k}} \\ {\z_{1}^{k+1}\!-\z_{1}^{k}} \\ {\z_{2}^{k+1}\!-\z_{2}^{k}}\end{array}\!\!\!\!\right]\!\!\|_2^2/\lambda^2
+ \lambda^2 \|\mathcal{X}^{k}\|_2^2.
\end{split}
\end{equation}
Moreover, according to Cauchy-Schwarz inequality and \eqref{error-bound5} in \emph{Lemma \ref{error-bound-lemma}}, inequality \eqref{Psi-change221} holds.
\begin{figure*}
\begin{equation}\label{Psi-change221}
\begin{split}
& \left[\!\!\!\!\begin{array}{c}{\p^{k+1}\!-\p^{k}} \\ {\z_{1}^{k+1}\!-\z_{1}^{k}} \\ {\z_{2}^{k+1}\!-\z_{2}^{k}}\end{array}\!\!\!\!\right]^{T}\!\!\!
 \left[\!\!\!\begin{array}{c}{\v\left(\p^{k}, \z_{1}^{k}, \z_{2}^{k}, \y_{1}^{k+1}, \y_{2}^{k+1}\right)-\v\left(\p^{k}, \z_{1}^{k}, \z_{2}^{k}\right)}
\\ {\e_{1}\left(\p^{k}, \z_{1}^{k}, \z_{2}^{k}, \y_{1}^{k+1}, \y_{2}^{k+1}\right)-\e_{1}\left(\p^{k}, \z_{1}^{k}, \z_{2}^{k}\right)}
\\ {\e_{2}\left(\p^{k}, \z_{1}^{k}, \z_{2}^{k}, \y_{1}^{k+1}, \y_{2}^{k+1}\right)-\e_{2}\left(\p^{k}, \z_{1}^{k}, \z_{2}^{k}\right)}\end{array}\!\!\!\right]
\leq \frac{1}{\sqrt{\varepsilon_{5}}}\|\!\!\!\left[\!\!\!\!\begin{array}{c}{\p^{k+1}\!-\!\p^{k}} \\ {\z_{1}^{k+1}\!-\!\z_{1}^{k}} \\ {\z_{2}^{k+1}\!-\!\z_{2}^{k}}\end{array}\!\!\!\right]\!\!\!\|_2^2.
\end{split}
\end{equation}
\hrulefill
\vspace*{4pt}
\end{figure*}
Then, plugging \eqref{Psi-change222} and \eqref{Psi-change221} into \eqref{Psi-change22}, we can obtain
\begin{equation}\label{Psi-change22-2}
\begin{split}
& \left[\!\!\!\!\begin{array}{c}{\p^{k+1}\!-\!\p^{k}} \\ {\z_{1}^{k+1}\!-\!\z_{1}^{k}} \\ {\z_{2}^{k+1}\!-\!\z_{2}^{k}}\end{array}\!\!\!\!\right]^{T}\!\!\!\!
\left(\!\!\pmb{\psi}^k\!+\!2\!\left[\!\!\!\!\begin{array}{c}{\v\left(\p^{k}, \z_{1}^{k}, \z_{2}^{k}\right)-\p^{k}} \\ {\e_{1}\left(\p^{k}, \z_{1}^{k}, \z_{2}^{k}\right)-\z_{1}^{k}}
\\ {\e_{2}\left(\p^{k}, \z_{1}^{k}, \z_{2}^{k}\right)-\z_{2}^{k}}\end{array}\!\!\!\right]\!\right) \\
\geq & \left(1-\frac{1}{\lambda^{2}}-\frac{2}{\sqrt{\varepsilon_{5}}}\right)
\|\!\!\!\left[\!\!\!\!\begin{array}{c}{\p^{k+1}\!-\p^{k}} \\ {\z_{1}^{k+1}\!-\z_{1}^{k}} \\ {\z_{2}^{k+1}\!-\z_{2}^{k}}\end{array}\!\!\!\right]\!\!\!\|_2
-\lambda^{2}\|\mathcal{X}^{k}\|_2^2.
\end{split}
\end{equation}
Furthermore, plugging \eqref{Psi-change22-2} into \eqref{Psi-change2}, we have
\begin{equation}\label{Psi-change3}
\begin{split}
\hspace{-0.3cm}  \Psi^{k}&\!-\!\Psi^{k+1}\!
 \geq\!
\rho\left(\!\!\frac{1}{2 \beta}\!-\! \eta\!+\!1\!-\!\frac{1}{\lambda^{2}}\!-\!\frac{2}{\sqrt{\varepsilon_{5}}}\right)\!\!\|\!\!\!\left[\!\!\!\!\begin{array}{c}{\p^{k+1}\!-\!\p^{k}} \\ {\z_{1}^{k+1}\!-\!\z_{1}^{k}} \\ {\z_{2}^{k+1}\!-\!\z_{2}^{k}}\end{array}\!\!\!\right]\!\!\!\|_2^2\\
&\!\!\!\!+\!\!\left(\!\!\frac{\rho\!+\!\mu}{2}\!-\!\mu\frac{\delta_{\A\I}^{2}}{\varepsilon_{1}}\!\!\right)\!\!
\|\!\!\!\left[\!\!\!\!\begin{array}{l}{\v^{k}\!-\!\v^{k+1}\!} \\ {\e_{1}^{k}\!-\!\e_{1}^{k+1}\!} \\ {\e_{2}^{k}\!-\!\e_{2}^{k+1}\!}\end{array}\!\!\right]\!\!\!\|_2^2
\!+\! \mu\|\pmb{\phi}^k\|_2^{2}\!-\!\rho\lambda^{2}\|\mathcal{X}^{k}\|_2^2.
\end{split}
\end{equation}
Letting $\lambda^{2}\!=\!\omega\beta$ and noticing  $\eta\!=\!1\!+\!\frac{1}{\sqrt{\varepsilon_{4}}}$(see \eqref{eta D-part1-change D-part2-change}), one can verify $\eta-1+\frac{1}{\lambda^{2}}+\frac{2}{\sqrt{\varepsilon_{5}}} \leq\frac{1}{3\beta}$  when
$\omega \geq \frac{6 \sqrt{\varepsilon_{4} \varepsilon_{5}}}{\sqrt{\varepsilon_{4} \varepsilon_{5}}-6 \beta\left(2\sqrt{\varepsilon_{4}}+\sqrt{\varepsilon_{5}}\right)}$.
Moreover, since $\varepsilon_{1}\! =\!\frac{(\rho-\alpha)^{2}}{(\rho+L+2)^{2}}$ (see \eqref{epsilon_varphi}) and the assumption $\mu \leq \frac{\rho(\rho-\alpha)^{2}}{4 \delta_{\A\I}^{2}(\rho+L+2)^{2}-(\rho-\alpha)^{2}}$ in \emph{Theorem \ref{converge-proof-theorem}}, one can verify that $\frac{\mu \delta_{\A\I}^{2}}{\varepsilon_{1}} \leq \frac{\rho+\mu}{4}$ holds.
Then, \eqref{Psi-change3} can be deduced as follows
\begin{equation}\label{Psi-change4}
\begin{split}
\Psi^{k}-\Psi^{k+1}
\geq & \frac{\rho}{3 \beta}\!\left[\!\!\!\!\begin{array}{c}{\p^{k+1}\!-\!\p^{k}} \\ {\z_{1}^{k+1}\!-\!\z_{1}^{k}} \\ {\z_{2}^{k+1}\!-\!\z_{2}^{k}}\end{array}\!\!\!\right]\!\!\!\|_2^2\!+\!\frac{\rho\!+\!\mu}{4}
\|\!\!\!\left[\!\!\!\!\begin{array}{l}{\v^{k}\!-\!\v^{k+1}\!} \\ {\e_{1}^{k}\!-\!\e_{1}^{k+1}\!} \\ {\e_{2}^{k}\!-\!\e_{2}^{k+1}\!}\end{array}\!\!\!\right]\!\!\!\|_2^2
 \\
&\!+\! \mu\|\pmb{\phi}^k\|_2^{2}\!-\!\rho \omega \beta\|\mathcal{X}^{k}\|_2^2.
\end{split}
\end{equation}

In the following, we show that term $\rho\omega\beta\|\mathcal{X}^{k}\|_2^2$ can be bounded by the previous three terms.

First, since $\mathbf{0}\preceq\v\preceq\mathbf{1}$, $\A\v$ is bounded.
Moreover, since $\e_1 \succeq \mathbf{0}$ and $\A\v+\e_1-\b=\mathbf{0}$, there exists some positive vector $\pmb{\theta}$ such that $\e_1\preceq \pmb{\theta}$. Then, we can define
\begin{equation}\label{V-define}
\begin{split}
& V: =\max\limits_{\mathbf{0} \preceq \v,\v^{\prime},\e_{2},\e_2^{\prime} \preceq \mathbf{1}, \atop \mathbf{0}\preceq \e_{1},\e_1^{\prime} \preceq \pmb{\theta}}
\|\!\!\left[\!\!\begin{array}{l}{\v} \\ {\e_{1}} \\ {\e_{2}}\end{array}\!\!\right]\!\!-\!\!\left[\!\!\begin{array}{c}{\v^{\prime}} \\ {\e_{1}^{\prime}} \\ {\e_{2}^{\prime}}\end{array}\!\!\right]\!\!\|_{2}.
\end{split}
\end{equation}
Moreover, we define
\begin{equation}\label{deita-function-define}
\begin{split}
&\zeta:=\min \{\Delta, \sigma(\Delta / \sqrt{6 \omega})\},
\end{split}
\end{equation}
where $\sigma(\cdot)$ is some function satisfying $\underset{\epsilon\rightarrow0}\lim \sigma(\epsilon)=0$. Since $0<\beta\leq1$ (see \eqref{proximal-ADMM update_LP} below), we can denote $\beta$'s upper-bound as
\begin{equation}\label{beita-upper-define}
\begin{split}
& \beta < \min \left\{1, \frac{(\rho+\mu) \zeta^{2}}{8 \rho \omega V^{2}}, \frac{\zeta^{2} \mu}{2 \rho \omega V^{2}}, \frac{\mu \varepsilon_{3}}{2 \rho \omega \varepsilon_{6}}\right\},
\end{split}
\end{equation}
Moreover, we define the following inequalities
\begin{subequations}
\begin{align}
 &\|\!\!\left[\begin{array}{c}{\v^{k}-\v^{k+1}} \\ {\e_{1}^{k}-\e_{1}^{k+1}} \\ {\e_{2}^{k}-\e_{2}^{k+1}}\end{array}\right]\!\rVert_{2}^{2} \leq \frac{8 \rho \omega V^{2} \beta}{\rho+\mu},   \label{supposed-inequality-1} \\
 & \|\pmb{\phi}^k\|_2^2 \leq \frac{2 \rho \omega V^{2}}{\mu} \beta, \label{supposed-inequality-2} \\
 & \|\!\!\!\left[\!\!\!\begin{array}{c}{\p^{k}}-{\p^{k+1}} \\ {\z_{1}^{k}}-{\z_{1}^{k+1}} \\ {\z_{2}^{k}}-{\z_{2}^{k+1}}\end{array}\!\!\!\right]\!\!\!\|_{2}^{2}
 \leq 6 \omega \beta^{2} \|\mathcal{X}^{k}\|_2^2. \label{supposed-inequality-3}
\end{align}
\end{subequations}

Now, we are ready to check the boundness of $\mathcal{X}^{k}$.
First, we assume all of the inequalities \eqref{supposed-inequality-1}--\eqref{supposed-inequality-3} hold. Then,
plugging \eqref{V-define}-\eqref{beita-upper-define} into \eqref{supposed-inequality-2}, we can obtain \eqref{case1-inequality2} and \eqref{case1-inequality2_2} simultaneously.
\begin{equation}\label{case1-inequality2}
\begin{split}
& \|\pmb{\phi}^k\|_2 \leq \Delta,
\end{split}
\end{equation}
\begin{equation}\label{case1-inequality2_2}
\begin{split}
& \|\pmb{\phi}^k\|_2 \leq \sigma(\Delta / \sqrt{6 \omega}).
\end{split}
\end{equation}
Then, we can obtain\footnotemark \footnotetext{See proofs in Appendix E.}
\begin{equation}\label{case1-inequality3}
\begin{split}
& \|\mathcal{X}^{k}\|_2 \leq \Delta / \sqrt{6 \omega}.
\end{split}
\end{equation}
Then, combining \eqref{proximal-z-update}, \eqref{supposed-inequality-3}, and \eqref{case1-inequality3}, we have
\begin{equation}\label{error-bound6-hold}
\begin{split}
 & \|\!\!\!\left[\!\!\!\begin{array}{c}{\v^{k+1}\!-\!\p^k} \\ {\e_{1}^{k+1}\!-\!\z_1^k} \\ {\e_{2}^{k+1}\!-\!\z_2^{k}}\end{array}\!\!\!\right]\!\!\!\|_{2}^{2}
=\!\|\!\!\!\left[\!\!\!\begin{array}{c}{\p^{k+1}\!-\p^{k}} \\ {\z_{1}^{k+1}\!-\z_{1}^{k}} \\ {\z_{2}^{k+1}\!-\z_{2}^{k}}\end{array}\!\!\!\right]\!\!\!\|_{2}^{2}/\beta^{2}
\!\leq \! 6 \omega \|\mathcal{X}^{k}\|_2^2 \! \leq\! \Delta.
\end{split}
\end{equation}
Combining it with \eqref{case1-inequality2}, we can see \eqref{error-bound6} holds.

Moreover, noticing $\mathcal{X}^k$ and $\pmb\phi^k$ are on the right side of the inequalities \eqref{error-bound6} and \eqref{error-bound3} respectively, we have the following inequality chain
\begin{equation}\label{case1-inequality4}
 \|\mathcal{X}^{k}\|_2^2
\leq \frac{1}{\varepsilon_{3}} \|\!\!\left[\!\!\begin{array}{c}{\y_{1}^{k+1}}\!-\!\y_1^{*}\left(\p^{k}, \z_{1}^{k}, \z_{2}^{k}\right) \\ {\y_{2}^{k+1}}\!\!-\!\!{\y_2^{*}\left(\p^{k}, \z_{1}^{k}, \z_{2}^{k}\right)}\end{array}\!\!\right]
\|_{2}^{2}
\!\!\leq\!\!  \frac{\varepsilon_6}{\varepsilon_{3}}\|\pmb{\phi}^k\|_2^2,
\end{equation}
where the first inequality comes from \eqref{error-bound3}, and the second inequality comes from \eqref{error-bound6}. Moreover, according to  \eqref{beita-upper-define}, \eqref{case1-inequality4} can be further derived to
\begin{equation}\label{case1-inequality5}
 \rho\omega\beta\|\mathcal{X}^{k}\|_2^2
\leq  \frac{\mu}{2}\|\pmb{\phi}^k\|_2^2.
\end{equation}

Next, we consider the case that at least one of the inequalities \eqref{supposed-inequality-1}-\eqref{supposed-inequality-3} does not hold.
There are three scenarios:
\begin{enumerate}
  \item \eqref{supposed-inequality-1} does not hold i.e.,
$
 \|\!\!\left[\begin{array}{c}{\v^{k}-\v^{k+1}} \\ {\e_{1}^{k}-\e_{1}^{k+1}} \\ {\e_{2}^{k}-\e_{2}^{k+1}}\end{array}\right]\!\!\!\!\|_{2}^{2} > \frac{8 \rho \omega V^{2} \beta}{\rho+\mu}.
$
Then, we have the following derivations
\begin{equation}\label{Psi-change5-case}
\begin{split}
\hspace{-20pt}&\frac{\rho\!+\!\mu}{4}
\|\!\!\!\left[\!\!\begin{array}{l}{\v^{k}\!-\!\v^{k+1}\!} \\ {\e_{1}^{k}\!-\!\e_{1}^{k+1}\!} \\ {\e_{2}^{k}\!-\!\e_{2}^{k+1}\!}\end{array}\!\!\right]\!\!\!\|_2^2
-\rho\omega\beta\|\mathcal{X}^{k}\|_2^2  \\
\hspace{-20pt}\geq &  \!\frac{\rho\!+\!\mu}{8}\!
\|\!\!\!\left[\!\!\!\!\begin{array}{l}{\v^{k}\!\!-\!\!\v\!^{k+1}\!} \\ {\e_{1}^{k}\!\!-\!\!\e\!_{1}^{k+1}\!} \\ {\e_{2}^{k}\!\!-\!\!\e\!_{2}^{k+1}\!}\end{array}\!\!\!\right]\!\!\!\|_2^2
\!\!+\!\! \frac{\rho\!+\!\mu}{8}\! \cdot \! \frac{8 \rho \omega V^{2} \beta}{\rho\!+\!\mu}
\!\!-\!\rho\omega\beta\|\mathcal{X}^{k}\|_2^2 \\
= & \frac{\rho\!+\!\mu}{8}
\|\!\!\!\left[\!\!\!\!\begin{array}{l}{\v^{k}\!-\!\v^{k+1}\!} \\ {\e_{1}^{k}\!-\!\e_{1}^{k+1}\!} \\ {\e_{2}^{k}\!-\!\e_{2}^{k+1}\!}\end{array}\!\!\!\right]\!\!\!\|_2^2
+ \rho\omega \beta V^{2}-\rho\omega\beta\|\mathcal{X}^{k}\|_2^2.
\end{split}
\end{equation}
Since $V\geq\|\mathcal{X}^{k}\|_2$ (see \eqref{V-define}), we can further get
\begin{equation}\label{Psi-change5-case21}
\hspace{-20pt}\frac{\rho\!+\!\mu}{4}
\|\!\!\!\left[\!\!\!\!\begin{array}{l}{\v^{k}\!-\!\v^{k+1}\!} \\ {\e_{1}^{k}\!-\!\e_{1}^{k+1}\!} \\ {\e_{2}^{k}\!-\!\e_{2}^{k+1}\!}\end{array}\!\!\right]\!\!\!\|_2^2
\!\!-\!\!\rho\omega\beta\|\mathcal{X}^{k}\|_2^2
\!\!\geq\!\!  \frac{\rho\!+\!\mu}{8}
\|\!\!\!\left[\!\!\!\!\begin{array}{l}{\v^{k}\!-\!\v^{k+1}\!} \\ {\e_{1}^{k}\!-\!\e_{1}^{k+1}\!} \\ {\e_{2}^{k}\!-\!\e_{2}^{k+1}\!}\end{array}\!\!\right]\!\!\!\|_2^2.
\end{equation}

  \item \eqref{supposed-inequality-2} does not hold, i.e.,
$
 \|\pmb{\phi}^k\|_2^2 > \frac{2 \rho \omega V^{2}}{\mu} \beta.
$
By exploiting the above inequality and $V\geq\|\mathcal{X}^{k}\|_2^2$,  we can get
\begin{equation}\label{Psi-change5-case22}
\begin{split}
&\mu\|\pmb{\phi}^k\|_2^2- \rho\omega\beta\|\mathcal{X}^{k}\|_2^2 \geq  \frac{\mu}{2}\|\pmb{\phi}^k\|_2^2.
\end{split}
\end{equation}
  \item \eqref{supposed-inequality-3} does not hold, i.e.,
$
\|\!\!\!\left[\!\!\!\begin{array}{c}{\p^{k}}\!\!-\!\!{\p^{k+1}} \\ {\z_{1}^{k}}\!\!-\!\!{\z_{1}^{k+1}} \\ {\z_{2}^{k}}\!\!-\!\!{\z_{2}^{k+1}}\end{array}\!\!\!\right]\!\!\!\|_{2}^{2}
 \!\!>\!\! 6\omega\beta^{2}\!\|\!\mathcal{X}^{k}\!\|_2^2.
$
Through similar derivations to \eqref{Psi-change5-case21} and \eqref{Psi-change5-case22}, we can obtain
\begin{equation}\label{Psi-change5-case23}
\hspace{-9pt}\frac{\rho}{3\beta}\|\!\!\!\left[\!\!\!\begin{array}{c}{\p^{k}}\!\!-\!\!{\p^{k+1}} \\ {\z_{1}^{k}}\!\!-\!\!{\z_{1}^{k+1}} \\ {\z_{2}^{k}}\!\!-\!\!{\z_{2}^{k+1}}\end{array}\!\!\!\right]\!\!\!\|_{2}^{2}
\!-\!\rho\omega\beta\|\mathcal{X}^{k}\!\|_2^2
\!\!\geq\!\!\frac{\rho}{6\beta}\|\!\!\!\left[\!\!\!\begin{array}{c}{\p^{k}}\!\!-\!\!{\p^{k+1}} \\ {\z_{1}^{k}}\!\!-\!\!{\z_{1}^{k+1}} \\ {\z_{2}^{k}}\!\!-\!\!{\z_{2}^{k+1}}\end{array}\!\!\!\right]\!\!\!\|_{2}^{2}.
\end{equation}
\end{enumerate}

Then, from \eqref{case1-inequality5}, \eqref{Psi-change5-case21}, \eqref{Psi-change5-case22}, and \eqref{Psi-change5-case23}, we can derive \eqref{Psi-change4} as
\begin{equation}\label{Psi-change5-case2}
\begin{split}
&\Psi^{k}-\Psi^{k+1} \\
\geq & \frac{\rho\!+\!\mu}{8}
\|\!\!\!\left[\!\!\!\!\begin{array}{l}{\v^{k}\!-\!\v^{k+1}\!} \\ {\e_{1}^{k}\!-\!\e_{1}^{k+1}\!} \\ {\e_{2}^{k}\!-\!\e_{2}^{k+1}\!}\end{array}\!\!\!\right]\!\!\!\|_2^2
\!+\!\frac{\rho}{6 \beta}\!\left[\!\!\!\!\begin{array}{c}{\p^{k+1}\!-\!\p^{k}} \\ {\z_{1}^{k+1}\!-\!\z_{1}^{k}} \\ {\z_{2}^{k+1}\!-\!\z_{2}^{k}}\end{array}\!\!\!\right]\!\!\!\|_2^2
 \!+\! \frac{\mu}{2}\|\pmb{\phi}^k\|_2^{2}.
\end{split}
\end{equation}

Adding both sides of the above inequality from $k = 1, 2, \ldots,$ we can get
 {\setlength\abovedisplayskip{10pt}
 \setlength\belowdisplayskip{10pt}
  \setlength\jot{8pt}
\begin{equation}\label{Psi-sum}
\begin{split}
\hspace{-10pt}\underset{k\rightarrow+\infty}\lim\Psi^{1}\!\!-\!\!\Psi^{k+1}&
\!\geq\!\frac{\rho\!+\!\mu}{8}
\sum_{k=1}^{+\infty}\|\!\!\!\left[\!\!\!\!\begin{array}{l}{\v^{k}\!-\!\v^{k+1}\!} \\ {\e_{1}^{k}\!-\!\e_{1}^{k+1}\!} \\ {\e_{2}^{k}\!-\!\e_{2}^{k+1}\!}\end{array}\!\!\!\right]\!\!\!\|_2^2 \\
&\!+\!\frac{\rho}{6 \beta}\!\sum_{k=1}^{+\infty}\left[\!\!\!\!\begin{array}{c}{\p^{k+1}\!-\!\p^{k}} \\ {\z_{1}^{k+1}\!-\!\z_{1}^{k}} \\ {\z_{2}^{k+1}\!-\!\z_{2}^{k}}\end{array}\!\!\!\right]\!\!\!\|_2^2
 \!+\! \frac{\mu}{2}\sum_{k=1}^{+\infty}\|\pmb{\phi}^k\|_2^{2}.
\end{split}
\end{equation}
According} to \emph{Definition \ref{F-D-P-define}},
we can see that $\F^{k} \geq \D^{k}$, $\P^{k} \geq \D^{k}$ and $\P^{k}$ are lower-bounded.
Therefore, $\Psi^k=(\F^{k}-\D^{k})+(\P^{k}-\D^{k})+\P^{k}$ means that $\Psi$ is also lower-bounded. Therefore, we can obtain
 {\setlength\abovedisplayskip{5pt}
 \setlength\belowdisplayskip{5pt}
  \setlength\jot{5pt}
\begin{equation}\label{xhat-limit-divided}
\begin{split}
& \lim_{k \rightarrow +\infty}\v^{k}-\v^{k+1}=\mathbf{0}, ~~ \lim _{k \rightarrow +\infty}\e_1^{k}-\e_1^{k+1}=\mathbf{0}, \\
& \lim _{k \rightarrow +\infty}\e_2^{k}-\e_2^{k+1}=\mathbf{0}, ~~ \lim _{k \rightarrow +\infty}\p^{k}-\p^{k+1}=\mathbf{0}, \\
& \lim _{k \rightarrow +\infty}\z_1^{k}-\z_1^{k+1}=\mathbf{0}, ~~~ \lim _{k \rightarrow +\infty}\z_2^{k}-\z_2^{k+1}=\mathbf{0},
\end{split}
\end{equation}
and}
\begin{equation}\label{Wx-limit}
\begin{split}
& \lim _{k \rightarrow +\infty} \pmb{\phi}^k =\mathbf{0}.
\end{split}
\end{equation}
Plugging \eqref{xhat-limit-divided} into \eqref{proximal-z-update}, we can obtain
 {\setlength\abovedisplayskip{10pt}
 \setlength\belowdisplayskip{10pt}
  \setlength\jot{8pt}
\begin{equation}\label{xvhat-limit-divided}
\begin{split}
& \lim_{k \rightarrow +\infty}\v^{k+1}-\p^{k}=\mathbf{0}, \\
& \lim _{k \rightarrow +\infty}\e_1^{k+1}-\z_1^{k}=\mathbf{0},\\
& \lim _{k \rightarrow +\infty}\e_2^{k+1}-\z_2^{k}=\mathbf{0}.
\end{split}
\end{equation}
Plugging} \eqref{xhat-limit-divided} into \eqref{error-bound1} and \eqref{error-bound5} respectively, we have
\begin{equation}\label{limit-error-bound2}
\begin{split}
& \lim_{k \rightarrow +\infty}\left[\!\!\!\begin{array}{c}{\v^{k}\! -\!\v\left(\p^{k}, \z_{1}^{k}, \z_{2}^{k}, \y_{1}^{k}, \y_{2}^{k}\right)}
 \\ {\e_{1}^{k}\!-\!\e_{1}\left(\p^{k}, \z_{1}^{k}, \z_{2}^{k}, \y_{1}^{k}, \y_{2}^{k}\right)}
 \\ {\e_{2}^{k}\!-\!\e_{2}\left(\p^{k}, \z_{1}^{k}, \z_{2}^{k}, \y_{1}^{k}, \y_{2}^{k}\right)}\end{array}\!\!\!\right]=\mathbf{0},
\end{split}
\end{equation}
and
\begin{equation}\label{limit-error-bound5}
\begin{split}
\lim_{k\!\rightarrow\!+\!\infty} &\left[\!\!\!\!\begin{array}{l}{\v\!\left(\p^{k+1}\!, \z_{1}^{k+1}\!, \z_{2}^{k+1}\!, \y_{1}^{k+1}\!, \y_{2}^{k+1}\right)}
\\ {\e_{1}\!\!\left(\p^{k+1}\!, \z_{1}^{k+1}\!, \z_{2}^{k+1}\!, \y_{1}^{k+1}\!, \y_{2}^{k+1}\right)}
\\ {\e_{2}\!\!\left(\p^{k+1}\!, \z_{1}^{k+1}\!, \z_{2}^{k+1}\!, \y_{1}^{k+1}\!, \y_{2}^{k+1}\right)}\end{array}\right. \\
&\hspace{1cm}\left.\begin{array}{l}{-\v\!\left(\u^{k}, \z_{1}^{k}, \z_{2}^{k}, \y_{1}^{k+1}, \y_{2}^{k+1}\right)}
\\ {-\e_{1}\!\!\left(\p^{k}, \z_{1}^{k}, \z_{2}^{k}, \y_{1}^{k+1}, \y_{2}^{k+1}\right)}
\\ {-\e_{2}\!\!\left(\p^{k}, \z_{1}^{k}, \z_{2}^{k}, \y_{1}^{k+1}, \y_{2}^{k+1}\right)}\end{array}\!\!\!\!\right]=\mathbf{0},
\end{split}
\end{equation}
respectively.
From \eqref{proximal-lamda-update}, we have
\begin{equation}\label{Wx-limit2}
\begin{split}
\left[\begin{array}{ccc} \mathbf{y}_1^{k+1}\!-\!\mathbf{y}_1^k\\ \mathbf{y}_2^{k+1}\!-\!\mathbf{y}_2^k \end{array}\right]\!=\!\left[\!\!\!\begin{array}{ccc}{\A} \!\!&\!\! {\I_{M}} \!\!\!&\!\!\! {\mathbf{0}} \\ {\I_{N}} \!\!&\!\! {\mathbf{0}} \!\!\!&\!\!\! {-\I_{N}}\end{array}\!\!\!\!\right]\!\!
\left[\!\!\!\!\begin{array}{c}{\v^{k+1}} \\ {\e_{1}^{k+1}}\\ {\e_{2}^{k+1}}\end{array}\!\!\!\right]\!\!-\!\!\left[\!\!\begin{array}{l}{\b} \\ {\mathbf{0}}\end{array}\!\!\right].
\end{split}
\end{equation}
Plugging \eqref{D-part1-change} into \eqref{Wx-limit2}, we have
\begin{equation}\label{Wx-limit3}
\begin{split}
&\left[\begin{array}{ccc} \mathbf{y}_1^{k+1}\!-\!\mathbf{y}_1^k\\ \mathbf{y}_2^{k+1}\!-\!\mathbf{y}_2^k \end{array}\right] \\
= & \!\!\left[\!\!\!\begin{array}{ccc}{\A} \!\!&\!\! {\I_{M}} \!\!\!&\!\!\! {\mathbf{0}} \\ \!{\I_{N}}\!\!\!&\!\!{\mathbf{0}} \!\!\!&\!\!\! {-\!\I_{N}}\end{array}\!\!\!\!\right]\!\!\!
\left[\!\!\!\!\begin{array}{c}{\v^{k\!+\!1}\!\!-\!\!\v\!\!\left(\p^{k\!+\!1}, \z_{1}^{k\!+\!1}, \z_{2}^{k\!+\!1}, \y_{1}^{k\!+\!1}, \y_{2}^{k\!+\!1}\right)}
 \\ {\e_{1}^{k\!+\!1}\!\!-\!\e_{1}\!\!\left(\p^{k\!+\!1}, \z_{1}^{k\!+\!1}, \z_{2}^{k\!+\!1}, \y_{1}^{k\!+\!1}, \y_{2}^{k\!+\!1}\right)}
 \\ {\e_{2}^{k\!+\!1}\!\!-\!\e_{2}\!\!\left(\p^{k\!+\!1}, \z_{1}^{k\!+\!1}, \z_{2}^{k\!+\!1}, \y_{1}^{k\!+\!1}, \y_{2}^{k\!+\!1}\right)}\end{array}\!\!\!\!\right] \\
&+ \!\!\left[\!\!\!\begin{array}{ccc}{\A} \!\!&\!\! {\I_{M}} \!\!\!&\!\!\! {\mathbf{0}} \\ {\I_{N}} \!\!&\!\! {\mathbf{0}} \!\!\!&\!\!\! {-\I_{N}}\end{array}\!\!\!\!\right]\left[\!\!\!\!\begin{array}{l}{\v\!\left(\p^{k+1}\!, \z_{1}^{k+1}\!, \z_{2}^{k+1}\!, \y_{1}^{k+1}\!, \y_{2}^{k+1}\right)}
\\ {\e_{1}\!\!\left(\p^{k+1}\!, \z_{1}^{k+1}\!, \z_{2}^{k+1}\!, \y_{1}^{k+1}\!, \y_{2}^{k+1}\right)}
\\ {\e_{2}\!\!\left(\p^{k+1}\!, \z_{1}^{k+1}\!, \z_{2}^{k+1}\!, \y_{1}^{k+1}\!, \y_{2}^{k+1}\right)}\end{array}\right. \\
&\hspace{2.2cm}\left.\begin{array}{l}{\!-\v\!\left(\p^{k}, \z_{1}^{k}, \z_{2}^{k}, \y_{1}^{k+1}, \y_{2}^{k+1}\right)}
\\ {\!-\e_{1}\!\!\left(\p^{k}, \z_{1}^{k}, \z_{2}^{k}, \y_{1}^{k+1}, \y_{2}^{k+1}\right)}
\\ {\!-\e_{2}\!\!\left(\p^{k}, \z_{1}^{k}, \z_{2}^{k}, \y_{1}^{k+1}, \y_{2}^{k+1}\right)}\end{array}\!\!\!\!\right]\!\! +\!\pmb{\phi}^k.
\end{split}
\end{equation}
Plugging \eqref{Wx-limit}, \eqref{limit-error-bound2}, and \eqref{limit-error-bound5} into \eqref{Wx-limit3}, we can obtain
\begin{equation}\label{dual-limit}
\lim_{k \rightarrow +\infty} \mathbf{y}_1^{k+1}\!-\!\mathbf{y}_1^k=\mathbf{0}, \ \lim_{k \rightarrow +\infty} \mathbf{y}_2^{k+1}\!-\!\mathbf{y}_2^k=\mathbf{0}.
\end{equation}

From \eqref{pADMM-frame-b} and \eqref{pADMM-frame-c}, we can see clearly that $\{\e_1^k\}$ and $\{\v^k\}$ are bounded sequences since $\{\e_2^k\}$ is bounded by $[0,1]^N$. Plugging these bounded results into \eqref{xvhat-limit-divided}, we can find that $\{\p^{k}\}$, $\{\z_1^{k}\}$, $\{\z_2^{k}\}$ are also bounded sequences.
Furthermore, based on the above bounded results, \eqref{v1-solution-component} and \eqref{v2-solution-component} imply that $\{\y_1^{k}\}$ and $\{\y_2^{k}\}$ are also bounded sequences.

Combining these bounded results with \eqref{xhat-limit-divided} and \eqref{dual-limit}, we can conclude
\begin{equation}\label{variables-limit}
\begin{split}
& \lim_{k \rightarrow +\infty}\v^{k}=\v^{*}, ~ \lim _{k \rightarrow +\infty}\e_1^{k}=\e_1^{*},~\lim _{k \rightarrow +\infty}\e_2^{k}=\e_2^{*},\\
& \lim _{k \rightarrow +\infty}\p^{k}=\p^{*}, ~ \lim _{k \rightarrow +\infty}\z_1^{k}=\z_1^{*}, ~\lim _{k \rightarrow +\infty}\z_2^{k}=\z_2^{*}, \\
&\lim_{k \rightarrow +\infty}\y_1^{k}=\y_1^{*}, ~\lim _{k \rightarrow +\infty}\y_2^{k}=\y_2^{*}.
\end{split}
\end{equation}

Plugging \eqref{variables-limit} into \eqref{pADMM-frame-b} and \eqref{xvhat-limit-divided}, we can get
\begin{equation}\label{Wx-limit-star}
\begin{split}
& \A\v^{*}+\e_1^{*}-\b = \mathbf{0}, ~~ \v^{*}=\e_2^{*}, \\
& \v^{*}=\p^{*}, ~~ \e_1^{*}=\z_1^{*}, ~~ \e_2^{*}=\z_2^{*}.
\end{split}
\end{equation}
which completes the proof of the first part of {\it Theorem 1}.

Next, we prove that $\v^*$ is a stationary point of the original problem \eqref{ML-decoding-all}. Letting $g(\v)=\pmb{\lambda}^T\v-\frac{\alpha}{2}\|\v-0.5\|_2^2$, we can obtain, $\forall \x \in X$,
\begin{equation}\label{stationary1}
\begin{split}
(\v-\v^{*})^T\nabla_{\v} g(\v^*) &= (\v-\v^{*})^T(\pmb{\lambda}-\alpha(\v-0.5)). \\
\end{split}
\end{equation}
Moreover, since $\v^{k+1}\!\!=\!\!\mathop{\rm argmin}\limits_{\v}  \F\left(\v,\! \e_{1}^{k},\!\e_{2}^{k},\!\p^{k},\!\z_{1}^{k},\!\z_{2}^{k},\!\y_{1}^{k},\!\y_{2}^{k}\right)$, then we have
\begin{equation}\label{grandient-x-star}
\begin{split}
0= &\nabla_{\v}\F(\v^{*},\e_{1}^{*},\e_2^{*},\p^*,\z_1^*,\z_2^*,\y_1^*,\y_2^*) \\
=&\pmb{\lambda}-\alpha(\v^*-0.5)+\rho(\v^*-\p^*)+\A^{T}\y_1^{*}+\y_2^* \\
& +\mu\A^T(\A\v^*+\e_1^*-\b)+\mu(\v^*-\e_2^*)\\
=&\nabla_{\v} g(\v^*)+\A^{T}\y_1^{*}+\y_2^*,
\end{split}
\end{equation}
i.e., $\nabla_{\v} g(\v^*)=-\A^{T}\y_1^{*}-\y_2^*$, where the last equality follows from \eqref{Wx-limit-star}. Then, we can further obtain
\begin{equation}\label{stationary2}
\begin{split}
\hspace{-0.4cm}(\v\!-\!\v^{*})^T\nabla_{\v} g(\v^*) & \!=\! \!-\!(\v-\v^{*})^T\!(\A^{T}\y_1^{*}\!+\!\y_2^*) \\
\hspace{-0.4cm} &\! =\! \!-\!(\y_1^{*})^T\!\A(\v\!-\!\v^{*})\!\!-\!\!(\y_2^*)^{T}\!(\v\!-\!\v^{*}).
\end{split}
\end{equation}
Obviously, if $(\y_1^{*})^T\A(\v-\v^{*})\leq 0$ and $(\y_2^{*})^{T}(\v-\v^{*})\leq 0$, then $(\v\!-\!\v^{*})^T\nabla_{\v} g(\v^*) \geq 0$.
In the following, we will prove that both of them hold.

First, we have
\begin{equation}\label{first-term1}
\begin{split}
(\y_1^{*})^T\A(\v-\v^{*}) & = (\y_1^{*})^T\left((\b-\e_1)-(\b-\e_1^*)\right) \\
& = (\y_1^{*})^T(\e_1^*-\e_1).
\end{split}
\end{equation}

Then, since $\e_1^* \succeq 0$, we have the following derivations
\begin{equation}\label{y-z-star-1}
\begin{split}
 (\y_1^{*})^T\e_1^* & \!=\! \sum_{e_{1,j}^{*}>0} y_{1,j}^*e_{1,j}^{*} \\
& \!=\! \sum_{e_{1,j}^{*}>0} \! \frac{\mu}{\rho+\mu} y_{1,j}^*(b_{j}\!-\!\mathbf{a}_{j}^{T}\v^{*}\!\!-\!\!\frac{y_{1,j}^*}{\mu}\!+\!\frac{\rho}{\mu}z_{1,j}^{*}) \\
&\! =\! \sum_{e_{1,j}^{*}>0}y_{1,j}^*\big(b_{j}-\mathbf{a}_{j}^{T}\v^{*}-\frac{y_{1,j}^*}{\rho+\mu}\big),
\end{split}
\end{equation}
where the last equality holds since $\z_1^*=\b-\A\v^{*}$, which follows from $\e_1^*=\z_1^*$ and $\A\v^{*}+\e_1^{*}-\b=0$.
Moreover, since $\A\v^{*}+\e_1^{*}-\b=0$, we also have
\begin{equation}\label{y-z-star-2}
\begin{split}
 (\y_1^{*})^T\e_1^* & \!=\! (\y_1^{*})^T(\b-\A\v^{*}) \\
& \!=\!\!\! \sum_{e_{1,j}^{*}>0}\!\! y_{1,j}^*(b_{j}\!-\!\!\mathbf{a}_{j}^{T}\x^{*})\!+\!\!\!\!\sum_{e_{1,j}^{*}=0}\!\! y_{1,j}^*(b_{j}\!-\!\!\mathbf{a}_{j}^{T}\x^{*}) \\
& \!=\! \sum_{e_{1,j}^{*}>0} y_{1,j}^*(b_{j}-\mathbf{a}_{j}^{T}\v^{*}).
\end{split}
\end{equation}

Comparing \eqref{y-z-star-1} and \eqref{y-z-star-2}, we can see that when $e_{1,j}^{*}>0$,
\begin{equation}\label{z-star-dayu0-1}
\begin{split}
& y_{1,j}^* =0.
\end{split}
\end{equation}
Therefore, we can obtain
\begin{equation}\label{y-z-star-0}
(\y_1^{*})^T\e_1^* =0,
\end{equation}
since $\e_1^*\succeq\mathbf{0}$.

On the other hand, if $e_{1,j}^{*}=0$, there exists $\frac{\mu}{\rho+\mu}(b_{j}-\mathbf{a}_{j}^{T}\v^{*}-\frac{y_{1,j}^*}{\mu}+\frac{\rho}{\mu}z_{1,j}^{*})\leq0$ (see \eqref{v1-solution-component}).
Moreover, since $\e_1^*=\z_1^*=\b-\A\v^{*}$, one can see that $\frac{y_{1,j}^*}{\mu} \geq (1+\frac{\rho}{\mu})e_{1,j}^{*}\geq0$.
Besides, since $\rho >0$ and $\mu>0$, thus we have
\begin{equation}\label{z-star-dayu0-2}
\begin{split}
& y_{1,j}^* \geq 0,
\end{split}
\end{equation}
when $e_{1,j}^{*}=0$.
From \eqref{z-star-dayu0-1} and \eqref{z-star-dayu0-2}, we conclude
\begin{equation}\label{y-star-dayu0}
\y_1^{*} \succeq 0.
\end{equation}
Plugging \eqref{y-z-star-0} and \eqref{y-star-dayu0} into \eqref{first-term1},
we obtain $(\y_1^{*})^T(\e_1^*-\e_1)=-(\y_1^{*})^T \e_1 \leq 0$,
which means
\begin{equation}\label{first-term-proof}
(\y_1^{*})^T\A(\v-\v^{*}) \leq 0.
\end{equation}

Similar to the above derivations for \eqref{first-term-proof}, we can also have
\begin{equation}\label{second-term-proof}
(\y_2^*)^{T}(\v-\v^{*})\leq 0.
\end{equation}

Therefore, we can conclude
\begin{equation}\label{stationary3}
(\v-\v^{*})^T\nabla_{\v} g(\v^*) \geq 0, \ \forall \v \in \mathcal{X}.
\end{equation}
This completes the proof.
\end{proof}

\section{Proof of \eqref{case1-inequality3}}
We prove by contradiction. Suppose \eqref{case1-inequality3} does not hold. Since $\underset{\epsilon\rightarrow0}\lim \sigma(\epsilon)=0$, and suppose there exists a sequence of tuples
\begin{equation}\label{con_hat}
\begin{split}
\underset{k\rightarrow+\infty}\lim\{\p^k,\v_1^k,\v_2^k,\y_1^{k+1},\y_2^{k+1}\}= \{\hat{\p},\hat{\v}_1,\hat{\v}_2,\hat{\y}_1,\hat{\y}_2\},
\end{split}
\end{equation}
such that
\begin{equation}\label{contrary1}
\begin{split}
\underset{k\rightarrow+\infty}\lim\|\pmb{\phi}^k\|_2 = 0.
\end{split}
\end{equation}
\begin{equation}\label{contrary2}
\begin{split}
\underset{k\rightarrow+\infty}\lim\|\mathcal{X}^k\|_2 > 0.
\end{split}
\end{equation}
Next we prove \eqref{contrary1} and \eqref{contrary2} cannot hold simultaneously.

First, plugging the convergence result \eqref{con_hat} into \eqref{contrary1} and noticing
$\left[\!\!\!\!\begin{array}{l}{\v(\p^{k},\z_{1}^{k},\z_{2}^{k},\y_{1}^{k+1}\!, \y_{2}^{k+1}\!)}
\\ {\e_{1}\!(\p^{k}, \z_{1}^{k}, \z_{2}^{k}, \y_{1}^{k+1}\!, \y_{2}^{k+1}\!)}
\\ {\e_{2}\!(\p^{k}, \z_{1}^{k}, \z_{2}^{k}, \y_{1}^{k+1}\!, \y_{2}^{k+1}\!)}\end{array}\!\!\!\!\right] $ is continuous (see \eqref{x-D-define} and $\pmb\phi^k$ in \eqref{D-part1-change}), we can see
\begin{equation}\label{phi-0}
\begin{split}
\hspace{-10pt}\left[\!\!\!\begin{array}{c}{\A \v\left(\hat{\p},\hat{\z}_1,\hat{\z}_2,\hat{\y}_1,\hat{\y}_2\right)\!+\!\e_{1}\left(\hat{\p},\hat{\z}_1,\hat{\z}_2,\hat{\y}_1,\hat{\y}_2\right)\!-\!\b}
\\ {\v\left(\hat{\p},\hat{\z}_1,\hat{\z}_2,\hat{\y}_1,\hat{\y}_2\right)\!-\!\e_{2}\left(\hat{\p},\hat{\z}_1,\hat{\z}_2,\hat{\y}_1,\hat{\y}_2\right)}\end{array}\!\!\!\right]\!=\!0.
\end{split}
\end{equation}
Plugging \eqref{phi-0} into KKT equations of problem \eqref{x-D-define} and notice $\hat{\y}_1$ and $\hat{\y}_2$ are the corresponding optimal Lagrangian multipliers (problem \eqref{x-D-define} is strongly convex), we can see
$$\left[\!\!\!\!\begin{array}{l}{\v(\hat{\p},\hat{\z}_{1},\hat{\z}_{2},\hat{\y}_{1}\!, \hat{\y}_{2}\!)}
\\ {\e_{1}(\hat{\p},\hat{\z}_{1},\hat{\z}_{2},\hat{\y}_{1}\!, \hat{\y}_{2}\!)}
\\ {\e_{2}(\hat{\p},\hat{\z}_{1},\hat{\z}_{2},\hat{\y}_{1}\!, \hat{\y}_{2}\!)}\end{array}\!\!\!\!\right]
= \left[\!\!\!\!\begin{array}{l}{\v(\hat{\p},\hat{\z}_{1},\hat{\z}_{2})}
\\ {\e_{1}(\hat{\p},\hat{\z}_{1},\hat{\z}_{2})}
\\ {\e_{2}(\hat{\p},\hat{\z}_{1},\hat{\z}_{2})}\end{array}\!\!\!\!\right]$$
which indicates $\underset{k\rightarrow+\infty}\lim\|\mathcal{X}^k\|_2 = 0$. This is a contradiction.

\section{Proof of \emph{Lemma \ref{error-bound-lemma}}}\label{error-bound-proof}
\begin{proof}
First, we prove \eqref{error-bound1}. To simplify the proof, $\F(\v,\e_{1},\e_{2},\bullet^{k})$ is used
to denote function $\F(\v,\e_{1},\e_{2},\p^k,\z_1^k,\z_2^k,\y_1^k,\y_2^k)$.
Moreover, we give the following definitions relative to $\v$, $\e_{1}$ and $\e_{2}$ respectively
\begin{equation}\label{error3-defi}
\begin{split}
&\mathbf{r}_{\v}^{k}\!= \!\! \nabla_{\v}\F(\v^{k}\!,\e_{1}^{k},\e_{2}^{k},\bullet^{k})
     \!-\!\!\nabla_{\v}\F(\v^{k+1}\!\!,\e_{1}^{k},\e_{2}^{k},\bullet^{k})\!\!+\!\v^{k+1}\!\!\!-\!\v^{k},\\
&\mathbf{r}_{\e_{1}}^{k}\!\!\!=\!\!\nabla_{\e_1}\!\F(\!\v^{k},\e_{1}^{k},\e_{2}^{k},\!\bullet^{k})
     \!\!-\!\!\!\nabla_{\e_1}\!\F(\!\v^{k+1}\!\!,\e_{1}^{k+1}\!\!,\e_{2}^{k},\!\bullet^{k})\!\!+\!\e_{1}^{k+1}\!\!\!-\!\e_{1}^{k},\\
&\mathbf{r}_{\mathbf{e}_{2}}^{k}\!\!\!=\!\!\!\nabla_{\e_{2}}\!\F(\!\v^{k},\!\e_{1}^{k},\!\e_{2}^{k},\!\bullet^{k})
     \!\!-\!\!\nabla_{\mathbf{v}_{2}}\!\F(\!\v^{k+1}\!,\!\e_{1}^{k+1}\!,\!\e_{2}^{k+1}\!,\!\bullet^{k})\!\!+\!\!\e_{2}^{k+1}\!\!\!-\!\!\e_{2}^{k}.
\end{split}
\end{equation}

Applying the triangle inequality to $\mathbf{r}_{\v}^{k}$, we obtain
\begin{equation}\label{error1-derivative}
\begin{split}
 \|\mathbf{r}_{\v}^{k}\|_{2}
& \!\leq\! \|\nabla_{\v}\F(\!\v^{k},\!\e_{1}^{k},\!\e_{2}^{k},\!\bullet^{k}\!)\!\!-\!\!\nabla_{\v}\F(\!\v^{k+1}\!\!,\e_{1}^{k},\!\e_{2}^{k},\!\bullet^{k}\!)\|_{2} \\
&~~~  +\|\v^{k+1}-\v^{k}\|_{2},\\
&  \!\leq\! (\rho+L+1)\|\v^{k+1}-\v^{k}\|_2,
\end{split}
\end{equation}
where $L$ is the Lipschitz constant defined in \eqref{x-Lipschitz-defi}.
Following similar derivations to \eqref{error1-derivative}, we can get
\begin{equation}\label{error2-derivative}
\begin{split}
\hspace{-0.15cm} \|\mathbf{r}_{\e_{1}}^{k}\!\|_{2} & \!\!\leq\!\! \|\nabla\!_{\e_{1}}\!\F\!(\v^{k}\!,\e_{1}^{k},\e_{2}^{k},\!\bullet^{k}\!)\!\!-\!\!\nabla\!_{\e_{1}}\!\F\!(\v^{k\!+\!1}\!\!,\e_{1}^{k\!+\!1}\!\!,\e_{2}^{k},\!\bullet^{k}\!)\|_{2} \\
&~~  \!+\!\|\e_{1}^{k+1}\!-\!\e_{1}^{k}\|_{2},\\
& \!\!=\!\! (\rho+\mu+1)\|\e_{1}^{k+1}-\e_{1}^{k}\|_{2},
\end{split}
\end{equation}
\begin{equation}\label{error3-derivative}
\begin{split}
\hspace{-0.2cm} \|\mathbf{r}_{\e_{2}}^{k}\!\|_{2} & \!\!\leq\!\! \|\nabla\!_{\e_{1}}\!\F\!(\v^{k}\!,\!\e_{1}^{k},\!\e_{2}^{k},\!\bullet^{k}\!)
     \!\!-\!\!\nabla\!_{\e_{1}}\!\F\!(\v^{k\!+\!1}\!\!,\e_{1}^{k\!+\!1}\!\!,\e_{2}^{k\!+\!1}\!\!,\!\bullet^{k}\!)\|_{2} \\
&~~ \!+\!\|\e_{2}^{k+1}\!-\!\e_{2}^{k}\|_{2},\\
& = (\rho+\mu+1)\|\e_{2}^{k+1}-\e_{2}^{k}\|_{2},
\end{split}
\end{equation}
Then, through \eqref{error1-derivative}-\eqref{error3-derivative}, we can obtain
 {\setlength\abovedisplayskip{2pt}
 \setlength\belowdisplayskip{2pt}
  \setlength\jot{1pt}
\begin{equation}\label{error4-derivative1}
\|\!\!\left[\!\!\begin{array}{c} {\mathbf{r}_{\v}^k} \\ {\mathbf{r}_{\e_{1}}^k} \\ {\mathbf{r}_{\e_{2}}^k}\end{array}\!\!\right]\!\!\|_2^2 \\
\leq  (\rho\!\!+\!\!L\!\!+\!\!1)^2\!\|\!\!\!\left[\!\!\!\begin{array}{c}{\v^{k\!+\!1}\!\!-\!\!\v^{k}} \\ {\e_{1}^{k\!+\!1}\!\!-\!\!\e_{1}^{k}} \\ {\e_{2}^{k\!+\!1}\!\!-\!\!\e_{2}^{k}}\end{array}\!\!\!\!\right]\!\!\!\|_{2}.
\end{equation}}

Moreover, since $\v^{k+1}$, $\e_{1}^{k+1}$, and $\e_{2}^{k+1}$ are minimizers of convex quadratic problems \eqref{proximal-x-update}, \eqref{proximal-v1-update}, and \eqref{proximal-v2-update} respectively, according to the fixed point theorem, we have
 {\setlength\abovedisplayskip{2pt}
 \setlength\belowdisplayskip{2pt}
  \setlength\jot{3pt}
\begin{equation}\label{x-v-other-expression}
\begin{split}
& \v^{k+1} = [\v^{k+1}-\nabla_{\v}\F(\v^{k+1},\e_{1}^{k},\e_{2}^{k},\bullet^{k})]_{+},\\
&\e_{1}^{k+1} = [\e_{1}^{k+1}-\nabla_{\e_{1}}\F(\v^{k+1},\e_{1}^{k+1},\e_{2}^{k},\bullet^{k})]_{+},\\
&\e_{2}^{k+1} = [\e_{2}^{k+1}-\nabla_{\e_{2}}\F(\v^{k+1},\e_{1}^{k+1},\e_{2}^{k+1},\bullet^{k})]_{+}.
\end{split}
\end{equation}
Plugging} \eqref{x-v-other-expression} into \eqref{error3-defi}, we have
\begin{equation}\label{hatx-other-expression}
\begin{split}
\left[\!\!\begin{array}{c}{\v^{k+1}} \\ {\e_{1}^{k+1}} \\ {\e_{2}^{k+1}}\end{array}\!\!\!\right]
\!\!=\!\!\left[\!\!\!\begin{array}{c}{\v^{k}-\nabla_{\v}\F(\v^{k},\e_{1}^{k},\e_{2}^{k},\bullet^{k})+\mathbf{r}_{\v}^k}
\\ {\e_{1}^{k}-\nabla_{\e_{1}}\F(\v^{k},\e_{1}^{k},\e_{2}^{k},\bullet^{k})+\mathbf{r}_{\e_{1}}^k}
\\ {\e_{2}^{k}-\nabla_{\e_{2}}\F(\v^{k},\e_{1}^{k},\e_{2}^{k},\bullet^{k})+\mathbf{r}_{\e_{2}}^k}\end{array}\!\!\!\right]_{+}.
\end{split}
\end{equation}
Then, we have the following derivations
 {\setlength\abovedisplayskip{1pt}
 \setlength\belowdisplayskip{1pt}
  \setlength\jot{1pt}
\begin{equation}\label{errorsum-derivative-2}
\begin{split}
& \|\!\!\left[\!\!\begin{array}{c}{\v^{k}}-\v^{k+1} \\ {\e_{1}^{k}}-{\e_{1}^{k+1}} \\ {\e_{2}^{k}}-{\e_{2}^{k+1}}\end{array}\!\!\right]\!\|_{2} \\
= & \|\!\!\left[\!\!\begin{array}{c}{\v^{k}} \\ {\e_{1}^{k}} \\ {\e_{2}^{k}}\end{array}\!\!\right]
\!\!-\!\!\left[\!\!\!\begin{array}{c}{\v^{k}\!-\!\nabla_{\v}\F(\v^{k},\e_{1}^{k},\e_{2}^{k},\bullet^{k})+\mathbf{r}_{\v}^k}
\\ {\e_{1}^{k}\!-\!\nabla_{\e_{1}}\F(\v^{k},\e_{1}^{k},\e_{2}^{k},\bullet^{k})+\mathbf{r}_{\e_{1}}^k}
\\ {\e_{2}^{k}\!-\!\nabla_{\e_{2}}\F(\v^{k},\e_{1}^{k},\e_{2}^{k},\bullet^{k})+\mathbf{r}_{\e_{2}}^k}\end{array}\!\!\!\right]_{+}\!\!\|_{2}\\
\geq & \|\!\!\left[\!\!\begin{array}{c}{\v^{k}} \\ {\e_{1}^{k}} \\ {\e_{2}^{k}}\end{array}\!\!\right]
-\!\left[\!\!\!\begin{array}{c}{\v^{k}-\nabla_{\v}\F(\v^{k},\e_{1}^{k},\e_{2}^{k},\bullet^{k})}
\\ {\e_{1}^{k}-\nabla_{\e_{1}}\F(\v^{k},\e_{1}^{k},\e_{2}^{k},\bullet^{k})}
\\ {\e_{2}^{k}-\nabla_{\e_{2}}\F(\v^{k},\e_{1}^{k},\e_{2}^{k},\bullet^{k})}\end{array}\!\!\!\right]_{+}\!\!\|_{2} \\
& - \|\!\left[\!\!\!\begin{array}{c}{\v^{k}-\nabla_{\v}\F(\v^{k},\e_{1}^{k},\e_{2}^{k},\bullet^{k})+\mathbf{r}_{\v}^k}
\\ {\e_{1}^{k}-\nabla_{\e_{1}}\F(\v^{k},\e_{1}^{k},\e_{2}^{k},\bullet^{k})+\mathbf{r}_{\e_{1}}^k}
\\ {\e_{2}^{k}-\nabla_{\e_{2}}\F(\v^{k},\e_{1}^{k},\e_{2}^{k},\bullet^{k})+\mathbf{r}_{\e_{2}}^k}\end{array}\!\!\!\right]_{+} \\
&~~~~ -\!\left[\!\!\!\begin{array}{c}{\v^{k}-\nabla_{\v}\F(\v^{k},\e_{1}^{k},\e_{2}^{k},\bullet^{k})}
\\ {\e_{1}^{k}-\nabla_{\e_{1}}\F(\v^{k},\e_{1}^{k},\e_{2}^{k},\bullet^{k})}
\\ {\e_{2}^{k}-\nabla_{\e_{2}}\F(\v^{k},\e_{1}^{k},\e_{2}^{k},\bullet^{k})}\end{array}\!\!\!\right]_{+}\!\!\|_{2},
\end{split}
\end{equation}
where} the inequality comes from the triangle inequality. Following the non-expansiveness property of projection operations, \eqref{errorsum-derivative-2} can be deduced to
 {\setlength\abovedisplayskip{3pt}
 \setlength\belowdisplayskip{3pt}
  \setlength\jot{1pt}
\begin{equation}\label{errorsum-derivative-3}
\begin{split}
\hspace{-0.5cm} &\|\!\!\left[\!\!\begin{array}{c}{\v^{k}}-\v^{k+1} \\ {\e_{1}^{k}}-{\e_{1}^{k+1}} \\ {\e_{2}^{k}}-{\e_{2}^{k+1}}\end{array}\!\!\right]\!\|_{2} \\
\overset{b} \geq & \|\!\!\left[\!\!\!\begin{array}{c}{\v^{k}} \\ {\e_{1}^{k}} \\ {\e_{2}^{k}}\end{array}\!\!\right]
\!\!\!-\!\!\!\left[\!\!\!\!\begin{array}{c}{\v^{k}\!\!-\!\!\nabla_{\v}\F(\v^{k},\!\e_{1}^{k},\!\e_{2}^{k},\bullet^{k})}
\\ {\e_{1}^{k}\!\!-\!\!\nabla_{\e_{1}}\!\F(\v^{k},\!\e_{1}^{k},\!\e_{2}^{k},\bullet^{k})}
\\ {\e_{2}^{k}\!\!-\!\!\nabla_{\e_{2}}\!\F(\v^{k},\!\e_{1}^{k},\!\e_{2}^{k},\bullet^{k})}\end{array}\!\!\!\!\right]_{+}\!\|_{2}
\!-\!\|\!\!\left[\!\!\!\!\begin{array}{c} {\mathbf{r}_{\v}^k} \\ {\mathbf{r}_{\e_{1}}^k} \\ {\mathbf{r}_{\e_{2}}^k}\end{array}\!\!\!\right]\!\!\|_2.
\end{split}
\end{equation}
Moreover,} \eqref{errorsum-derivative-3} can be further derived as
 {\setlength\abovedisplayskip{3pt}
 \setlength\belowdisplayskip{2pt}
  \setlength\jot{1pt}
\begin{equation}\label{errorsum-derivative-4}
\begin{split}
& \|\!\!\left[\!\!\begin{array}{c}{\v^{k}}-\v^{k+1} \\ {\e_{1}^{k}}-{\e_{1}^{k+1}} \\ {\e_{2}^{k}}-{\e_{2}^{k+1}}\end{array}\!\!\right]\!\|_{2} \\
\overset{a} \geq & (\rho-\alpha) \|\!\!\left[\!\!\begin{array}{c}{\v^{k}}-{\v(\p^{k},\z_1^{k},\z_2^{k},\y_1^k,\y_2^k)} \\ {\e_{1}^{k}}-{\e_{1}(\p^{k},\z_1^{k},\z_2^{k},\y_1^k,\y_2^k)} \\ {\e_{2}^{k}}-{\e_{2}(\p^{k},\z_1^{k},\z_2^{k},\y_1^k,\y_2^k)}\end{array}\!\!\right]
\!\|_{2}
-\|\!\!\left[\!\!\begin{array}{c} {\mathbf{r}_{\v}^k} \\ {\mathbf{r}_{\e_{1}}^k} \\ {\mathbf{r}_{\e_{2}}^k}\end{array}\!\!\right]\!\!\|_2 \\
\overset{b} \geq & (\rho\!\!-\!\!\alpha) \|\!\!\!\left[\!\!\!\begin{array}{c}{\v^{k}}\!-\!{\v(\p^{k}\!,\z_1^{k}\!,\z_2^{k}\!,\y_1^k\!,\y_2^k)} \\ {\e_{1}^{k}}\!-\!{\e_{1}(\p^{k}\!,\z_1^{k}\!,\z_2^{k}\!,\y_1^k\!,\y_2^k)} \\ {\e_{2}^{k}}\!-\!{\e_{2}(\p^{k}\!,\z_1^{k}\!,\z_2^{k}\!,\y_1^k,\y_2^k)}\end{array}\!\!\!\right]
 \!\!\!\|_{2}\!\!-\!\!(\rho\!\!+\!\!L\!\!+\!\!1)\!\|\!\!\!\left[\!\!\!\begin{array}{c}{\v^{k\!+\!1}\!\!-\!\!\v^{k}} \\ {\e_{1}^{k\!+\!1}\!\!-\!\!\e_{1}^{k}} \\ {\e_{2}^{k\!+\!1}\!\!-\!\!\e_{2}^{k}}\end{array}\!\!\!\!\right]\!\!\!\|_{2},
\end{split}
\end{equation}
where} ``$\overset{a} \geq$'' holds since $\mathcal{F}(\mathbf{v},\mathbf{e}_1,\mathbf{e}_2, \bullet^k)$ is a strongly convex function with modulus $(\rho-\alpha)>0$ \cite{global-error-bound} and ``$\overset{b} \geq $'' follows from \eqref{error4-derivative1}.
Then, we can obtain \eqref{error-bound1} as follows
\begin{equation*}\label{errorsum-derivative-5}
\begin{split}
& \|\!\!\!\left[\!\!\!\begin{array}{l}{\v^{k}\!-\!\v^{k+1}} \\ {\e_{1}^{k}\!-\!\e_{1}^{k+1}} \\ {\e_{2}^{k}\!-\!\e_{2}^{k+1}}\end{array}\!\!\!\right]\!\!\!\|_{2}^{2}
\!\geq\! \varepsilon_{1}\|\!\!\!\left[\!\!\!\begin{array}{c}{\v^{k}\! -\!\v\left(\p^{k}, \z_{1}^{k}, \z_{2}^{k}, \y_{1}^{k}, \y_{2}^{k}\right)}
 \\ {\e_{1}^{k}\!-\!\e_{1}\left(\p^{k}, \z_{1}^{k}, \z_{2}^{k}, \y_{1}^{k}, \y_{2}^{k}\right)}
 \\ {\e_{2}^{k}\!-\!\e_{2}\left(\p^{k}, \z_{1}^{k}, \z_{2}^{k}, \y_{1}^{k}, \y_{2}^{k}\right)}\end{array}\!\!\!\right]\!\!\!\|_{2}^{2},
\end{split}
\end{equation*}
where $\varepsilon_1=\frac{(\rho-\alpha)^2}{(\rho+L+2)^2}$.

Next, we prove that \eqref{error-bound2} holds.
Based on the triangle inequality and \eqref{error-bound1}, we have
\begin{equation}\label{error-bound2-proof}
\begin{split}
& \|\!\!\!\left[\!\!\!\begin{array}{c}{\v^{k+1}\! -\!\v\left(\p^{k}, \z_{1}^{k}, \z_{2}^{k}, \y_{1}^{k}, \y_{2}^{k}\right)}
 \\ {\e_{1}^{k+1}\!-\!\e_{1}\left(\p^{k}, \z_{1}^{k}, \z_{2}^{k}, \y_{1}^{k}, \y_{2}^{k}\right)}
 \\ {\e_{2}^{k+1}\!-\!\e_{2}\left(\p^{k}, \z_{1}^{k}, \z_{2}^{k}, \y_{1}^{k}, \y_{2}^{k}\right)}\end{array}\!\!\!\right]\!\!\!\|_{2}^{2} \\
\leq &  \|\!\!\!\left[\!\!\!\begin{array}{l}{\v^{k+1}\!-\!\v^{k}} \\ {\e_{1}^{k+1}\!-\!\e_{1}^{k}} \\ {\e_{2}^{k+1}\!-\!\e_{2}^{k}}\end{array}\!\!\!\right]\!\!\!\|_{2}^{2}
+ \|\!\!\!\left[\!\!\!\begin{array}{c}{\v_{k}\! -\!\v\left(\p^{k}, \z_{1}^{k}, \z_{2}^{k}, \y_{1}^{k}, \y_{2}^{k}\right)}
 \\ {\e_{1}^{k}\!-\!\e_{1}\left(\p^{k}, \z_{1}^{k}, \z_{2}^{k}, \y_{1}^{k}, \y_{2}^{k}\right)}
 \\ {\e_{2}^{k}\!-\!\e_{2}\left(\p^{k}, \z_{1}^{k}, \z_{2}^{k}, \y_{1}^{k}, \y_{2}^{k}\right)}\end{array}\!\!\!\right]\!\!\!\|_{2}^{2}  \\
\leq & \left(\!1+\!\frac{1}{\sqrt{\varepsilon_{1}}}\!\right)\!\!\|\!\!\!\left[\!\!\!\begin{array}{l}{\v^{k+1}\!-\!\v^{k}} \\ {\e_{1}^{k+1}\!-\!\e_{1}^{k}} \\ {\e_{2}^{k+1}\!-\!\e_{2}^{k}}\end{array}\!\!\!\right]\!\!\!\|_{2}^{2},
\end{split}
\end{equation}
i.e.,
\begin{equation*}\label{error-bound2-proof}
\begin{split}
& \|\!\!\!\left[\!\!\!\!\begin{array}{l}{\v^{k}\!\!-\!\!\v^{k+1}} \\ {\e_{1}^{k}\!\!-\!\!\e_{1}^{k+1}} \\ {\e_{2}^{k}\!\!-\!\!\e_{2}^{k+1}}\end{array}\!\!\!\right]\!\!\!\|_{2}^{2}
\!\geq\! \!\varepsilon_{2}\|\!\!\!\left[\!\!\!\!\begin{array}{c}{\v^{k+1}\!\! -\!\v\left(\p^{k}, \z_{1}^{k}, \z_{2}^{k}, \y_{1}^{k}, \y_{2}^{k}\right)}
 \\ {\e_{1}^{k+1}\!\!-\!\e_{1}\left(\p^{k}, \z_{1}^{k}, \z_{2}^{k}, \y_{1}^{k}, \y_{2}^{k}\right)}
 \\ {\e_{2}^{k+1}\!\!-\!\e_{2}\left(\p^{k}, \z_{1}^{k}, \z_{2}^{k}, \y_{1}^{k}, \y_{2}^{k}\right)}\end{array}\!\!\!\right]\!\!\!\|_{2}^{2},
\end{split}
\end{equation*}
where $\varepsilon_{2}=\frac{(\rho-\alpha)^2}{(2\rho+L+2-\alpha)^2}$.

Moreover, through similar proofs for (3.6)-(3.8) in \cite{proximal-admm}, we can verify that inequalities \eqref{error-bound3}-\eqref{error-bound5} hold.
In addition, since $g(\v, \e_{1}, \e_{2})=g(\v)=\pmb{\lambda}^T\v-\frac{\alpha}{2}\|\v-0.5\|_2^2$ is Lipschitz differentiable corresponding to variables $\v$, $\e_{1}$, and $\e_{2}$ with constant $L_{g}>\alpha$, based on \emph{Proposition~2.3} in \cite{proximal-admm}, we can see that problem \eqref{pADMM-frame-problem} satisfies the strict complementary condition. Then, we can prove that inequality \eqref{error-bound6} holds through similar derivations to (3.9) in \cite{proximal-admm}. This ends the proof.
\end{proof}

\section{Proof of \emph{Lemma \ref{F-D-P-change}}}\label{descent3-proof}
\begin{proof}
First, we define the following quantities
\begin{subequations}\label{define-divide-F}
\begin{align}
& \mathcal{F}_\v^k=\F^k-\F(\v^{k+1},\e_{1}^{k},\e_{2}^{k},\p^k,\z_1^k,\z_2^k,\y_1^k,\y_2^k),\label{define-divide-Fv} \\
& \mathcal{F}_{\e_1}^k= \mathcal{F}(\v^{k+1},\e_{1}^{k},\e_{2}^{k},\p^k,\z_1^k,\z_2^k,\y_1^k,\y_2^k) \nonumber \\
& ~~~~~~~~ -\mathcal{F}(\v^{k+1},\e_{1}^{k+1},\e_{2}^{k},\p^k,\z_1^k,\z_2^k,\y_1^k,\y_2^k),\label{define-divide-Fe1} \\
& \mathcal{F}_{\e_2}^k= \mathcal{F}(\v^{k+1},\e_{1}^{k+1},\e_{2}^{k},\p^k,\z_1^k,\z_2^k,\y_1^k,\y_2^k) \nonumber \\
& ~~~~~~~~ -\mathcal{F}(\v^{k+1},\e_{1}^{k+1},\e_{2}^{k+1},\p^k,\z_1^k,\z_2^k,\y_1^k,\y_2^k),\label{define-divide-Fe2} \\
& \mathcal{F}_{\p\z}^k= \mathcal{F}(\v^{k+1},\e_{1}^{k+1},\e_{2}^{k+1},\p^k,\z_1^k,\z_2^k,\y_1^k,\y_2^k) \nonumber \\
& ~~~~~~~~ - \F(\v^{k\!+\!1}\!,\e_{1}^{k\!+\!1}\!,\e_{2}^{k\!+\!1}\!,\p^{k\!+\!1}\!,\z_1^{k\!+\!1}\!,\z_2^{k\!+\!1}\!,\y_1^k,\y_2^k), \label{define-divide-Fpz} \\
& \mathcal{F}_{\y}^k\!=\! \F(\v^{k\!+\!1}\!\!,\e_{1}^{k\!+\!1}\!\!,\e_{2}^{k\!+\!1}\!\!,\p^{k\!+\!1}\!\!,\z_1^{k\!+\!1}\!\!,\z_2^{k\!+\!1}\!\!,\y_1^k,\y_2^k)\!-\!\F^{k+1}. \label{define-divide-Fy}
\end{align}
\end{subequations}
It is easy to see
\begin{equation}\label{primal-descent1}
\F^k-\F^{k+1}  =  \mathcal{F}_\v^k+\mathcal{F}_{\e_1}^k+\mathcal{F}_{\e_2}^k+\mathcal{F}_{\p\z}^k+\mathcal{F}_{\y}^k.
\end{equation}
From \eqref{proximal-x-update} and \eqref{F-define}, we can find
\[\v^{k+1}=\underset{\v}{\rm argmin}\F(\v, \e_{1}^{k}\!,\e_{2}^{k\!+\!1},\p^{k},\z_1^{k},\z_2^k,\y_1^k,\y_2^k).
\]
Since $\F(\v, \e_{1}^{k}\!,\e_{2}^{k},\p^{k},\z_1^{k},\z_2^k,\y_1^k,\y_2^k)$ is a strongly convex quadratic function, we have
\[
\begin{split}
\hspace{-0.34cm}\mathcal{F}_\v^k=& \frac{\rho\!+\!\mu\!-\!\alpha}{2}\|\v^{k}\!-\!\v^{k\!+\!1}\|_2^2 \!+\!\frac{\mu}{2}\|\A(\v^{k}\!-\!\v^{k\!+\!1})\|_2^2 \\
\geq & \frac{\rho+\mu-\alpha+\mu\lambda_{\min}(\A^T\A)}{2}\|\v^{k}-\v^{k+1}\|_2^2.
\end{split}
\]
Under the assumption of $\mu\lambda_{\min}(\A^T\A)\geq\alpha$, we can obtain
\begin{equation}\label{primal-descent1-a1}
\mathcal{F}_{\v}^k
\geq  \frac{\rho+\mu}{2}\|\v^{k}-\v^{k+1}\|_2^2.
\end{equation}

Since $\F(\v^{k+1}\!, \e_{1}\!,\e_{2}^{k},\p^{k},\z_1^{k},\z_2^k,\y_1^k,\y_2^k)$ is a strongly convex quadratic function, $\e_1\succeq\mathbf{0}$, and $\e_1^{k+1}$ is the minimizer of problem \eqref{proximal-v1-update}, we have
\begin{equation}\label{primal-descent1-e1}
  \mathcal{F}_{\e_1}^k \geq \frac{\rho+\mu}{2}\|\e_{1}^{k}-\e_1^{k+1}\|_2^2.
\end{equation}
Similarly, we can also obtain
\begin{equation}\label{primal-descent1-e2}
  \mathcal{F}_{\e_2}^k \geq \frac{\rho+\mu}{2}\|\e_{2}^{k}-\e_2^{k+1}\|_2^2.
\end{equation}

Plugging \eqref{proximal-z-update} into \eqref{define-divide-Fpz}, we can derive it as \eqref{primal-descent1-b}, where the last inequality holds since $0<\beta \leq 1$.
\begin{figure*}
\begin{equation}\label{primal-descent1-b}
\begin{split}
\mathcal{F}_{\p\z}^k =&\frac{\rho}{2}\!\Big(\!\big(\p^{k\!+\!1}\!-\!\p^k\big)\!^T\!\big(2\v^{k\!+\!1}\!-\!\p^{k\!+\!1}\!-\!\p^k\!\big)\!\!+\!\!\big(\z_1^{k\!+\!1}\!\!-\!\z_1^k\!\big)\!^T\!\big(2\e_1^{k\!+\!1}\! \!-\!\z_1^{k+1}\!-\!\z_1^k\big)\!\!+\!\!\big(\z_2^{k+1}\!\!-\!\z_2^k\big)^T\!\big(2\e_2^{k+1}\!-\!\z_2^{k+1}\!-\!\z_2^k\big)\!\Big) \\
= & \frac{p}{2}\!\Big(\!\frac{2}{\beta}\!\!-\!\!1\!\Big)\!\Big(\!\|\p^{k\!+\!1}\!\!\!-\!\p^k\|_2^2\!+\!\!\|\z_1^{k\!+\!1}\!\!-\!\z_1^k\|_2^2\!+\!\!\|\z_2^{k\!+\!1}\!\!-\!\z_2^k\|_2^2\!\Big) \\
\geq & \frac{p}{2\beta}\Big(\!\|\p^{k+1}\!\!-\!\p^k\|_2^2\!+\!\|\z_1^{k+1}\!\!-\!\z_1^k\|_2^2\!+\!\|\z_2^{k+1}\!\!-\!\z_2^k\|_2^2\Big)
\end{split}
\end{equation}
\hrulefill
\vspace*{4pt}
\end{figure*}

Plugging \eqref{proximal-lamda-update} into \eqref{define-divide-Fy}, we can obtain
\begin{equation}\label{primal-descent1-c}
\begin{split}
\mathcal{F}_{\y}^k =& \big(\y_1^k-\y_1^{k+1}\big)^T\big(\A\v^{k+1}+\e_{1}^{k+1}-\b\big) \\
& +\big(\y_2^k-\y_2^{k+1}\big)^T\big(\v^{k+1}-\e_{2}^{k+1}\big) \\
=& \!-\mu\|\A\v^{k+1}+\e_{1}^{k+1}-\b\|_2^2\!\! -\!\!\mu\|\v^{k+1}\!-\!\e_{2}^{k+1}\|_2^2.
\end{split}
\end{equation}

Then, plugging \eqref{primal-descent1-a1}--\eqref{primal-descent1-c} into \eqref{primal-descent1}, we have
\begin{equation}\label{F-change-proof}
\begin{split}
 \F^{k}\!-\!\F^{k+1}\!\! \geq \!& \frac{\rho\!+\!\mu}{2}\|\!\!\!\left[\!\!\!\!\begin{array}{l}{\v^{k}\!-\!\v^{k+1}} \\ {\e_{1}^{k}\!-\!\e_{1}^{k+1}} \\ {\e_{2}^{k}\!-\!\e_{2}^{k+1}}\end{array}\!\!\!\right]\!\!\!\|_{2}^{2}
\!+\!\!\frac{\rho}{2 \beta}\!\|\!\!\!\left[\!\!\!\begin{array}{l}{\p^{k}\!-\!\p^{k+1}} \\ {\z_{1}^{k}\!-\!\z_{1}^{k+1}} \\ {\z_{2}^{k}\!-\!\z_{2}^{k+1}}\end{array}\!\!\!\right]\!\!\!\|_{2}^{2} \\
&\!-\!\mu\|\!\!\left[\!\!\!\begin{array}{c}{\A \v^{k}+\e_{1}^{k}-\b} \\ {\v^{k}-\e_{2}^{k}}\end{array}\!\!\!\right]\!\!\|_{2}^{2}.
\end{split}
\end{equation}
That completes the proof of \eqref{F-change}.

Next, we consider to prove inequality \eqref{D-change}. To facilitate discussions later, we define
\begin{equation}\label{D-divide}
\begin{split}
& \D_{\p\z}^k=\D^{k+1}-\D(\p^k,\z_1^k,\z_2^k,\y_1^{k+1},\y_2^{k+1}), \\
& \D_{\y}^k= \D(\p^k,\z_1^k,\z_2^k,\y_1^{k+1},\y_2^{k+1})-\D^k.
\end{split}
\end{equation}
Then, we have
\begin{equation}\label{dual-descent-divide}
\begin{split}
& \D^{k+1}-\D^{k}=\D_{\p\z}^k+\D_{\y}^k.
\end{split}
\end{equation}
According to \eqref{D-define}, we can write $\D_{\p\z}^k$ as
\begin{equation}\label{dual-descent-a1-1}
\begin{split}
\hspace{-0.35cm} \D_{\p\z}^k \!\!=& \F\!\Big(\!\v(\!\p^{k\!+\!1}\!\!, \z_{1}^{k\!+\!1}\!\!, \z_{2}^{k+1}\!\!, \y_{1}^{k\!+\!1}\!\!, \y_{2}^{k\!+\!1}\!),\! \e_1(\!\p^{k\!+\!1}\!\!, \z_{1}^{k\!+\!1}\!\!, \z_{2}^{k\!+\!1}\!\!, \\
&~~~ \y_{1}^{k+1}\!\!, \y_{2}^{k+1}), \e_2(\p^{k+1}\!\!, \z_{1}^{k+1}\!\!, \z_{2}^{k+1}\!\!, \y_{1}^{k+1}\!\!, \y_{2}^{k+1}), \\
&~~~ \p^{k+1}\!\!, \z_{1}^{k+1}\!\!, \z_{2}^{k+1}\!\!, \y_{1}^{k+1}\!\!, \y_{2}^{k+1}\Big) \\
-\!& \F\!\Big(\!\v(\p^{k}\!\!, \z_{1}^{k}\!, \z_{2}^{k}\!, \y_{1}^{k\!+\!1}\!\!, \y_{2}^{k\!+\!1}\!),\! \e_1\!(\p^{k}\!, \z_{1}^{k}\!, \z_{2}^{k}\!, \y_{1}^{k\!+\!1}\!\!, \y_{2}^{k\!+\!1}\!), \\
&~~ \e_2(\p^{k}\!, \z_{1}^{k}, \z_{2}^{k}, \y_{1}^{k\!+\!1}\!\!, \y_{2}^{k\!+\!1}), \p^{k}, \z_{1}^{k}, \z_{2}^{k}, \y_{1}^{k\!+\!1}\!\!, \y_{2}^{k\!+\!1}\!\Big). \\
\end{split}
\end{equation}
According to \eqref{x-D-define}, \eqref{dual-descent-a1-1} can be rewritten as
 {\setlength\abovedisplayskip{8pt}
 \setlength\belowdisplayskip{8pt}
  \setlength\jot{5pt}
\begin{equation}\label{dual-descent-a1-2}
\begin{split}
 \D_{\p\z}^k \!\!
\geq& \F\!\Big(\!\v(\!\p^{k\!+\!1}\!\!, \z_{1}^{k\!+\!1}\!\!, \z_{2}^{k\!+\!1}\!\!, \y_{1}^{k\!+\!1}\!\!, \y_{2}^{k\!+\!1}\!),\! \e_1(\!\p^{k\!+\!1}\!\!, \z_{1}^{k\!+\!1}\!\!, \z_{2}^{k\!+\!1}\!\!, \\
&~~~ \y_{1}^{k+1}\!\!, \y_{2}^{k+1}), \e_2(\p^{k+1}\!\!, \z_{1}^{k+1}\!\!, \z_{2}^{k+1}\!\!, \y_{1}^{k+1}\!\!, \y_{2}^{k+1}), \\
&~~~ \p^{k+1}\!\!, \z_{1}^{k+1}\!\!, \z_{2}^{k+1}\!\!, \y_{1}^{k+1}\!\!, \y_{2}^{k+1}\Big) \\
-\!\!& \F\!\Big(\!\v(\p^{k\!+\!1}\!\!, \z_{1}^{k\!+\!1}\!\!, \z_{2}^{k\!+\!1}\!\!, \y_{1}^{k\!+\!1}\!\!, \y_{2}^{k\!+\!1}\!),\! \e_1(\p^{k\!+\!1}\!\!, \z_{1}^{k\!+\!1}\!\!, \z_{2}^{k\!+\!1}\!\!, \\
&~~~ \y_{1}^{k+1}\!\!, \y_{2}^{k+1}), \e_2(\p^{k+1}\!\!, \z_{1}^{k+1}\!\!, \z_{2}^{k+1}\!\!, \y_{1}^{k+1}\!\!, \y_{2}^{k+1}), \\
&~~~ \p^{k}, \z_{1}^{k}, \z_{2}^{k}, \y_{1}^{k+1}, \y_{2}^{k+1}\Big).
\end{split}
\end{equation}
Plugging} \eqref{F-define} into \eqref{dual-descent-a1-2}, we can obtain
 {\setlength\abovedisplayskip{8pt}
 \setlength\belowdisplayskip{8pt}
  \setlength\jot{5pt}
\begin{equation}\label{dual-descent-a1-3}
\begin{split}
\hspace{-0.5cm}\D_{\p\z}^k \!\!
\geq &\frac{\rho}{2}\!(\!\p^{k\!+\!1}\!\!\!-\!\!\p^{k})\!^T\!\!\!\left(\p^{k\!+\!1}\!\!\!+\!\!\p^{k}\!\!\!-\!\!2\v\!\left(\p^{k\!+\!1}\!\!, \z_{1}^{k\!+\!1}\!\!, \z_{2}^{k\!+\!1}\!\!, \y_{1}^{k\!+\!1}\!\!, \y_{2}^{k\!+\!1}\!\right)\!\right) \\
\!\!+ &  \frac{\rho}{2}\!(\!\z_{1}^{k\!+\!1}\!\!-\!\!\z_{1}^{k})\!^T\!\!\left(\z_{1}^{k\!+\!1}\!\!\!+\!\!\z_{1}^{k}\!\!-\!\!2 \e_{1}\!\!\left(\p^{k\!+\!1}\!\!, \z_{1}^{k\!+\!1}\!\!, \z_{2}^{k\!+\!1}\!\!, \y_{1}^{k\!+\!1}\!\!, \y_{2}^{k\!+\!1}\!\right)\!\right) \\
\!\!+ &  \frac{\rho}{2}\!(\!\z_{2}^{k\!+\!1}\!\!-\!\!\z_{2}^{k})\!^T\!\!\left(\z_{2}^{k\!+\!1}\!\!\!+\!\!\z_{2}^{k}\!\!-\!\!2 \e_{2}\!\!\left(\p^{k\!+\!1}\!\!, \z_{1}^{k\!+\!1}\!\!, \z_{2}^{k\!+\!1}\!\!, \y_{1}^{k\!+\!1}\!\!, \y_{2}^{k\!+\!1}\!\right)\!\right),
\end{split}
\end{equation}
which} can be further written as
\begin{equation}\label{dual-descent-a1-4}
\D_{\p\z}^k\geq\frac{\rho}{2}\!\left[\!\!\begin{array}{l}{\p^{k+1}-\p^{k}} \\ {\z_{1}^{k+1}-\z_{1}^{k}} \\ {\z_{2}^{k+1}-\z_{2}^{k}}\end{array}\!\!\right]^{T}\!\!\!\pmb{\psi}^k.
\end{equation}
Through similar derivations, we can also get
\begin{equation}\label{dual-descent-a2}
\begin{split}
\D_{\y}^k
\!\geq\!  \mu\! \left[\!\!\!\!\begin{array}{c}{\A\v^{k}\!+\!\e_{1}^{k}\!-\!\b} \\ {\v^{k}\!-\!\e_{2}^{k}}\end{array}\!\!\!\!\right]^{T}\!\!\!\!\pmb{\phi}^k.
\end{split}
\end{equation}

Thus, combining \eqref{dual-descent-a1-4} and \eqref{dual-descent-a2}, we obtain \eqref{D-change}.

Finally, we consider to prove \eqref{P-change}.
According to Danskin's theorem \cite{convex-analysis}, we have
\begin{equation}\label{P-gradient-formula}
  \begin{split}
& \nabla \P\left(\p^{k}, \z_{1}^{k}, \z_{2}^{k}\right)
= \rho\left[\!\!\begin{array}{l}{\p^{k}}-\v(\p^k,\z_1^k,\z_2^k) \\ {\z_{1}^{k}}-{\e_{1}(\p^k,\z_1^k,\z_2^k)} \\ {\z_{2}^{k}}-{\e_{2}(\p^k,\z_1^k,\z_2^k)}\end{array}\!\!\!\right].
  \end{split}
\end{equation}
Applying the triangle inequality property and \eqref{error-bound4} to \eqref{P-gradient-formula}, we have the following derivations
 {\setlength\abovedisplayskip{10pt}
 \setlength\belowdisplayskip{10pt}
  \setlength\jot{8pt}
\begin{equation}\label{P-gradient-cha}
  \begin{split}
\hspace{-0.4cm} & \left\|\nabla \P\left(\p^{k}, \z_{1}^{k}, \z_{2}^{k}\right)\!-\!\nabla \P\left(\p^{k+1}, \z_{1}^{k+1}, \z_{2}^{k+1}\right)\right\|_{2} \\
\hspace{-0.4cm} =&\rho\|\!\!\left[\!\!\!\begin{array}{l}{\p^{k}\!\!-\!\p^{k\!+\!1}} \\ {\z_{1}^{k}\!-\!\z_{1}^{k\!+\!1}} \\ {\z_{2}^{k}\!-\!\z_{2}^{k\!+\!1}}\end{array}\!\!\!\!\right]
\!\!\!-\!\!\!\left[\!\!\!\!\begin{array}{l}{\v(\p^{k\!+\!1}\!\!,\z_1^{k\!+\!1}\!\!,\z_2^{k\!+\!1})\!\!-\!\!\v(\p^{k}\!,\z_1^{k}\!,\z_2^{k})}
\\ {\e_1\!(\p^{k\!+\!1}\!\!,\z_1^{k\!+\!1}\!\!,\z_2^{k\!+\!1})\!\!-\!\!\e_1\!(\p^{k}\!,\z_1^{k}\!,\z_2^{k})}
\\ {\e_2\!(\p^{k\!+\!1}\!\!,\z_1^{k\!+\!1}\!\!,\z_2^{k\!+\!1})\!\!-\!\!\e_2\!(\p^{k}\!,\z_1^{k}\!,\z_2^{k})}\end{array}\!\!\!\!\right]\!\!\|_2 \\
\hspace{-0.4cm} \leq & \rho\|\!\!\!\left[\!\!\!\begin{array}{l}{\p^{k}\!\!\!-\!\!\p^{k\!+\!1}} \\ {\z_{1}^{k}\!\!-\!\!\z_{1}^{k\!+\!1}} \\ {\z_{2}^{k}\!\!-\!\!\z_{2}^{k\!+\!1}}\end{array}\!\!\!\!\right]\!\!\!\|_2
\!\!+\!\!\rho\|\!\!\left[\!\!\!\!\begin{array}{l}{\v(\p^{k\!+\!1}\!\!,\z_1^{k\!+\!1}\!\!,\z_2^{k\!+\!1})\!\!-\!\!\v(\p^{k}\!,\z_1^{k}\!,\z_2^{k})}
\\ {\e_1\!(\p^{k\!+\!1}\!\!,\z_1^{k\!+\!1}\!\!,\z_2^{k\!+\!1})\!\!-\!\!\e_1\!(\p^{k}\!,\z_1^{k}\!,\z_2^{k})}
\\ {\e_2\!(\p^{k\!+\!1}\!\!,\z_1^{k\!+\!1}\!\!,\z_2^{k\!+\!1})\!\!-\!\!\e_2\!(\p^{k}\!,\z_1^{k}\!,\z_2^{k})}\end{array}\!\!\!\!\right]\!\!\!\|_2 \\
\hspace{-0.4cm} \leq & \rho\left(1+\frac{1}{\sqrt{\varepsilon_{4}}}\right)
\|\!\!\left[\!\!\!\begin{array}{l}{\p^{k}\!\!-\!\p^{k\!+\!1}} \\ {\z_{1}^{k}\!-\!\z_{1}^{k\!+\!1}} \\ {\z_{2}^{k}\!-\!\z_{2}^{k\!+\!1}}\end{array}\!\!\!\!\right]\!\!\|_2,
  \end{split}
\end{equation}
which} means the gradient of function $\P$ is Lipschitz continuous with respect to variables $\p$, $\z_1$, and $\z_2$. The corresponding Lipschitz constant is $\rho\left(1+\frac{1}{\sqrt{\varepsilon_{4}}}\right)$.
Then, according to property of the Lipschitz continuous function \cite{nonlinear-programm}, we have
\begin{equation}\label{proximal-descent-result}
  \begin{split}
\hspace{-0.28cm} &   \P^{k+1}-\P^k  \\
\hspace{-0.28cm} \leq & \left(\!\left[\!\!\!\begin{array}{c}{\p^{k+1}} \\ {\z_{1}^{k+1}} \\ {\z_{2}^{k+1}}\end{array}\!\!\!\right]
\!\!-\!\!\left[\!\!\!\begin{array}{c}{\p^{k}} \\ {\z_{1}^{k}} \\ {\z_{2}^{k}}\end{array}\!\!\!\right]\!\right)^{T}\!\!\!
\left(\!\left[\!\!\!\begin{array}{c}{\p^{k+1}} \\ {\z_{1}^{k+1}} \\ {\z_{2}^{k+1}}\end{array}\!\!\!\right]
\!\!\!-\!\!\!\left[\!\!\!\begin{array}{c}{\v(\p^{k},\z_1^k,\z_2^k)} \\ {\e_{1}(\p^{k},\z_1^k,\z_2^k)} \\ {\e_{2}(\p^{k},\z_1^k,\z_2^k)}\end{array}\!\!\!\right]\!\right) \\
\hspace{-0.28cm} & +\frac{\rho}{2}\left(1+\frac{1}{\sqrt{\varepsilon_{4}}}\right)\|\!\!\left[\!\!\begin{array}{c}{\p^{k+1}} \\ {\z_{1}^{k+1}} \\ {\z_{2}^{k+1}}\end{array}\!\!\right]
\!\!-\!\!\left[\!\!\begin{array}{c}{\p^{k}} \\ {\z_{1}^{k}} \\ {\z_{2}^{k}}\end{array}\!\!\right]\!\!\|_2^2.
  \end{split}
\end{equation}
Letting $\eta = \left(1+\frac{1}{\sqrt{\varepsilon_{4}}}\right)$, we can reach  \eqref{P-change}. This ends the proof.
\end{proof}

\section{Proof of Fact \ref{W-property}}\label{W-property-proof}
\begin{proof}
To be clear, we rewrite $\mathbf{A}$ in \eqref{Ab-construct_A} as follows
\[
{\mathbf{A}}\!=\![\hat{\mathbf{W}}_1(\mathbf{Q}_1\!\otimes\!\mathbf{I}); \cdots;\!\hat{\mathbf{W}}_{\tau} \!(\mathbf{Q}_\tau\!\otimes\!\mathbf{I});\cdots;\hat{\mathbf{W}}_{\Gamma_c}(\mathbf{Q}_{\Gamma_c}\!\otimes\!\mathbf{I});\mathbf{S}].
\]
Moreover, $\hat{\W}_\tau$ in \eqref{W-hat-equal} can be written equivalently as
\[
\hat{\mathbf{W}}_{\tau}\!=\begin{bmatrix} \mathbf{P}\displaystyle\bigg(\sum_{i\in\mathcal{K}_1}\mathbf{T}_{i}\oplus\! 2\bigg)\mathbf{D}_\tau\oplus\! 2;\dotsb; \mathbf{P}\displaystyle\bigg(\!\sum_{i\in\mathcal{K}_{2^q-1}}\!\!\!\!\mathbf{T}_{i}\oplus\! 2\bigg)\mathbf{D}_\tau
\end{bmatrix}.
\]
 Based on the definitions of $\mathbf{T}_i$ and $\mathbf{D}_{\tau}$ (see \eqref{Ti} and \eqref{D}), we can derive each term in matrix $\hat{\mathbf{W}}_{\tau}$ as
\begin{equation}\label{W-hat-eachpart-equal}
\begin{split}
\hspace{-0.35cm}\mathbf{P}\!\displaystyle\bigg(\sum_{i\in\mathcal{K}_\ell}&\mathbf{T}_{i}\bigg)\mathbf{D}_\tau
=\mathbf{P}\textrm{diag}\!\bigg(\displaystyle\Big(\!\!\sum_{i\in\mathcal{K}_{\ell}}\!\!\mathbf{b}_{i}^T\!\!\oplus\! 2\!\Big)\mathbf{D}(2^q,h_{\tau_1}),\\
&\!\!\displaystyle\Big(\!\!\sum_{i\in\mathcal{K}_{\ell}}\!\!\mathbf{b}_{i}^T\!\!\oplus\! 2\!\Big)\mathbf{D}(2^q,h_{\tau_2}),
\displaystyle\Big(\!\!\sum_{i\in\mathcal{K}_\ell}\!\!\mathbf{b}_{i}^T\!\!\oplus\! 2\!\Big)\mathbf{D}(2^q,h_{\tau_3})\!\!\bigg).
\end{split}
\end{equation}

Since there is only one ``$1$'' in either column/row of matrix $\mathbf{D}(2^q,h_{\tau_j})$
and nonzero elements in $\displaystyle\sum_{i\in\mathcal{K}_\ell}\!\mathbf{b}_{i}^T\!\oplus 2$ are  ``$1$'',
row vector $\big(\displaystyle\sum_{i\in\mathcal{K}_\ell}\mathbf{b}_{i}^T\oplus 2\big)\mathbf{D}(2^q,h_{\tau_j})$ only includes one nonzero element ``$1$'', where $j=1,2,3$.
Moreover, since elements in matrix $\mathbf{P}$ are $1$ or $-1$, elements in $\mathbf{P}\!\displaystyle\bigg(\sum_{i\in\mathcal{K}_\ell}\mathbf{T}_{i}\bigg)\mathbf{D}_\tau$ are 0, $1$, or $-1$.
Therefore, we can conclude that element $\hat{\mathbf{W}}_{\tau}$ is also  either 0, $1$, or $-1$.

Moreover, since variable-selecting matrix $\mathbf{Q}_{\tau}$ has only one nonzero element ``1'' in its each row/column, $\mathbf{Q}_{\tau}\otimes \mathbf{I}$ should have the same property. Therefore, it is obvious that elements in $\hat{\mathbf{W}}_{\tau}(\mathbf{Q}_{\tau}\otimes \mathbf{I})$ are 1, -1, or 0. Besides, since matrix $\mathbf{S}$ only includes one nonzero element ``1'', we can conclude that matrix $\A$ consists of elements 1, -1, and 0. This completes the proof.
\end{proof}

%
%

\ifCLASSOPTIONcaptionsoff
  \newpage
\fi



%
%
%





\begin{thebibliography}{99}

\bibitem{Davey-nonBP}
M. C. Davey and D. J. C. MacKay, ``Low density parity check codes over GF(q),'' \emph{IEEE Commun. Lett}., vol. 2, no. 6, pp. 165-167, June 1998.

\bibitem{optical-communication}
I. B. Djordjevic and B. Vasic, ``Nonbinary LDPC codes for optical communication systems,'' \emph{IEEE Photonics Tech. Lett.}, vol. 17, no. 10, pp. 2224-2226, Oct. 2005.

\bibitem{Underwater-Acoustic}
J. Huang, S. Zhou and P. Willett, ``Nonbinary LDPC coding for multicarrier underwater acoustic communication,'' \emph{IEEE J. Select. Areas Commun.}, vol. 26, no. 9, pp. 1684-1696, Dec. 2008.

\bibitem{Flash-Memories}
Y. Maeda and H. Kaneko, ``Error control coding for multilevel cell flash memories using nonbinary low-density parity-check codes,'' \emph{2009 24th IEEE International Symposium on Defect and Fault Tolerance in VLSI Systems}, Chicago, IL, pp. 367-375, Oct. 2009.

\bibitem{better-performance}
R. Peng and R. Chen, ``Design of nonbinary quasi-cyclic LDPC cycle codes,'' \emph{2007 IEEE Information Theory Workshop}, Tahoe City, CA, pp. 13-18, Sept. 2007.

\bibitem{burst-error}
H. X. Song and J. R. Cruz, ``Reduced-complexity decoding of Q-ary LDPC codes for magnetic recording,'' \emph{IEEE Trans. Magn.}, vol. 39, no. 2, pp. 1081-1087, Mar. 2003.

\bibitem{combined-highmodu}
B. Rong, T. Jiang, X. Li and M. R. Soleymani, ``Combine LDPC codes over GF(q) with q-ary modulations for bandwidth efficient transmission,'' \emph{IEEE Trans. Broad.}, vol. 54, no. 1, pp. 78-84, Mar. 2008.

\bibitem{FHT-nonBP}
D. Declercq and M. Fossorier, ``Decoding algorithms for nonbinary LDPC codes over GF(q),'' \emph{IEEE Trans. Commun.}, vol. 55, no. 4, pp. 633-643, Apr. 2007.

\bibitem{FeldmanLP}
J. Feldman, M. T. Wainwright, and D. R. Karger, ``Using linear progamming to decoding binary linear codes,'' \emph{IEEE Trans. Inf. Theory}, vol.~51, no.~1, pp.~954-972, Jan.~2005.

\bibitem{analysis-1}
J. Feldman, T. Malkin, R. A. Servedio, C. Stein, and M. J. Wainwright, ``LP decoding corrects a constant fraction of errors,'' \emph{IEEE Trans. Inf. Theory}, vol. 53, no. 1, pp. 82-89, Jan. 2007.

\bibitem{analysis-2}
C. Daskalakis, A. G. Dimakis, R. M. Karp, and M. J. Wainwright, ``Probabilistic analysis of linear programming decoding,'' \emph{IEEE Trans. Inf. Theory}, vol. 54, no. 8, pp. 3565-3578, Aug. 2008.

\bibitem{analysis-3}
S. Arora, C. Daskalakis, and D. Steurer, ``Message-passing algorithms and improved LP decoding,'' \emph{IEEE Trans. Inf. Theory}, vol. 58, no. 12, pp. 7260-7271, Dec. 2012.

\bibitem{interior-point}
T. Wadayama, ``Interior Point Decoding for Linear Vector Channels Based on Convex Optimization,'' \emph{IEEE Trans. Inf. Theory}, vol. 56, no. 10, pp. 4905-4921, Oct. 2010.

\bibitem{revised-simplex}
H. Liu, W. Qu, B. Liu, and J. Chen, ``On the decomposition method for linear programming decoding of LDPC codes,'' \emph{IEEE Trans. Commu.}, vol.~58, no,~12, pp. 3448-3458, Dec. 2010.

\bibitem{Barman-ADMM}
S. Barman, X. Liu, S. C. Draper, and B. Recht, ``Decomposition method for large scale LP decoding,'' \emph{IEEE Trans. Inf. Theory}, vol.~59, no.~12, pp.~7870-7886, Dec.~2013.

\bibitem{efficient-projection1}
X. Zhang and P. H. Siegel, ``Efficient iterative LP decoding of LDPC Codes with alternating direction method of multipliers,'' in \emph{Proc. IEEE Int. Symp. Inf. Theory}, Istanbul, Turkey, pp.~1501-1505, July~2013.

\bibitem{efficient-projection2}
G. Zhang and R. Heusdens, and W. B. Kleijn, ``Large scale LP decoding with low complexity,'' \emph{IEEE Commun. Lett}., vol.~17, no.~11, pp.~2152-2155, Nov.~2013.

\bibitem{look-up-SPL}
X. Jiao, Y. He, and J. Mu, ``Memory-reduced look-up tables for efficient ADMM decoding of LDPC codes,'' \emph{IEEE Signal Process. Lett.}, vol. 25, no. 1, pp. 110-114, Jan. 2018.

\bibitem{efficient-projection3}
H. Wei and A. H. Banihashemi, ``An iterative check polytope projection algorithm for ADMM-based LP decoding of LDPC codes,'' \emph{IEEE Commun. Lett.}, vol. 22, no. 1, pp. 29-32, Jan. 2018.

\bibitem{jiao-zhang}
H. Wei, X. Jiao, and J. Mu, ``Reduced-complexity linear programming decoding based on ADMM for LDPC codes,'' \emph{IEEE Commun. Lett}., vol.~19, no.~6, pp.~909-912, June 2015.

\bibitem{my-wcl}
J. Bai, Y. Wang and F. C. M. Lau, ``Minimum-Polytope-Based linear programming decoder for LDPC codes via ADMM approach,'' \emph{IEEE Wireless Commun. Lett}., vol. 8, no. 4, pp. 1032-1035, Aug. 2019.

\bibitem{adaptive-LP}
M. H. Taagavi and P. H. Siegel, ``Adaptive methods for linear programming decoding,'' \emph{IEEE Trans. Inf. Theory}, vol.~54, no.~12, pp.~5396-5410, Dec.~2008.

\bibitem{cut-plane-algorithm}
M. Miwa, T. Wadayama, and I. Takumi, ``A cutting-planemethod based on redundant rows for improving fractional distance,'' \emph{IEEE J. Sel. Areas Commun}., vol.~27, no.~6, pp.~1012-1105, Aug.~2009.

\bibitem{separation-cut}
A. Tanatmis, S. Ruzika, H. Hamacher, M. Punekar, F. Kienle, and N. Wehn, ``A separation algorithm for improved LP decoding of linear block codes,'' \emph{IEEE Trans. Inf. Theory}, vol.~56, no.~7, pp.~3277-3289, Jul.~2010.

\bibitem{Adaptive-cut}
X. Zhang and P. Siegel, ``Adaptive cut generation algorithm for improved linear programming decoding of binary linear codes,'' \emph{IEEE Trans. Inf. Theory}, vol.~58, no.~10, pp.~6581-6594, Oct.~2012.

\bibitem{penalty-decoder}
X. Liu and S. C. Draper, ``The ADMM penalized decoder for LDPC codes,'' \emph{IEEE Trans. Inf. Theory}, vol.~62, no.~6, pp.~2966-2984, Jun.~2016.

\bibitem{bai-admm-qp-binary}
J. Bai, Y. Wang and Q. Shi, ``Efficient QP-ADMM decoder for binary LDPC codes and its performance analysis,'' \emph{IEEE Trans. Signal Process.}(Early Access). DOI: 10.1109/TSP.2020.2964223.

\bibitem{Flanagan}
M. Flanagan, V. Skachek, E. Byrne, and M. Greferath, ``Linear programming decoding of nonbinary linear codes,'' \emph{IEEE Trans. Inf. Theory}, vol. 55, no. 9, pp. 4134-4154, Sept. 2009.

\bibitem{ILP-nonbinary}
D. Goldin and D. Burshtein, ``Iterative linear programming decoding of non-binary linear codes with linear complexity,'' \emph{IEEE. Trans. Inf. Theory}, vol. 59, no. 1, pp. 282-300, Jan. 2013.

\bibitem{LCLP-nonbinary}
M. Punekar, P. O. Vontobel, and M. F. Flanagan, ``Low-complexity LP decoding of nonbinary linear codes,'' \emph{IEEE Trans. Commun.}, vol. 61, no. 8, pp. 3073-3085, Aug. 2013.

\bibitem{ILP-binary}
D. Burshtein, ``Iterative approximate linear programming decoding of LDPC codes with linear complexity,'' \emph{IEEE Trans. Inf. Theory}, vol. 55, no. 11, pp. 4835-4859, Nov. 2009.

\bibitem{LCLP-binary}
P. O. Vontobel and R. Koetter, ``Towards low-complexity linear-programming decoding,'' in \emph{Proc. Int. Symp. Turbo Codes Related Topics}, Munich, Germany, Apr. 2006.

\bibitem{Trellis-LP-nonbinary}
M. Punekar and M. F. Flanagan, ``Trellis-based check node processing for low-complexity nonbinary LP decoding,'' in \emph{Proc. IEEE Int. Symp. Inf. Theory}, Saint Petersburg, Russia, pp. 1653-1657, Aug. 2011.

\bibitem{Trellis-LP-binary}
P. O. Vontobel and R. Koetter, ``On low-complexity linear-programming decoding of LDPC codes,'' \emph{Eur. Trans. Telecommun.}, vol. 18, no.5, pp. 509-517, Aug. 2007.

\bibitem{Fast-LP-nonbinary}
J. Honda and H. Yamamoto, ``Fast linear-programming decoding of LDPC codes over GF($2^m$),'' in \emph{Proc. Int. Symp. Inf. Theory Appl. (ISITA)}, Honolulu, HI, USA, pp. 754-758, Oct. 2012.

\bibitem{Liu-nonbinary-journal}
X. Liu and S. C. Draper, ``ADMM LP Decoding of Non-Binary LDPC Codes in $ \mathbb {F}_{2^{m}}$,'' \emph{IEEE. Trans. Inf. Theory}, vol. 62, no. 6, pp. 2985-3010, June 2016.

\bibitem{Liu-nonbinary-ISIT}
X. Liu and S. C. Draper, ``ADMM decoding of non-binary LDPC codes in $\mathbb{F}_{2^m}$,'' in \emph{2014 IEEE International Symposium on Information Theory}, Honolulu, HI, pp. 2449-2453, June 2014.

\bibitem{Mackay-code}
D. J. C. MacKay, Encyclopedia of Sparse Graph Codes [Online]. Available: http://www.inference.phy.cam.ac.uk/mackay/codes/data.html

\bibitem{Tanner-code}
R. M. Tanner, D. Sridhara, and T. Fuja, ``A class of group-structured LDPC codes,'' in \emph{Proc. 2001 Int. Symp. Commun. Theory Appl.}, pp. 365-370, July 2001.

\bibitem{Matrix-analysis}
R. A. Horn, R. A. Horn, and C. R. Johnson, \emph{Matrix analysis}. Cambridge university press, 1990.

\bibitem{proximal-admm}
J. Zhang and Z. Q. Luo, ``A proximal alternating derection method of multiplier for linearly constrainted nonconvex minimization,'' arXiv: 1812.10229v2, Jan.~2019.

\bibitem{global-error-bound}
J. S. Pang, ``A posteriori error bounds for the linearly-constrained variational inequality problem''. \emph{Math. Oper. Res.}, vol.12, no. 3, pp. 474-484, Aug. 1987.

\bibitem{convex-analysis}
R. T. Rockafellar, \emph{Convex analysis}. Princeton Unversity Press, 2015.

\bibitem{nonlinear-programm}
D. P. Bertsekas, \emph{Nonlinear Programming}, Athena Scientific, Belmont, MA, 1999.



\end{thebibliography}
\end{document}